\documentclass[11pt]{article}

\usepackage[utf8]{inputenc}
\usepackage[T1]{fontenc}
\usepackage{euscript}
\usepackage{amsmath,amsthm}
\usepackage{amssymb}
\usepackage{mathtools}
\usepackage{bm}
\usepackage{tikz}
\usetikzlibrary{decorations.pathreplacing}
\usepackage{url}
\usepackage[a4paper]{geometry}

\numberwithin{equation}{section}

\newcommand{\mB}{\mathcal{B}}

\newcommand{\mA}{\mathcal{A}}

\newcommand{\mX}{\mathcal{X}}
\newcommand{\mG}{\mathcal{G}}

\newcommand{\bZ}{\mathbb{Z}}
\newcommand{\bN}{\mathbb{N}}

\newcommand{\bF}{\mathbb{F}}

\newcommand{\Z}{\EuScript Z}

\newcommand{\N}{\EuScript N}
\newcommand{\C}{\EuScript C}
\newcommand{\X}{\EuScript X}
\newcommand{\cQ}{\mathcal Q}
\newcommand{\bfX}{\mathbf X}
\newcommand{\bfY}{\mathbf Y}
\newcommand{\ff}{\mathbf f}
\newcommand{\f}{\mathbb F}
\newcommand{\x}{\mathbf x}
\newcommand{\y}{\mathbf y}
\newcommand{\z}{\mathbf z}
\newcommand{\s}{\mathbf s}

\renewcommand{\u}{\mathbf u}
\renewcommand{\v}{\mathbf v}
\renewcommand{\c}{\mathbf c}
\newcommand{\error}{\mathbf e}

\newcommand{\restVA}{\raisebox{-0.5mm}{|}{\raisebox{-1mm}{\scriptsize V\!\! A}}}
\newcommand{\restEB}{\raisebox{-0.5mm}{|}{\raisebox{-1mm}{\scriptsize EB}}}
\newcommand{\restTa}{\raisebox{-0.5mm}{|}{\raisebox{-1mm}{\scriptsize Ta}}}

\renewcommand{\geq}{\geqslant}
\renewcommand{\leq}{\leqslant}
\newcommand{\rank}{\operatorname{rank}}
\newcommand{\Id}{\operatorname{Id}}

\def\Im#1{{\rm Im}\,#1}

\newtheorem{theorem}{Theorem}[section]
\newtheorem{proposition}[theorem]{Proposition}
\newtheorem{lemma}[theorem]{Lemma}
\newtheorem{corollary}[theorem]{Corollary}

\theoremstyle{Definition}
\newtheorem{definition}[theorem]{Definition}

\theoremstyle{plain}

\parindent=0pt
\parskip=2mm

\setcounter{page}{0}

\title{Decodable quantum LDPC codes beyond the $\sqrt{n}$ distance barrier using high
dimensional expanders}

\date{April 16, 2020}

\author{Shai Evra\thanks{Institute for Advanced Studies, Princeton, USA. {\tt shai.evra@gmail.com}} \and
Tali Kaufman\thanks{Department of Computer Science, Bar-Ilan
University, Israel. {\tt kaufmant@mit.edu} Research supported by ERC.} \and Gilles Z\'emor\thanks{Institut de Math\'ematiques
de Bordeaux, UMR 5251, France. {\tt zemor@math.u-bordeaux.fr}}
}

\begin{document}

\maketitle

\begin{abstract}
Constructing quantum LDPC codes with a minimum distance that grows faster than 
a square root of the length has been a major challenge of the field. With this challenge in mind, 
we investigate constructions that come from high-dimensional expanders, in
particular Ramanujan complexes. These naturally give rise to very unbalanced
quantum error correcting codes that have a large $X$-distance but a much smaller
$Z$-distance. However, together with a classical expander LDPC code and a
tensoring method that generalises a construction of Hastings and also the
Tillich-Zemor construction of quantum codes, we obtain quantum LDPC codes  whose
minimum distance exceeds the square root of the code length and whose dimension
comes close to a square root of the code length. When the ingredient is a
3-dimensional Ramanujan complex, we show that its 2-systole behaves like a
square of the log of the complex size, which results in an overall quantum code
of minimum distance $n^{1/2}\log n$, and sets a new record for quantum LDPC codes.
When we use a 2-dimensional Ramanujan complex, or the 2-skeleton of a
3-dimensional Ramanujan complex, we obtain a quantum LDPC code of minimum
distance $n^{1/2}\log^{1/2}n$. We then exploit the expansion properties of the
complex to devise the first polynomial time algorithm that decodes above the
square root barrier for quantum LDPC codes.
\end{abstract}

\clearpage
\section{Introduction}
A quantum CSS code \cite{CS96,Ste96} of length $n$ is defined by two binary matrices $H_X$ and $H_Z$,
each with $n$ columns, and such their row-spaces $W_X$ and $W_Z$ are orthogonal.
The matrices $H_X$ and $H_Z$ can be thought of as the parity-check matrices of
classical codes, $C_X=W_X^\perp$ and $C_Z=W_Z^\perp$ respectively.
The dimension of the quantum code is given by $n-\dim W_X - \dim W_Z$, equivalently
it is the dimension of either of the quotient spaces $C_X/W_Z$ or $C_Z/W_X$.
The Hamming distance $d_X$ (respectively $d_Z$) is defined as the smallest weight of a vector of
$C_X$ not in $W_Z$ (respectively $C_Z$ not in $W_X$). The minimum distance $d$
of the quantum code is defined as $d=\min(d_X,d_Z)$.
A quantum CSS code is said to be Low Density Parity Check (LDPC) if both
matrices $H_X$ and $H_Z$ have row and column weights bounded from above by a
constant. 

Quantum LDPC error correcting codes are the subject of a lot of ongoing research.
One reason is that quantum computers will need some form of quantum error
correction, and it is generally assumed that the relevant error correcting codes
will be of LDPC type because the associated quantum states can then be
constructed through local interaction between qubits. Other motivations come
from
quantum complexity theory: for example, the ``no low-energy trivial state'' 
conjecture \cite{HasNLTS}, generally thought of as a milestone towards a quantum
PCP theorem, involves quantum LDPC codes.

Constructing quantum LDPC codes with a minimum distance that grows with~$n$ has
been something of a challenge: one major difference with classical LDPC codes
is that choosing a sparse parity-check matrix at random gives with very high probability an
asymptotically good classical code, i.e. with dimension and minimum distance that scale as
linear functions of the blocklength $n$. For the very same reason, there are no
known random constructions of quantum LDPC codes, because choosing a matrix
$H_X$ at random will forbid the existence of a sparse matrix $H_Z$
in the dual space of the rowspace $W_X$ of
$H_X$. All known constructions of quantum LDPC codes are in contrast highly
structured. It is a wide open problem as to whether there exist families of
asymptotically good quantum LDPC codes. More specifically, known quantum LDPC
codes do not surpass a $\sqrt{n}$ barrier for the quantum
minimum distance. Families of quantum LDPC codes include the Kitaev code
\cite{Ki}, the
earliest and most studied LDPC construction, one version of which has parameters
$[[n,2,\sqrt{n}]]$, generalisations to surface codes \cite{BT,Z}, where qubits are
associated to the edges of a graph that tiles a surface: when the rate of these
codes is constant the minimum distance grows at best like $\log n$ \cite{De},
hypergraph product codes \cite{TZ} that have a constant rate and minimum
distance scaling like $\sqrt{n}$, the cubic codes of \cite{CDZ}, codes from
4-dimensional hyperbolic manifolds \cite{GL,LL} that have constant rate and minimum
distance $n^\alpha$ with $0.1\leq\alpha \leq 0.3$, iterated tensor power
constructions~\cite{AC}. There has been just one
construction, due to Freedman, Meyer and Luo \cite{FML}, that managed to break
through the square root barrier for the minimum distance, yielding a quantum
code of dimension $2$ and distance that scales like $n^{1/2}\log^{1/4}n$\footnote{The paper \cite{FML} advertises
$n^{1/2} \log^{1/2} n$ but this is a minor miscomputation.}.
A construction of Hastings \cite{Has16} has been conjectured to yield codes with minimum
distance close to linear in $n$, but does not provably break through the
$\sqrt{n}$ barrier. A construction of Bravyi and Hastings \cite{BH} does yield
asymptotically good quantum CSS codes, but at the expense of relaxing the LDPC
condition, namely the matrices $H_X$ and $H_Z$ have rows of Hamming weight $\sqrt{n}$.

It is arguably one of the most intriguing problems of the theory of quantum LDPC
codes, as to whether there exist codes whose minimum distance significantly
exceed the $\sqrt{n}$ barrier. In the present work we contribute to this
question by exhibiting codes that go beyond the Freedman et al. lower bound, and
set a new record for the minimum distance that scales as $n^{1/2}\log n$. The
dimension of these codes comes close to $\sqrt{n}$.
The way this is achieved is by calling upon some remarkable properties of {\em
Ramanujan complexes}. Ramanujan complexes are simplicial complexes that
generalise Ramanujan graphs and have higher-dimensional expansion properties. 
The $2$-dimensional Ramanujan LSV complexes of \cite{LSV2} can be thought of as a
graph every edge of which belongs to a fixed number of triangles. By associating
qubits to edges and using for $H_X$ and $H_Z$ the vertex-edge incidence matrix
and the triangle-edge incidence matrix, one defines a quantum LDPC code such
that $d_X=\log n$ and $d_Z=\Omega(n)$. Strictly speaking, this only yields
a minimum distance equal to $\log n$, however this code has the remarkable
property that $d_Xd_Z=\Omega(n\log n) \gg n$. A method of Hastings
\cite{Has17} allows one to make a new quantum code out of $d_Z/d_X$ copies of
the original one, yielding a code of length $nd_Z/d_X$, the same dimension as
the original dimension (in this case a constant), and minimum distance equal to $d_Z$. 
This already yields a code of length $n$ and of minimum distance
$\Omega(\sqrt{n\log n})$.
In the present paper we further investigate how LSV complexes can yield
good quantum error-correcting codes. We improve the dimension of the resulting
quantum LDPC code by replacing the Hastings construction with a more general
tensoring operation of complexes which will boost the code dimension to
something close to $\sqrt{n}$. 
This construction can be seen as generalisation of the construction of \cite{TZ}
where a component bipartite graph is replaced by a $2$-dimensional chain complex.
We will prove that starting from a
$3$-dimensional LSV complex, the tensoring construction yields the record
minimum distance $\Omega(n^{1/2}\log n)$. This involves obtaining a new systolic lower bound of the
form $\log^2 n$ for these complexes. We will also prove a systolic lower bound
of the form $\log^{k-1}n$ for $k$-dimensional LSV complexes, potentially
yielding quantum LDPC codes with minimum distance $\Omega(\sqrt{n\log^{k-1} n})$
for abitrary $k$, but the dimension of these codes is for now only conjecturally
non-zero.

Our main focus will then be to study in detail the decoding problem for codes that
come from $2$-dimensional LSV complexes and achieve minimum distance
$\Omega(\sqrt{n\log n})$. This involves using an auxiliary classical expander
code to reduce the decoding problem
to that of the unbalanced quantum code associated to the component simplicial
LSV complex. We then use the coboundary expansion properties of the LSV complex
to solve the remaining decoding problem. We also give an alternative decoding
procedure when the $2$-dimensional LSV complex is replaced by the $2$-skeleton
of a $3$-dimensional complex.

\section{Overview}
\subsection{CSS codes from simplicial complexes and from LSV complexes}
A quantum CSS code is defined by two binary matrices $H_X$ and $H_Z$ such that
$H_Z^TH_X=0$. If we call $X_0,X_1,X_2$ the sets of rows of $H_X$,
columns of $H_X$ (or of $H_Z$), rows of $H_Z$, then $H_Z^T$ and $H_X$ are the
matrices of two linear maps $\partial_2$ and $\partial_1$ 
\[
\f_2^{X_2}\xrightarrow{\partial_{2}} \f_2^{X_1}\xrightarrow{\partial_{1}}
\f_2^{X_0}
\]
such that $\partial_1\partial_2=0$. More generally,
a {\em chain complex} (of binary vector spaces) $\bfX=(X_0,X_1,\ldots , X_d)$ of
dimension $d$, describes a collection of vector spaces of the form $\f_2^{X_p}$
together with linear maps $\partial_p: \f_2^{X_p}\to\f_2^{X_{p-1}}$,
$p=1\ldots d$, such that $\partial_{p-1}\partial_p=0$ for $p=2\ldots d$. The
maps $\partial_p$ are called differential or boundary operators. We can
therefore extract a CSS code from any two consecutive differential
operators of a chain complex. This of course does not tell us very much about
which chain complexes are likely to give us interesting quantum codes,
but it is natural to focus on {\em simplicial} complexes.
A complex is simplicial when elements of $X_p$ describe
$(p+1)$-subsets $S$ of $X_0$ ($p$-simplices)
and the map $\partial_p$ takes the vector supported by $S$ to the vector
supported by the union of all $p$-subsets of $S$. The sets $X_0$ and $X_1$
describe therefore respectively the vertex and edge set of a graph, the set
$X_2$ describes a set of triangles in the graph, and so on.
When extracting the subcomplex $X_{p-1},X_p,X_{p+1}$ of a simplicial complex,
the rows of the matrix $H_X$ describing $\partial_p^T$ have weight $p+1$ and
$H_X$ is therefore LDPC for fixed $p$. The associated quantum code is therefore
LDPC when the complex is of {\em bounded degree}, meaning that every $p$-simplex is
incident to at most a bounded number of $(p+1)$-simplices.

The simplicial complexes that we shall use come from the recent theory of high dimensional expanders and Ramanujan complexes. 
Ramanujan complexes generalise Ramanujan graphs in sophisticated ways and
we will not define in all generality what 
they actually are, referring the interested reader to the excellent surveys  \cite{L1} and \cite{L2}.
We will however mention some of their remarkable properties, 
which are most relevant to us in the present work.
First of all, they can be explicitly constructed as clique complexes of Cayley or Schreier graphs associated to the finite groups $PGL_{d+1}(\mathbb{F}_{q^e})$ ($d$ the dimension of the complex, $q$ a prime power), as was done in \cite{LSV2}, and in fact we shall focus only on these constructions, henceforth called LSV complexes.
Secondly, their local structure displays excellent expansion properties,
notably if $\bfX=(V,E,T)$ is a $2$-dimensional LSV complex, and $L(v)$ its link
around the vertex $v\in V$, which is a graph whose vertex set is made up of the
neighbours of $v$ and such that vertices $u,w$ are connected in $L(v)$ if
$(u,v,w) \in T$, then for any $v$ the link $L(v)$ is isomorphic to the points versus lines incidence graph of a projective plane of order $q$. 
Third, as was shown in \cite{KKL}, for some of these $2$-dimensional LSV complexes their homology space $H_1=\ker\partial_1/\Im\partial_2$, whose dimension is exactly equal to the associated quantum code dimension, is non-zero, and similarly for $3$-dimension and second homology.
Fourth, following \cite{KKL} and \cite{EK}, these LSV complexes have cosystoles
which grow linearly in the size of the complex.
Fifth, we can construct LSV complexes (with non-trivial homology) whose
injectivity radius grows logarithmically in the size of the complex, and note
that the injectivity radius bounds from below the $1$-systoles.
From the above we can record the following result which follows essentially from the work of \cite{KKL}.

\begin{theorem}
\label{thm-dim-1} 
There exist a family of bounded degree $2$-dimensional LSV complexes $\bfX=(V,E
,T)$, such that the quantum code associated to it is non-zero and satisfies
\[
n = |E|,\quad k=\dim H^1(\bfX)>0, \quad d_X = S_1(\bfX)=\Omega(\log n),\quad d_Z= S^1(\bfX)=\Omega(n).
\]
 Where $S_1$ and $S^1$ are the $1$-systole and the $1$-cosystole
of the complex and are exactly equal to the minimum distances $d_X$ and $d_Z$ of
the associated quantum code.
\end{theorem}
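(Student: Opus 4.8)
The plan is to assemble Theorem~\ref{thm-dim-1} from the five properties of LSV complexes listed above, each of which is attributed to prior work. Since the statement is essentially a packaging of known facts, the bulk of the argument is bookkeeping; the one nontrivial point that deserves care is that all five properties can be made to hold \emph{simultaneously} for a single family of complexes.

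First I would recall the dictionary between the simplicial chain complex $\f_2^T \xrightarrow{\partial_2} \f_2^E \xrightarrow{\partial_1} \f_2^V$ and the CSS code: taking $H_X = \partial_1$ and $H_Z = \partial_2^T$, the identity $\partial_1\partial_2 = 0$ gives $H_Z^T H_X = 0$, so qubits are indexed by $E$ and $n = |E|$. Here $C_X = \ker\partial_1$ (cycles) and $W_Z = \Im\partial_2$ (boundaries), so $C_X/W_Z = H_1(\bfX)$ and the quantum dimension $k = \dim H_1(\bfX) = \dim H^1(\bfX)$ by universal coefficients over $\f_2$. The weight-$d_X$ condition then says exactly that $d_X$ is the least weight of a $1$-cycle that is not a $1$-boundary, i.e.\ the $1$-systole $S_1(\bfX)$; dually, $C_Z = \ker\partial_2^T = \ker\delta_1$ (cocycles) and $W_X = \Im\partial_1^T = \Im\delta_0$ (coboundaries), so $d_Z$ is the least weight of a cocycle that is not a coboundary, i.e.\ the $1$-cosystole $S^1(\bfX)$. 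Bounded degree of the complex makes $\partial_1,\partial_2$ sparse, hence the code is LDPC.

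Next I would invoke the four remaining ingredients as black boxes applied to the \emph{same} family. By \cite{LSV2} there is an explicit infinite family of bounded-degree $2$-dimensional LSV complexes, realized as clique complexes of Schreier graphs of $PGL_3(\f_{q^e})$, with $|E| \to \infty$. By \cite{KKL}, one can select a subfamily (by suitable congruence subgroup / choice of parameters) whose first homology $H_1$ is nonzero, giving $k > 0$. By \cite{KKL} and \cite{EK}, these complexes have linearly growing $1$-cosystole, $S^1(\bfX) = \Omega(|E|) = \Omega(n)$, hence $d_Z = \Omega(n)$. Finally, using the injectivity-radius argument (property five), one arranges that the injectivity radius is $\Omega(\log n)$: any nontrivial $1$-cycle must traverse a non-contractible loop and therefore has length at least of order the girth/injectivity radius, while it also cannot be a boundary, so $S_1(\bfX) = \Omega(\log n)$; since the complex has bounded degree one also has the matching upper bound $S_1(\bfX) = O(\log n)$ from a standard volume count, giving $d_X = \Theta(\log n)$.

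The main obstacle is the \emph{compatibility} of these selections: nonzero homology, linear cosystole, and logarithmic injectivity radius are each obtained in the references by passing to particular arithmetic subfamilies, and one must check that the intersection of these conditions is still an infinite family of bounded-degree complexes. Concretely I would fix the ambient LSV construction from \cite{LSV2}, then note that \cite{KKL} already exhibits the nonvanishing of $H_1$ together with linear cosystolic expansion within this construction, and that the injectivity-radius lower bound is obtained by a further choice of deep enough congruence subgroup that does not disturb the bounded-degree or cosystole estimates (these being local, respectively coboundary-expansion, properties that are inherited by covers). Once this is in place, collecting the four estimates yields precisely the displayed parameters $n=|E|$, $k = \dim H^1(\bfX) > 0$, $d_X = S_1(\bfX) = \Omega(\log n)$, $d_Z = S^1(\bfX) = \Omega(n)$.
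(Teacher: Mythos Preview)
Your proposal is correct and follows the same route as the paper: Theorem~\ref{thm-dim-1} is restated and proved in Section~\ref{sec-Ram} as Theorem~\ref{thm-Ram-main-1} by assembling the LSV construction (Theorem~\ref{thm-Ram-LSV}), nonvanishing of $H^1$ (Theorem~\ref{thm-Ram-KKL}), the linear cosystolic bound (Theorem~\ref{thm-Ram-EK}), and the logarithmic systolic bound via injectivity radius (Theorem~\ref{thm-Ram-sys-1}), exactly the five ingredients you list.

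The one place where your account is slightly off is the compatibility paragraph. You phrase it as obtaining the injectivity-radius bound by passing to a ``deep enough congruence subgroup'' and then checking that bounded degree and cosystole survive. In the paper the dependency runs the other way. Injectivity radius is automatic for \emph{principal} congruence subgroups (Proposition~\ref{pro-Ram-LM}) and is monotone under covers, so it is not the fragile property; the fragile property is $H^1\neq 0$, which requires a specific \emph{non}-principal subgroup $\Gamma$ (Propositions~\ref{pro-Ram-L} and~\ref{pro-Ram-KKL}), and one must check that this $\Gamma$ still enjoys a logarithmic injectivity radius. The paper handles this by constructing $\Gamma$ sandwiched between two principal congruence levels with index at most squared (Theorem~\ref{thm-Ram-KKL}), so that Lemma~\ref{lem-Ram-index} transfers the injectivity-radius bound to $\bfX_\Gamma$ at the cost of a factor~$2$ in the constant. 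Your observation that the cosystolic bound is robust is correct: Theorem~\ref{thm-Ram-EK} holds uniformly for every finite-index $\Gamma\leq\Gamma_0$, so no extra work is needed there.
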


The $1$-systolic distance (or $X$-distance) of the quantum code associated
to a $2$-dimensional simplicial complex is usually constrained by a $\log n$
upper bound. This is reminiscent of the girth of a regular graph (of degree $\geq 3$)
being bounded from above by $\log|V|$. However, when switching to the
$(X_{p-1},X_p,X_{p-1})$ subcomplex of a simplicial complex of dimension larger
than $2$, we may expect to achieve larger distances $d_X$.
$3$-dimensional LSV complexes $\bfX=(V,E,T,P)$, that on top of
triangles have 4-cliques (tetrahedra or Pyramids) 
in the underlying graph were also studied in \cite{KKL}, where it was shown that
their homology space $H_2=\ker\partial_2/\Im\partial_3$ is non-zero,
and that their $2$-cosystole behaves as $S^2(\bfX)=\Omega(|T|)$.
This translates into the quantum code associated to the subcomplex
$(E,T,Q)$ having non-zero dimension and $Z$-distance $d_Z=\Omega(n)$ where
$n=|T|$ is the code length. 
The question of the $2$-systole $S_2(\bfX)$ was left
unexplored. We prove a lower bound on the $2$-systole, and also on
higher-dimensional systoles: together 
 with results from \cite{KKL} and \cite{EK}, this gives:

\begin{theorem}
\label{thm:S^2}
\hspace{1cm}
\vspace{-3mm}
\begin{itemize}
\item[(i)] For $d$-dimensional LSV complexes $\bfX=(X_0,\ldots ,X_d)$, we have
\[
S_p(\bfX)=\Omega(\log^p|X_p|), \quad S^p(\bfX)=\Omega(|X_p|) \qquad \forall p=1,\ldots,d-1.
\]
\item[(ii)] If $d=3$, then there are LSV complexes for which $H^2(\bfX) \ne 0$, hence, its associated quantum code has parameters
\[
n=|X_3|,\quad k=\dim H^2(\bfX)>0,\quad d_X=\Omega(\log^2n), \quad d_Z=\Omega(n).
\]
\end{itemize}
\end{theorem}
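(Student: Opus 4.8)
The plan is to isolate the one genuinely new ingredient. For a $d$-dimensional simplicial complex $\bfX$ and $1\leq p\leq d-1$, the CSS code attached to the chain segment $\f_2^{X_{p+1}}\xrightarrow{\partial_{p+1}}\f_2^{X_p}\xrightarrow{\partial_p}\f_2^{X_{p-1}}$ has $|X_p|$ qubits, $X$-distance $d_X=S_p(\bfX)$ (the minimal Hamming weight of a $p$-chain in $\ker\partial_p\setminus\operatorname{Im}\partial_{p+1}$, i.e.\ the $p$-systole), $Z$-distance $d_Z=S^p(\bfX)$ (the $p$-cosystole), and dimension $k=\dim H_p(\bfX)=\dim H^p(\bfX)$; taking $d=3$, $p=2$, together with the harmless identification $|X_2|\asymp|X_3|$, gives exactly the code of statement (ii). Hence the cosystole bound $S^p(\bfX)=\Omega(|X_p|)$ of (i), the non-vanishing $H^2(\bfX)\ne 0$ for a suitable $3$-dimensional LSV family in (ii), and the logarithmic injectivity radius are all already available from the discussion preceding the theorem (coboundary and cosystolic expansion of Ramanujan complexes, \cite{KKL,EK}, together with the LSV constructions of \cite{LSV2}), and the only thing left to prove is the systolic lower bound $S_p(\bfX)=\Omega(\log^p|X_p|)$.

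For that I would write the relevant LSV complex as $\bfX=\Gamma\backslash\mB$, where $\mB$ is the affine Bruhat--Tits building of type $\widetilde A_d$ and $\Gamma$ a congruence subgroup acting freely with injectivity radius $R=\Omega(\log|\bfX|)=\Omega(\log|X_p|)$. The role of the logarithmic injectivity radius is that every metric ball of $\bfX$ of radius $<R$ is isometric to a ball of $\mB$, hence contractible, and that $\mB$ carries uniform, scale-invariant isoperimetric (``cone'') inequalities coming from the CAT(0) geometry of Euclidean buildings. Let $z$ realise $S_p(\bfX)$; replacing it by one connected component of its support I may assume the $p$-simplices of $z$ form a connected graph under sharing a $(p-1)$-face. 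One easy observation comes first: $\operatorname{diam}(\operatorname{supp}(z))\geq R$, since otherwise $\operatorname{supp}(z)$ lies in a metric ball of radius $<R$, which is a contractible ball of $\mB$, and $z$ would bound there.

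The main step is a coarea/slicing induction on $p$ establishing $S_p(\bfX)\geq c_p R^p$. For $p=1$ this is the girth estimate: a non-bounding cycle of a graph of injectivity radius $R$ has at least $2R$ edges. For the inductive step I would fix $v_0\in\operatorname{supp}(z)$ and, for $0\leq i<R$, let $z_{\leq i}$ be the restriction of $z$ to the $p$-simplices contained in the ball $B(v_0,i)$ and set $c_i:=\partial z_{\leq i}$. Since $\partial z=0$, every $(p-1)$-face contained in $B(v_0,i-1)$ meets an even number of $p$-simplices of $z_{\leq i}$, so $c_i$ is a $(p-1)$-cycle supported on the sphere of radius $i$ about $v_0$; the supports of $c_i$ and $c_j$ are disjoint once $|i-j|\geq 2$, whence $\sum_i|c_i|=O(|z|)$. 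The crux is to show that, for $i$ ranging over an interval of length $\Omega(R)$, the slice $c_i$ is \emph{essential} in $B(v_0,i)$, meaning it is not the boundary of a $p$-chain supported at levels $<i$: otherwise one could surger $z$ along such a filling to a homologous, still non-bounding, cycle of no larger weight whose support avoids level $i$, and iterating over $i$ would eventually push the support of a non-bounding cycle into a ball of radius $<R$, contradicting the observation above. An essential \emph{relative} $(p-1)$-cycle in a radius-$i$ ball of $\mB$ must, by the same statement one dimension lower, have weight $\Omega(i^{p-1})$; combining this with $\sum_i|c_i|=O(|z|)$ and summing over the $\Omega(R)$ essential levels gives $|z|=\Omega(\sum_i i^{p-1})=\Omega(R^p)=\Omega(\log^p|X_p|)$, as required.

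The hard part will be the quantitative, relative isoperimetric input used at scale $i$: one must rule out that a cheap $(p-1)$-cycle sitting on the sphere of radius $i$ in $\mB$ is filled both cheaply and close to itself inside the radius-$i$ ball, so as to obtain the polynomial growth $|c_i|=\Omega(i^{p-1})$ rather than a mere constant. This is exactly where the CAT(0) geometry and scale-invariance of the cone inequalities of the affine building have to be exploited, and where the induction on dimension must be organised around the right invariant --- the minimal weight of an essential relative cycle in a building ball. A possibly cleaner packaging of the same mechanism is a ``cone-off from a BFS root'': process the vertices of $\operatorname{supp}(z)$ in increasing distance from $v_0$, at each vertex $v$ trading the star of $z$ at $v$ for a minimal filling of the induced cycle in the link $L(v)$ --- legitimate because vertex links of LSV complexes are spherical buildings, which in the degrees at issue admit fillings of uniformly bounded multiplicative cost --- and tracking the weight of the resulting $(p+1)$-chain; were $|z|$ of order $o(R^p)$, the process would terminate before reaching depth $R$ and exhibit $z$ as a boundary. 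Granting $S_p(\bfX)=\Omega(\log^p|X_p|)$, part (ii) is then immediate: specialise $d=3$, $p=2$ to an LSV family with $H^2(\bfX)\ne 0$ and read off $d_X=S_2(\bfX)=\Omega(\log^2 n)$, $d_Z=S^2(\bfX)=\Omega(n)$ and $k=\dim H^2(\bfX)>0$.
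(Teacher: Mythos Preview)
Your reduction of (ii) to (i) and your identification of the systolic bound $S_p(\bfX)=\Omega(\log^p|X_p|)$ as the only new ingredient are both correct, and your opening move --- work in a ball of radius $R=\Omega(\log|\bfX|)$ which lifts isometrically to the contractible building $\mB$, and observe that a minimal non-trivial $p$-cycle restricted to such a ball is a \emph{minimal} relative cycle there --- is exactly how the paper begins as well.

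From that point on, however, the paper takes a different and considerably shorter route that bypasses the inductive difficulty you yourself flag as ``the hard part''. Rather than slicing $z$ by concentric spheres and trying to control the relative $(p-1)$-cycles $c_i$ by an induction on $p$, the paper fixes an apartment $\mA\subset\mB$ through the chosen face $\sigma$ and applies the canonical simplicial retraction $\rho:\mB\to\mA$. This retraction is distance non-increasing and fixes $\sigma$, so it sends the minimal relative cycle $z_B$ to a minimal relative $p$-cycle $z_{B,\mA}$ inside a Euclidean ball of radius $R$ (since $\mA$ is a simplicial Euclidean $d$-space). One then invokes the monotonicity theorem for minimal surfaces in Euclidean space (as in \cite{GL}): the ratio $|z_{B,\mA}\cap B(\sigma,r)|/r^p$ is non-decreasing in $r$, and since it is bounded below by a positive constant at $r=1$ (because $\sigma\in z$), one gets $|z|\geq|z_{B,\mA}|=\Omega(R^p)$ in one stroke.

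Your coarea/slicing scheme is a reasonable instinct, but as written it is incomplete precisely where you say it is: the claim that an essential relative $(p-1)$-cycle on the sphere of radius $i$ has weight $\Omega(i^{p-1})$ is \emph{not} the induction hypothesis $S_{p-1}(\bfX)=\Omega(R^{p-1})$ --- the slices $c_i$ are boundaries in $\bfX$, not representatives of nontrivial homology classes --- so the induction does not close without reformulating the statement around relative cycles in building balls and proving that stronger version. The apartment-retraction trick is what replaces this missing quantitative input: it converts the problem into classical Euclidean minimal-surface monotonicity, where the $r^p$ growth is already a theorem.
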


To prove part $(i)$ of the Theorem, we invoke an injectivity radius argument,
together with arguments from building theory and
algebraic topology to claim that a non-trivial $p$-cycle must contain more
$p$-faces than in the intersection of an apartment and a ball of radius
$\log|X_p|$. An apartment is isomorphic to a tiling of $d$-dimensional Euclidean space
and Euclidean geometry arguments enable us to conclude.
 
Part $(i)$ of Theorem~\ref{thm:S^2} is quite general, and can in principle yield quantum LDPC codes with $d_X=\Omega(\log^j n)$ and $d_Z=\Omega(n)$ for $j>2$. 
However, the dimension of these quantum code candidates is only conjectured to be non-zero.

Next, we transform the quantum codes we have just discussed into
quantum LDPC codes with minimum distance larger than $\sqrt{n}$.

 
\subsection{Balancing distances $d_X$ and $d_Z$ of a quantum code}
We introduce the following construction of a quantum code.
it takes as input:
\begin{itemize}
\item A quantum code $\cQ=\cQ(\bfX)$ defined by two low-density parity-check matrices $H_X$
and $H_Z$. We can think of it as
coming from an abstract chain complex $\bfX=(X_0,X_1,X_2)$ where
$X_0$ and $X_2$ index the set of rows of $H_X$ and the set of rows of 
$H_Z$ respectively, and $X_1$ indexes both the set of columns of $H_X$ and the
set of of columns of $H_Z$. The matrices $H_X$ and $H_Z$ define incidence
relations between elements of $X_0$ and $X_1$ and between elements of $X_1$ and
$X_2$. We denote by $d_X(\cQ)$ and
$d_Z(\cQ)$ its $X$ and $Z$-distances.
\item A classical LDPC code $C=C(\bfY)$ 
 defined by a low-density parity-check matrix~$H$.
We can think of it as coming from a $1$-dimensional chain complex $\bfY=(A,B)$,
which just means that we index the columns of $H$ by a set $A$ and its rows by a
set $B$. The matrix $H$ defines an incidence relation between elements of $A$
and $B$. It is important that the matrix $H$ has no redundant rows, i.e.
$\rank(H)=|B|$. We denote by $d(C)$ its
minimum distance.
\end{itemize}
The construction outputs a new quantum code $\cQ(\X)$ by defining a
$2$-dimensional chain complex $\X=(\X_0,\X_1,\X_2)$, where
\begin{align*}
\X_0&= (X_0\times A)\cup (X_1\times B)\\
\X_1&= (X_1\times A)\cup (X_2\times B)\\
\X_2&= X_2\times A.
\end{align*}
To define incidence between elements of $\X_0$ and elements of $\X_1$,
we declare any $(x_1,a)\in X_1\times A$ to be incident to $(x_0,a)$ for all
$x_0\in X_0$ incident to $x_1$ in $\bfX$, and to be incident to
$(x_1,b)$ for all $b\in B$ incident to $a$ in $\bfY$. We also
declare any $(x_2,b)\in X_2\times B$ to be incident to $(x_1,b)$ for
all $x_1\in X_1$ incident to $x_2$ in $\bfX$.
To define incidence  between elements of $\X_2$ and elements of $\X_1$, we declare $(x_2,a)\in X_2\times A$ to be incident 
to $(x_1,a)$ for all $x_1\in X_1$ incident to $x_2$ in $\bfX$, and to be incident
to $(x_2,b)$ for all $b\in B$ incident to $a$ in $\bfY$. The factor graph
representation of the quantum is depicted on Figure~\ref{fig:product-overview}.
 
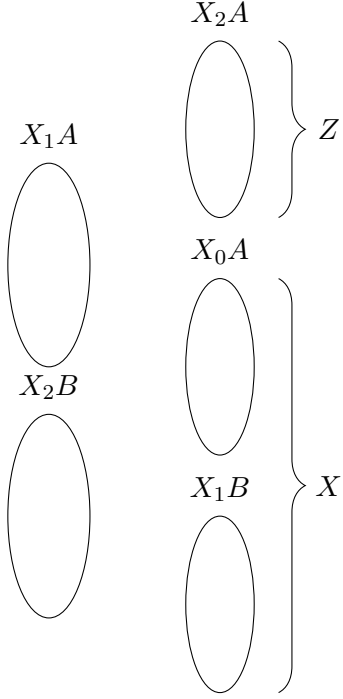
\begin{figure}[h]
\begin{center}
\begin{tikzpicture}[scale=0.9]
\draw (0,0) ellipse (0.6cm and 1.5cm);
\node (EA) at (0,1.9) {$X_1A$};
\draw (0,-3.7) ellipse (0.6cm and 1.5cm);
\node (TB) at (0,-1.8) {$X_2B$};
\draw (2.5,2) ellipse (0.5cm and 1.3cm);
\node (TA) at (2.5,3.7) {$X_2A$};
\draw [decorate,decoration={brace,amplitude=10pt,mirror,raise=4pt},yshift=0pt]
(3.2,0.7) -- (3.2,3.3) node [black,midway,xshift=0.8cm] {$Z$};
\draw (2.5,-1.5) ellipse (0.5cm and 1.3cm);
\node (VA) at (2.5,0.2) {$X_0A$};
\draw (2.5,-5) ellipse (0.5cm and 1.3cm);
\node (EB) at (2.5,-3.3) {$X_1B$};
\draw [decorate,decoration={brace,amplitude=10pt,mirror,raise=4pt},yshift=0pt]
(3.2,-6.3) -- (3.2,-0.2) node [black,midway,xshift=0.8cm] {$X$};
\end{tikzpicture}
\end{center}
\caption{The factor graph structure of the quantum code $\cQ(\X)$. The code length is
$N=|X_1||A|+|X_2||B|$.}
\label{fig:product-overview}
\end{figure}

From the definitions, we have that:
If $w_X^R,w_Z^R,w_Z^C$ are upper bounds respectively on the row weights of $H_X$,
the row weights of $H_Z$ and the column weights of $H_Z$, and if
$w^R,w^C$ are upper bounds respectively on the row weights and the column
weights of $H$, then $\cQ(\X)$ is LDPC with its $Z$-row weights $W_Z$ and $X$-row
weights $W_X$ being bounded
from above by 
\begin{align*}
W_Z&\leq w_Z^R +w^C\\
W_X&\leq \max(w_X^R,w^R+w_Z^C).
\end{align*}
Similar relations hold for $Z$ and $X$ column weights of the new code.

We prove:
\begin{theorem}
\label{thm:product-overview}
The resulting quantum LDPC code $\cQ(\X)$ has length, dimension, $X$-distance and
$Z$-distance equal to, respectively:
\begin{align*}
 N&=|X_1||A|+|X_2||B|\\
 K&=\dim\cQ\dim C\\
 D_X&=d_X(\cQ)d(C)\\
 D_Z&=d_Z(\cQ).
\end{align*}
\end{theorem}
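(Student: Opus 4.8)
The plan is to analyze the chain complex $\X$ directly and to recognise it as a kind of "mixed product" of the chain complex $\bfX$ (in degrees $0,1,2$) and the length-one chain complex $\bfY=(A\xrightarrow{H}B)$, where the product is arranged so that $\X_2 = X_2\otimes A$ sits only over $A$, while the bottom $\X_0$ picks up the extra piece $X_1\otimes B$ coming from the $B$-side. Concretely, I would first write down the two boundary maps $\partial_2^{\X}:\f_2^{\X_2}\to\f_2^{\X_1}$ and $\partial_1^{\X}:\f_2^{\X_1}\to\f_2^{\X_0}$ from the incidence rules given in the overview, check that $\partial_1^{\X}\partial_2^{\X}=0$ (this is where $\partial_1\partial_2=0$ in $\bfX$ and the bilinearity of the incidences in $\bfY$ both get used), so that $\cQ(\X)$ is a genuine CSS code; the length formula $N=|\X_1|=|X_1||A|+|X_2||B|$ is then immediate. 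For the LDPC bounds I would just read off the row/column weights from the block structure, as sketched right before the theorem statement.

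For the \emph{dimension}, the cleanest route is to identify $H^1(\X)=\ker\partial_1^{\X}/\Im\partial_2^{\X}$ with a tensor-type homology group. Writing $W_X=\Im H_X^{\,T}$, $C_X=\ker H_X$ etc., and noting that $\rank(H)=|B|$ forces $\bfY$ to be exact at $B$ (i.e. $H$ surjective), I expect a short computation — essentially a two-row Künneth / snake-lemma argument, or a direct filtration of $\X_1$ by the two summands $X_1\times A$ and $X_2\times B$ — to yield $K=\dim(C_X(\cQ)/W_Z(\cQ))\cdot\dim(C/\{0\})$-type bookkeeping, collapsing to $K=\dim\cQ\cdot\dim C$. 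The surjectivity of $H$ is what kills the spurious contribution from the $X_2\times B$ block and makes the count come out exactly rather than as an inequality.

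The real content is the two distance computations, and I expect $D_X=d_X(\cQ)\,d(C)$ to be the main obstacle. The $\leq$ direction is the easy one: given a minimum-weight $X$-logical $x$ of $\cQ$ (a codeword of $C_X(\cQ)$ outside $W_Z(\cQ)$) and a minimum-weight codeword $c$ of $C$, one forms $x\otimes c\in \f_2^{X_1\times A}\subseteq\f_2^{\X_1}$, checks it lies in $\ker\partial_1^{\X}$, and checks it is not in $\Im\partial_2^{\X}$ (here one uses that $c\notin\{0\}=W$-side of $\bfY$, equivalently that $c$ has nonzero "syndrome history", so that no boundary from $\X_2=X_2\times A$ can produce it); its weight is $|x|\,|c|=d_X(\cQ)d(C)$. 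For the $\geq$ direction I would take an arbitrary $z\in\ker\partial_1^{\X}$ with $z\notin\Im\partial_2^{\X}$, split it as $z=(z_A,z_B)$ over the two blocks $X_1\times A$ and $X_2\times B$, and argue that after adding a suitable boundary from $\X_2\times A$ one may assume $z_B=0$ — this is exactly where $\rank(H)=|B|$ (no redundant checks) is indispensable, since it lets us "fill in" any $B$-component by a boundary without disturbing the $A$-slices that matter. Once $z$ is supported on $X_1\times A$, the condition $\partial_1^{\X}z=0$ says that for each $a\in A$ the slice $z(\cdot,a)\in C_X(\cQ)$, and that the collection of slices is "glued along $H$" so that the map $a\mapsto[z(\cdot,a)]\in C_X(\cQ)/W_Z(\cQ)$ is a codeword of $C$; non-triviality of $z$ in $H^1(\X)$ forces this $C$-codeword to be nonzero and at least one slice to be a nontrivial $X$-logical of $\cQ$. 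A weight count over the $\geq d(C)$ nonzero slices, each of weight $\geq d_X(\cQ)$, finishes it. The $Z$-distance $D_Z=d_Z(\cQ)$ is analogous but one-sided and easier: a $Z$-logical of $\X$ lives in $\ker\partial_2^{\X\,T}$ and, because $X_2$ only meets the $A$-side, reduces (again using exactness of $\bfY$) to a single $Z$-logical of $\cQ$ placed on one $A$-slice, giving both inequalities with no factor of $d(C)$.

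One technical point to watch throughout: all the "tensor/filtration" manoeuvres must be carried out with explicit supports and the combinatorial incidence rules of the overview, since $\bfX$ is only an abstract chain complex (not assumed simplicial here) and $\bfY$ is a bare bipartite incidence structure; I would organise the write-up so that the homological identity $\partial_1^{\X}\partial_2^{\X}=0$, the dimension count, and the two distance bounds all flow from the single structural decomposition $\f_2^{\X_1}=\f_2^{X_1\times A}\oplus\f_2^{X_2\times B}$ together with surjectivity of $H$.
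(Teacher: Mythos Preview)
Your overall strategy coincides with the paper's: $\X$ is (the co-complex of) the tensor product $\bfX^*\otimes\bfY^*$, K\"unneth plus $\rank H=|B|$ gives the dimension, tensor representatives give the upper bounds on both distances, and a column-by-column analysis of the $X_1\times A$ array gives the lower bounds. Two points, however, need repair.

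For $D_X\geq d_X(\cQ)\,d(C)$, your step ``after adding a suitable boundary one may assume $z_B=0$'' is not a legitimate WLOG in a weight \emph{lower} bound: adding a boundary may increase the weight, so bounding $|z'|$ does not bound $|z|$. The paper does not reduce; instead it observes that for \emph{any} representative $z=(z_A,z_B)$ of a nontrivial class, the column-class map $a\mapsto[z_A(\cdot,a)]\in H_1(\bfX)$ is invariant under adding $\X$-boundaries (a boundary $\partial_2^{\X}(w)$ contributes only $\partial_2^{\bfX}(w_a)$ to column $a$), and that this invariant lies in $H_1(\bfX)\otimes C$ and is nonzero. Hence the \emph{original} $z_A$ already has at least $d(C)$ columns that are nontrivial $\bfX$-cycles, each of weight $\geq d_X(\cQ)$, giving $|z|\geq|z_A|\geq d_X(\cQ)\,d(C)$. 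Your reduction to $z_B=0$ is a perfectly good device for \emph{computing} this invariant, but the final weight count must be carried out on the arbitrary $z$, not on the reduced $z'$; make that explicit.

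For $D_Z$, ``reduces to a single $Z$-logical on one $A$-slice'' gives the upper bound but is too thin for the lower bound. The paper fixes a partition $A=A'\cup A''$ with the $B\times A'$ submatrix of $H$ invertible, takes as basis of $H^1(\X)$ the cocycles supported on a single column $a''\in A''$, and then argues that any coboundary coming from $\f_2^{X_1\times B}$ which removes a coordinate $(x_1,a'')$ must create a compensating coordinate $(x_1,a')$ with $a'\in A'$ (this is exactly the invertibility of the $A'$-block of $H$). Combined with the fact that coboundaries from $\f_2^{X_0\times A}$ only add $\bfX$-coboundaries column-wise, this keeps the total weight $\geq S^1(\bfX)$. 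You will need this compensation argument, or an equivalent one, to close the $\geq$ direction.
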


Theorem~\ref{thm:product-overview} generalises a construction of Hastings
\cite{Has17} that corresponds to the special case when the complex $\bfY=(A,B)$
is simplicial and when the underlying graph is a simple path. In coding theory
terms, it is the case when the classical code $C$ has dimension~$1$ and is the
repetition code. The construction \cite{TZ} of quantum LDPC codes is also a
special instance of it, corresponding to the case when $X_0=\emptyset$, meaning
that there is no parity-check matrix $H_X$ and that $d_X=1$: in other words
$\cQ(\bfX)$ is reduced to a classical LDPC code.

Theorem~\ref{thm:product-overview} tells us therefore that if we have a quantum
code $\cQ$ such that $d_Z\gg d_X$, we can apply to it the $\X$-construction
using
a classical code $C$ with minimum distance $d\approx d_Z/d_X$,
and obtain a new quantum code with $D_X\approx D_Z$.

Specifically, starting with the ``Ramanujan'' quantum codes of the previous
section, and using for the classical code $C$ an asymptotically good LDPC code,
i.e. with dimension and minimum distance that are linear in its blocklenth,
(such a code being known to exist, either through the random methods that go back to
Gallager, or through the expander code construction of Sipser and Spielman
\cite{SS}), we obtain:

\begin{corollary}\label{cor:record}
\hspace{1cm}
\vspace{-3mm}
\begin{itemize}
\item[(i)] 
$2$-dimensional Ramanujan complexes yield a family of quantum LDPC 
codes of length $N$, dimension $K$, and minimum distance $D$, with
\[
K=\Omega\left(\sqrt{\frac{N}{\log N}}\right),\qquad
D=\Omega(\sqrt{N\log N}).
\]
\item[(ii)]
$3$-dimensional Ramanujan complexes yield a family of quantum LDPC codes of length $N$, dimension $K$, and minimum distance $D$, with
\[
K=\Omega\left(\frac{\sqrt{N}}{\log N}\right),\qquad
D=\Omega(N^{1/2}\log N).
\]
\end{itemize}
\end{corollary}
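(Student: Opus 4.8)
The plan is to simply combine Theorem~\ref{thm:product-overview} with the systolic bounds already recorded in Theorems~\ref{thm-dim-1} and~\ref{thm:S^2}, and then optimise the length of the auxiliary classical code~$C$. First I would fix the ingredients. For part~$(i)$, take $\cQ=\cQ(\bfX)$ to be the quantum code of Theorem~\ref{thm-dim-1}, associated to a $2$-dimensional LSV complex $\bfX=(V,E,T)$ of bounded degree: it has block length $n=|E|$, dimension $k=\dim H^1(\bfX)>0$ (a constant, since LSV complexes of bounded degree have constant-size links and the first homology of the relevant arithmetic quotients is bounded), $X$-distance $d_X(\cQ)=\Omega(\log n)$ and $Z$-distance $d_Z(\cQ)=\Omega(n)$. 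For part~$(ii)$, take instead the quantum code of Theorem~\ref{thm:S^2}$(ii)$, coming from the subcomplex $(X_1,X_2,X_3)$ of a $3$-dimensional LSV complex: it has block length $n=|X_3|$, dimension $k>0$ (again a constant), $X$-distance $d_X(\cQ)=\Omega(\log^2 n)$ and $Z$-distance $d_Z(\cQ)=\Omega(n)$. In both cases the parity-check matrices are LDPC because the complex has bounded degree.

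Next I would choose the classical code~$C$. Fix an asymptotically good binary LDPC code family with rate $\rho>0$ and relative minimum distance $\delta>0$ — for instance a Sipser--Spielman expander code \cite{SS}, whose parity-check matrix~$H$ has bounded row and column weights and full row rank (or can be trimmed to full row rank without affecting $\rho,\delta$ up to constants). The key point from Theorem~\ref{thm:product-overview} is that the $\X$-construction multiplies the $X$-distance by $d(C)$ while leaving the $Z$-distance untouched, so to balance $D_X$ and $D_Z$ we want $d(C)\asymp d_Z(\cQ)/d_X(\cQ)$. Since $d(C)=\delta\,|A|$ and the blocklength of $C$ is $|A|$, this means choosing $|A|=\Theta\bigl(d_Z(\cQ)/d_X(\cQ)\bigr)$; then $|B|=(1-\rho)|A|$ and $\dim C=\rho|A|=\Theta(|A|)$.

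Now I would read off the four parameters from Theorem~\ref{thm:product-overview}. The length is $N=|X_1||A|+|X_2||B|=\Theta\bigl(n\cdot d_Z(\cQ)/d_X(\cQ)\bigr)$ in both cases, since $|X_1|,|X_2|$ are both $\Theta(n)$ for bounded-degree complexes (for part~$(i)$, $|V|,|T|=\Theta(|E|)$; for part~$(ii)$, $|X_2|,|X_3|$ are within constant factors). In part~$(i)$ this gives $N=\Theta\bigl(n\cdot n/\log n\bigr)=\Theta(n^2/\log n)$, so $n=\Theta(\sqrt{N\log N})$ (up to the mild substitution $\log n\asymp\log N$, which I would note explicitly). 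The dimension is $K=\dim\cQ\cdot\dim C=\Theta(1)\cdot\Theta(|A|)=\Theta(n/\log n)=\Theta\bigl(\sqrt{N/\log N}\bigr)$. The distances are $D_Z=d_Z(\cQ)=\Theta(n)=\Theta(\sqrt{N\log N})$ and $D_X=d_X(\cQ)d(C)=\Theta(\log n)\cdot\Theta(n/\log n)=\Theta(n)=\Theta(\sqrt{N\log N})$, so $D=\min(D_X,D_Z)=\Omega(\sqrt{N\log N})$. For part~$(ii)$ the computation is identical except $d_X(\cQ)=\Omega(\log^2 n)$: one takes $|A|=\Theta(n/\log^2 n)$, gets $N=\Theta(n^2/\log^2 n)$, hence $n=\Theta(\sqrt{N}\log N)$; then $K=\Theta(n/\log^2 n)=\Theta(\sqrt{N}/\log N)$ and $D_X=\Theta(\log^2 n)\cdot\Theta(n/\log^2 n)=\Theta(n)=\Theta(\sqrt{N}\log N)=D_Z$, giving $D=\Omega(N^{1/2}\log N)$.

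The only genuinely delicate points — and where I would spend the care in writing this out — are, first, checking that $\dim\cQ$ is bounded below by a positive constant uniformly along the family (this is exactly the content of $H^1(\bfX)\ne 0$, resp. $H^2(\bfX)\ne 0$, in the cited theorems, but one must make sure the construction of the code family keeps this dimension from degenerating), and second, verifying the LDPC property is preserved: by the row/column weight bounds $W_Z\leq w_Z^R+w^C$ and $W_X\leq\max(w_X^R,\,w^R+w_Z^C)$ recorded before Theorem~\ref{thm:product-overview}, all four weights stay bounded because every ingredient ($H_X,H_Z$ from a bounded-degree complex, $H$ from an expander code) has bounded row and column weight. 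Everything else is the bookkeeping of substituting $\log n\asymp\log N$ and collecting $\Theta$'s, which I would carry out but not belabour.
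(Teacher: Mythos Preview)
Your proposal is correct and follows exactly the approach the paper intends: the paper treats Corollary~\ref{cor:record} as an immediate consequence of plugging the Ramanujan quantum codes of Theorems~\ref{thm-dim-1} and~\ref{thm:S^2} together with an asymptotically good classical LDPC code into Theorem~\ref{thm:product-overview}, and you have carried out the parameter bookkeeping that the paper leaves implicit. One small remark: you do not need the claim that $\dim H^1(\bfX)$ (resp.\ $\dim H^2(\bfX)$) is a bounded constant along the family---all that is required for the lower bound on $K$ is $\dim\cQ\geq 1$, which is precisely what the nonvanishing statements in the cited theorems guarantee.
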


\subsection{The decoding problem}
We now address the decoding problem for the codes of Corollary~\ref{cor:record}.
The objective is to correct any pattern of errors up to a constant fraction of the
distance. We do not know how to do it for the codes of $(ii)$ in
Corollary~\ref{cor:record}, but we achieve it for the 
codes of $(i)$.

A CSS code of length $n$ comes with two syndrome maps, namely
\begin{alignat*}{6}
\sigma_X~:\ &\f_2^{X_1}&&\to\; &&\f_2^{X_0}\qquad\qquad \sigma_Z~:\;
&&\f_2^{X_1}&&\to\; &&\f_2^{X_2}\\
&\x && \mapsto && H_X\x^T
          && \x &&\mapsto &&H_Z\x^T
\end{alignat*}
where $(X_0,X_1,X_2)$ is the associated complex. The syndrome maps
$\sigma_X$ and $\sigma_Z$ are also the boundary and coboundary operators of the complex
Let $\error=(\error_X,\error_Z)$ be a couple of vectors of $\f_2^n$, each of weight at most
$t$.
The {\em decoding problem} for a CSS code is, given $\sigma_X(\error_X)$ and
$\sigma_Z(\error_Z)$, to recover an {\em equivalent} version of $\error$, namely
a vector $\error'=(\error_X',\error_Z')$ such that $\error_X+\error_X'\in
C_Z^\perp$ and $\error_Z+\error_Z'\in C_X^\perp$.

In \cite{LTZ15}, which is a particular instance of the product $\X$ construction
just discussed, a decoding algorithm was devised that relied on expansion of the
two underlying complexes (simply graphs in this case). In the present case, we cannot hope for such
an approach to be completely transposed because though the underlying component
simplicial complex~$\bfX$ exhibits remarkable coboundary expansion, it does not
have boundary expansion. However, the product quantum code $\cQ(\X)$
 has a
property that the codes of \cite{LTZ15} do not have: if the component quantum code
$\cQ(\bfX)$ can be decoded both from $X$-errors and from $Z$-errors,
(which does not happen for the codes of \cite{LTZ15} since the
component ``quantum code'' is really a classical code and corrects
zero $X$-errors), and if the classical code $C$ also decodes a
linear fraction of errors, then the overall code $\cQ(\X)$ can be decoded from
both $X$-errors and $Z$-errors.

So we need a component quantum code that we can correct from both types of
errors. For $2$-dimensional simplicial complexes, boundary decoding (from
$X$-errors) comes
naturally, because it can be handled by complete decoding of cycle codes of
graphs, which is known to be achievable in polynomial time through minimal
weight matching in graphs. Focusing on $2$-dimensional simplicial complexes
$\bfX$, we
prove the reduction:

\begin{theorem}
\label{thm:reduction}
\hspace{1cm}
\vspace{-3mm}
\begin{itemize}
\item[(i)]
Suppose the classical LDPC code $C$ comes with a polynomial-time decoding
algorithm that corrects any pattern of less than $\alpha|A|$ errors.
Then there is a polynomial time algorithm for $\cQ(\X)$
that corrects all $X$-errors of weight
smaller than $\alpha|A|d_X/2$
where $d_X$ is the $1$-systole or $X$-distance for the component code
$\cQ(\bfX)$.
\item[(ii)] 
Suppose there is a polynomial time decoding algorithm for the component quantum code $\cQ(\bfX)$ that corrects any pattern of
$Z$-errors of weight smaller than $w$. Then there exists a
polynomial time algorithm for $\cQ(\X)$ that corrects any pattern of $Z$-errors
of weight smaller than $w$. 
\end{itemize}
\end{theorem}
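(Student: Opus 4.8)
For both parts the plan is to exploit the double-complex structure behind the $\X$-construction. Write an error vector $\error\in\f_2^{\X_1}$ as a pair $(f,g)$, with $f$ supported on $X_1\times A$ and $g$ on $X_2\times B$; for $a\in A$, $b\in B$ let $f_a\in\f_2^{X_1}$, $g_b\in\f_2^{X_2}$ be the corresponding slices. Unwinding the incidence rules, the $X$-syndrome of $(f,g)$ has an $(X_0\times A)$-block $(\partial_1 f_a)_a$ and an $(X_1\times B)$-block $\bigl(\sum_{a:H_{ba}=1}f_a+\partial_2 g_b\bigr)_b$, while the $Z$-syndrome lies on $X_2\times A$ with $a$-slice $\sigma_a:=\partial_2^T f_a+\sum_{b:H_{ba}=1}g_b$. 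Two facts, both using $\rank H=|B|$, will be needed: first, $\Im\partial_2^{\X}\cap\f_2^{X_2\times B}$ is exactly the set of $g$ all of whose slices lie in $\ker\partial_2$ (this is where surjectivity of $H$ enters); second, on the $Z$-side the logical classes are $H^1(\X)\cong\mathrm{Hom}(C,H^1(\bfX))$, with no contribution from $\ker H^T$ (on the $X$-side $H_1(\X)\cong H_1(\bfX)\otimes C$, consistent with the parameters in Theorem~\ref{thm:product-overview}).

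For part (i) I would decode in four stages. (1) For each $a$, apply complete decoding of the cycle code of the graph $(X_0,X_1)$ — polynomial via minimum-weight matching — to the syndrome slice $\partial_1 f_a$, obtaining $\hat f_a$ with $\partial_1\hat f_a=\partial_1 f_a$ and $\mathrm{wt}(\hat f_a)\le\mathrm{wt}(f_a)$; then $z_a:=f_a+\hat f_a\in\ker\partial_1$, and $z_a\notin\Im\partial_2$ forces $\mathrm{wt}(f_a)\ge d_X/2$, so the set $S$ of such ``bad'' columns satisfies $|S|<\alpha|A|$. (2) Subtract the contribution of $(\hat f_a)_a$ from the observed $(X_1\times B)$-block and reduce each slice modulo $\Im\partial_2$; the good slices and the $\partial_2 g_b$-terms vanish in $H_1(\bfX)$, leaving $\bigl(\sum_{a:H_{ba}=1}[z_a]\bigr)_b$, i.e.\ in a fixed basis $(\varepsilon_j)$ of $H_1(\bfX)$ the vector $\sum_j\varepsilon_j\otimes Hv_j$ with $(v_j)_a=[z_a]_j$ and $\mathrm{wt}(v_j)\le|S|<\alpha|A|$; feeding each $Hv_j$ to the classical decoder for $C$ recovers $v_j$, hence the class $[z_a]\in H_1(\bfX)$ for every $a$. (3) Replace $\hat f_a$ by $\hat f_a+\zeta_{[z_a]}$, where $\zeta_\xi$ are precomputed cycles representing the basis classes; now every residual slice is a boundary, and the identity $(\partial_2 c)\otimes a\equiv c\otimes(Ha)\pmod{\Im\partial_2^{\X}}$ shows the residual error is equivalent to one supported on $X_2\times B$. (4) That residual is $\Im\partial_2^{\X}$-equivalent to some $(0,h)$ whose syndrome is now computable; this pins down $\partial_2 h_b$ for every $b$; solve those linear systems for a candidate $\hat h$, and then (comparing syndromes) $h_b+\hat h_b\in\ker\partial_2$ for all $b$, so by the first fact $(0,h+\hat h)\in\Im\partial_2^{\X}$ and the output $(\hat f,\hat h)$ is an equivalent correction. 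Every step is polynomial, and the weight bound is used only to make the classical decoding in stage (2) succeed.

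For part (ii) the total weight being $<w$ forces every slice $f_a$, hence every column combination $\sum_a c_a f_a$ for $c\in C=\ker H$, to have weight $<w$. The key point: combining the $Z$-syndrome slices along such a $c$ gives $\sum_a c_a\sigma_a=\partial_2^T\!\bigl(\sum_a c_a f_a\bigr)$, the $g$-part cancelling because $c\in\ker H$ — a genuine $\cQ(\bfX)$-$Z$-syndrome of a weight-$<w$ error. Running the assumed component $Z$-decoder over a basis $(c^{(j)})$ of $C$ yields $\phi^{(j)}$ with $\sum_a c^{(j)}_a f_a+\phi^{(j)}\in\Im\partial_1^T$. Pick, by linear algebra, any preimage $(f_0,g_0)$ of the observed $Z$-syndrome; then $\bigl[\phi^{(j)}+\sum_a c^{(j)}_a f_{0,a}\bigr]\in H^1(\bfX)$ is computable (it is a cocycle), and equals the value at $c^{(j)}$ of the $Z$-logical class of $(f,g)-(f_0,g_0)$ under $H^1(\X)\cong\mathrm{Hom}(C,H^1(\bfX))$. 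Finally lift this homomorphism to a cocycle $(\ell_f,\ell_g)$ of $\X$ (again linear algebra) and output $(f_0+\ell_f,\,g_0+\ell_g)$: it has the prescribed syndrome and, by construction, lies in the same $W_X$-coset as $(f,g)$, so it corrects the error.

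The substantive obstacle in part (i) is stage (2): plain cycle-code decoding sends each bad column to the \emph{wrong} homology class, and the way out is to notice that the transverse syndrome, read off in $H_1(\bfX)$, is precisely a classical syndrome $Hv_j$ that $C$'s decoder can invert; the remaining steps are bookkeeping, though stage (4) genuinely needs $\rank H=|B|$. In part (ii) the work is in showing that what the component decoder produces, together with the reference $(f_0,g_0)$, really captures all of the $Z$-logical information and nothing is missed — this is the Künneth-type identification $H^1(\X)\cong\mathrm{Hom}(C,H^1(\bfX))$, where $\rank H=|B|$ is once more essential, now to exclude spurious $Z$-logicals supported along the $\ker H^T$ direction.
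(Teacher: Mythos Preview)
Your proposal is correct in both parts, but the two halves sit differently with respect to the paper's own arguments.

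\textbf{Part (i).} Your approach and the paper's are essentially the same idea executed with different bookkeeping. Both begin with complete cycle-code decoding column-by-column, producing a residual whose columns are cycles with at most $\alpha|A|$ of them homologically nontrivial; both then invoke the classical decoder for $C$ to recover the homology class of each column. The paper packages this second step by passing to an ``EA-representation'' of the residual, picking an arbitrary preimage $\z'$ of the syndrome inside $Z_1(\bfX)\otimes\f_2^A$, and decoding each $H_1(\bfX)$-row of $\z'$ as a received word in $\f_2^A$ (received word = true small-weight vector plus codeword). You instead read the same data off the $(X_1\times B)$-syndrome reduced modulo $\Im\partial_2^X$, obtaining $Hv_j\in\f_2^B$ and syndrome-decoding. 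These are dual formulations of the same computation. The one genuine difference is that the paper stays in the $EA$-picture throughout and never needs your stage~(4): since $\Im\partial_2^X\otimes C$ is already the set of trivial Cycles inside $EA$, recovering the $\Z_1$-rows of $\z$ up to codeword is enough, and no separate $X_2\times B$-solve is required.

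\textbf{Part (ii).} Here your route is correct but genuinely different from the paper's. The paper's argument is more direct and avoids K\"unneth entirely: it introduces a \emph{reduced cochain} form, choosing $A'\subset A$ with $H|_{B\times A'}$ invertible and observing that any error is equivalent (modulo $\Im\sigma_X^*$) to one whose $EA$-support lies over $A''=A\setminus A'$, with each slice of weight at most the total $EA$-weight. In this reduced form the $TA'$-block of the $Z$-syndrome equals $(\sum_b H_{ba} g_b)_{a\in A'}$, from which $g$ is recovered by inverting $H|_{A'}$; subtracting its contribution leaves, for each $a$, the coboundary $\delta_1^X f_a$ of a weight-$<w$ cochain, and the component decoder is applied column-by-column. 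Your approach instead takes $C$-weighted combinations of syndrome slices (cancelling the $g$-contribution via $Hc=0$), decodes each with the component decoder, and reconstructs the $Z$-logical class of the error through the identification $H^1(\X)\cong\mathrm{Hom}(C,H^1(\bfX))$, finally lifting by linear algebra. What each buys: the paper's method is more explicit and constructive, outputs an equivalent error without any homology-class lift, and uses the structure of $H$ only through a single Gaussian elimination (the choice of $A'$); your method is more conceptual, makes the role of K\"unneth transparent, and calls the component decoder $\dim C$ times rather than $|A|$ times. Both rely on $\rank H=|B|$ in the same essential place.
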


To have a solution to the decoding problem for the product code $\cQ(\X)$, it
remains to find a decoding algorithm for coboundary decoding (from $Z$-errors)
of the component quantum code $\cQ(\bfX)$. 
We achieve this using coboundary expansion of LSV complexes in two different
ways. For a $2$-dimensional complex $\bfX=(V,E,T)$
the algorithm takes the following form:

\paragraph{Decoding algorithm:}\hspace{1cm}

{\it Input:} the coboundary or $Z$-syndrome $\ff_0=\sigma_Z(\error)$ for a $Z$-error
$\error$.

{\it Procedure:} for $k\geq 1$, look for a vertex $v$, and a vector
$\y_k\in\f_2^E$, whose support is entirely in the edge-neighbourhood of $v$,
such that $|\sigma_Z(\y_k)+\ff_{k-1}|<|\ff_{k-1}|.$ Set
$\ff_k=\ff_{k-1}+\sigma_Z(\y_k)$. Repeat until $\ff_k=0$ and output
$\error'=\y_1+\y_2+\cdots\y_k$.

We prove:

\begin{theorem}\label{thm:local-decoding-2d}
There exist constants $c,q_0$, such that when $\bfX$ is any $2$-dimensional LSV
complex of local parameter $q>q_0$, any $Z$-error vector $\error\in\f_2^E$ of weight
$|\error|\leq c|E|$ is always correctly decoded by the decoding algorithm.
\end{theorem}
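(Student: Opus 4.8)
The plan is to follow the \emph{accumulated correction} $\x_k=\error\oplus\y_1\oplus\cdots\oplus\y_k$, which satisfies $\sigma_Z(\x_k)=\ff_k$ by construction, and to reduce the whole statement to a single progress lemma. First one checks that the algorithm cannot loop and that $\x_k$ stays thin: $|\ff_k|$ drops by at least one at each step, and $|\ff_0|=|\sigma_Z(\error)|\leq(q+1)|\error|$ because every edge of an LSV complex lies in exactly $q+1$ triangles, so the procedure stops after at most $(q+1)|\error|$ steps (and since each step is a search over the constantly-many local configurations at the vertices, it runs in polynomial time); since each $\y_i$ is supported on the set $E_v$ of the $O(q^2)$ edges incident to a single vertex, $|\x_k|\leq O(q^3)|\error|\leq O(q^3)c|E|$ throughout, which for $c$ small is below any prescribed small fraction of $|E|$. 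If the algorithm halts with $\ff_k=0$, then $\sigma_Z(\error\oplus\error')=0$ for $\error'=\y_1\oplus\cdots\oplus\y_k$, so $\error\oplus\error'\in\ker\sigma_Z$ and has weight $O(q^3)c|E|$; since $S^1(\bfX)=\Omega(|E|)$ by Theorem~\ref{thm-dim-1}, for $c$ small this forces $\error\oplus\error'\in W_X$, which is exactly a valid decoding. Everything therefore reduces to the following progress lemma: \emph{there are constants $c>0$ and $q_0$ such that whenever $q>q_0$, $|\x|\leq c|E|$ and $\sigma_Z(\x)\neq 0$, some vertex $v$ admits a vector $\y$ supported on $E_v$ with $|\sigma_Z(\x\oplus\y)|<|\sigma_Z(\x)|$.}

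To prove the progress lemma, note that whether a vertex $v$ is ``beneficial'' depends only on $\ff=\sigma_Z(\x)$, hence only on the coset $\x+W_X$; using the coboundary expansion of $\bfX$ together with the linear lower bound on the cosystole $S^1$, one may therefore replace $\x$ by the minimum-weight representative of its coset, which is still thin and which, being coset-minimal, meets the star $E_v$ of every vertex in at most half of its edges. The local picture at $v$ is that $\ff$ restricted to the triangles through $v$ is the sum over $\f_2$ of the cut of the link graph $L(v)$ induced by $\x|_{E_v}$ and the restriction $\eta_v$ of $\x$ to the ``link edges'' $\{u,w\}$ with $(v,u,w)\in T$; since a move at $v$ adds an arbitrary cut of $L(v)$ to this restriction, $v$ is beneficial precisely when $\ff$ restricted to the triangles through $v$ is strictly closer to a nonzero cut of $L(v)$ than to $0$. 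Now, for an LSV complex $L(v)$ is the point--line incidence graph of a projective plane of order $q$, a two-sided expander whose nontrivial eigenvalues are $\pm\sqrt q$; its edge-expansion makes the induced cut of $L(v)$ large --- of order $q\,|\x|_{E_v}|$ --- wherever $\x$ meets $v$, while $\sum_v|\eta_v|=(q+1)|\x|$ and $\sum_v(\text{weight of }\ff\text{ on the triangles through }v)=3|\ff|$, the latter being $\Omega(|\x|)$ by coboundary expansion. The aim is to combine these facts to exhibit a single vertex at which the induced cut dominates the local noise $\eta_v$, since a beneficial move exists there.

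The hard part is to produce such a vertex rather than merely a comparison of averages: the three global counts above are all of the same order, so by themselves they do not exclude a ``stuck'' $\x$, and the delicate regime is exactly when $|\ff|$ has become small. There I would exploit that the family $(\eta_v)_v$ is not arbitrary but consists of restrictions of one globally thin vector $\x$, and turn coset-minimality together with the link expansion into a ``locally heavy everywhere'' dichotomy: a stuck configuration would force, at every vertex that $\x$ meets, a cut of $L(v)$ of size $\Omega(q|\x|_{E_v}|)$ at least half of which is covered by $\eta_v$, i.e. an $\Omega(q)$-fold multiplication of the support of $\x$ inside a controlled neighbourhood of that vertex, which for $q$ large and $c$ small contradicts $|\x|\leq c|E|$. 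Making this contradiction precise --- and thereby pinning down the admissible $q_0$ and $c$ --- is where the substance of the proof lies; granting it, the progress lemma, and hence the theorem, follow.
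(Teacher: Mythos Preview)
Your overall architecture---reduce everything to a progress lemma asserting that some local move at a vertex decreases $|\ff|$---matches the paper's, and your local picture (restriction of $\ff$ to the triangles through $v$ equals the link cut induced by $\x|_{E_v}$ plus the link-edge vector $\eta_v$) is correct. But there are two genuine gaps.

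\medskip
\textbf{The reduction yields a $q$-dependent constant.}
You control $|\x_k|$ by summing $|\y_i|\leq Q/2=O(q^2)$ over at most $(q+1)|\error|$ steps, getting $|\x_k|\leq O(q^3)|\error|$. Feeding this into either the cosystole bound or the hypothesis of your progress lemma forces $c=O(1/q^3)$, whereas the theorem asserts an absolute $c$. The paper avoids this blowup by never bounding $|\x_k|$ crudely: instead it proves a quantitative coboundary expansion statement (Proposition~\ref{prop:1/3}: for locally minimal $\error$ of weight $\leq\gamma|E|$ one has $|\delta_1(\error)|\geq\tfrac13(q+1)|\error|$), and uses it to infer from $|\ff_k|\leq|\ff_0|\leq(q+1)|\error|$ that the \emph{minimal representative} of $\x_k$ has weight at most $3|\error|$, uniformly in $q$. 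You do mention replacing $\x$ by its minimal representative ``which is still thin'', but you never justify thinness---your only handle is the crude $O(q^3)$ bound. (There is also a circularity to watch: Proposition~\ref{prop:1/3} itself carries a smallness hypothesis on the minimal representative, so one must argue inductively that it never escapes the window, as the paper does.)

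\medskip
\textbf{The progress lemma is not proved.}
You correctly identify that the global sums $\sum_v|\text{cut}(\x|_{E_v})|$, $\sum_v|\eta_v|=(q+1)|\x|$ and $3|\ff|$ are all of the same order, so averaging alone does not locate a beneficial vertex. Your proposed contradiction---that a stuck configuration forces $|\eta_v|\geq\tfrac12|\text{cut}(\x|_{E_v})|$ at every $v$ meeting $\x$, hence an ``$\Omega(q)$-fold multiplication'' of the support---does not close: summing that inequality gives $2(q+1)|\x|\geq\sum_v|\text{cut}(\x|_{E_v})|=2t_1+2t_2$, which is implied by $t_1+2t_2+3t_3=(q+1)|\x|$ and yields no contradiction. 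The paper's resolution is concrete and uses an ingredient you never invoke: it reduces to showing $t_1>t_2$ (triangles with one versus two error-edges), which via $\sum_v|T_1(v,\text{good})|=2t_1$ and $\sum_v|T_2(v,\text{bad})|=2t_2$ gives a beneficial vertex with $\y=\error_v$; then it partitions vertices into \emph{thin} ($|\error_v|<(1-\epsilon)Q/2$) and \emph{thick}, uses link expansion to lower-bound $t_1+t_2$ and upper-bound $t_2$ in terms of the thin/thick contributions $r,s$, and finally uses the \emph{global} expansion of the $1$-skeleton $(V,E)$ (second eigenvalue $\leq 6q$, Lemma~\ref{lemma-bound-on-number-of-edges-within-thick-vertices}) to show that when $|\error|\leq\gamma|E|$ the thick contribution is negligible. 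Optimising $\epsilon=3/4$ gives $\gamma$ arbitrarily close to $1/192$. Without the global $1$-skeleton expansion the argument cannot close, and your sketch does not supply a substitute.
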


The algorithm of Theorem~\ref{thm:local-decoding-2d} is linear-time in the
code length $|E|$, but with a constant that is exponential in the local
parameter $q$. We can remove the constant when we replace the
$2$-dimensional LSV complex by the $2$-skeleton $\bfX=(V,E,T)$
 of a $3$-dimensional LSV complex
$(V,E,T,P)$. Note that this differs from taking its $(E,T,P)$ subcomplex which
is used to create the codes of Corollary~\ref{cor:record} $(ii)$.
The associated product quantum code $\cQ(\X)$ will again have parameters 
equivalent to those of Corollary~\ref{cor:record} $(i)$, though with looser
constants. But in return we prove:
\begin{theorem}\label{thm:local-decoding-3d}
For $q$ fixed and large enough, there exists a constant 
$c'$, such that when $\bfX$ is the $2$-skeleton of any 
$3$-dimensional LSV
complex of local parameter $q$, any $Z$-error vector $\error\in\f_2^E$ of weight
$|\error|\leq c|E|$ is always correctly decoded by the decoding algorithm using
local vectors $\y_k\in\f_2^E$ of weight $1$.
\end{theorem}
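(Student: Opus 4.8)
The plan is to run exactly the same greedy descent as in Theorem~\ref{thm:local-decoding-2d}, but to show that when $\bfX$ is the $2$-skeleton of a $3$-dimensional LSV complex, one never needs to find a \emph{bundle} of error-edges around a vertex: a single edge always suffices to make progress. Recall the setup: we have a $Z$-error $\error\in\f_2^E$ with $|\error|\le c|E|$ and syndrome $\ff_0=\sigma_Z(\error)=\partial_2^T\error\in\f_2^T$ (the set of triangles containing an odd number of error-edges). The goal is to argue that, as long as $\error$ (more precisely, the current residual cocycle $\error+\y_1+\cdots+\y_{k-1}$) is nonzero and small, there exists an edge $e$ such that flipping $e$ strictly decreases the syndrome weight, i.e. $|\sigma_Z(\mathbf 1_e)+\ff_{k-1}|<|\ff_{k-1}|$; equivalently, $e$ is contained in strictly more \emph{unsatisfied} triangles than satisfied ones among the triangles through $e$.

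First I would reduce to a local/link statement. Since the residual error $\error_k=\error+\sum_{i<k}\y_i$ has the same syndrome pattern type and weight at most $|\error|+\sum|\y_i|$, which stays $\le c|E|$ throughout as long as the syndrome is decreasing, it suffices to prove: \emph{if $0\ne\error_k\in\f_2^E$ with $|\error_k|\le c|E|$, then some edge $e$ is "good"}. The key structural input is that in a $3$-dimensional LSV complex every triangle $t\in T$ lies in a bounded, fixed number of tetrahedra $P$, and — more importantly — the link $L(e)$ of an edge $e$ is itself (the $1$-skeleton/graph of) a $1$-dimensional LSV or Ramanujan-type complex: its vertices are the triangles through $e$ and its edges are the tetrahedra through $e$. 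So the "bad edge" condition — that at most half the triangles through $e$ are unsatisfied — is a statement about a vertex subset of the Ramanujan graph $L(e)$. The crucial point is that the $3$-dimensional complex gives us one extra dimension of local expansion at the edge level that the bare $2$-dimensional complex lacks, and this is exactly what upgrades "there is a good \emph{bundle} around a vertex" to "there is a good \emph{single edge}".

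The main argument, which I expect to be the heart of the proof, is a double-counting / local-expansion estimate. Suppose for contradiction that \emph{every} edge is bad. Consider the bipartite incidence between error-edges $e\in\error_k$ and the triangles they lie in; classify triangles as satisfied/unsatisfied (even/odd). Summing the "bad" inequality over all $e\in\error_k$ and rearranging, one gets that the number of (error-edge, satisfied-triangle) incidences is at least the number of (error-edge, unsatisfied-triangle) incidences. Now use the coboundary expansion of the $3$-dimensional LSV complex at the level of $\partial_2$ together with the fact that, around any vertex $v$, the error restricted to the link $L(v)$ is a small vertex-set in the projective-plane incidence graph whose edge-boundary (= unsatisfied triangles at $v$) must be large unless it is empty. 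Chaining the vertex-level expansion over all vertices touched by $\error_k$ forces many unsatisfied triangles to be incident to error-edges, and a counting argument using the bounded tetrahedron-degree to control "satisfied triangles that are nevertheless incident to error-edges" yields a contradiction for $c$ small and $q$ large enough. The main obstacle is getting the constants to line up: one must ensure that the gain from edge-level (codimension-$1$ in a $3$-complex) expansion beats the loss from triangles that are doubly covered by $\error_k$, and this is where the passage from $2$-dimensional to $3$-dimensional is essential — the extra dimension makes the relevant link a genuine expander graph rather than merely an incidence graph, so the edge-boundary-to-volume ratio we need is bounded below by an absolute constant rather than decaying. Once a good single edge is shown to exist, the descent terminates in at most $|\error|\cdot(\text{const})$ steps with total correction $\error'=\y_1+\cdots+\y_k$ of weight $O(|\error|)$, and the standard argument (as in Theorem~\ref{thm:local-decoding-2d}, using that $\error+\error'$ has zero syndrome and small weight, hence lies in $\Im\partial_3$ by the linear cosystolic bound $S^2(\bfX)=\Omega(|T|)$) shows the output is equivalent to $\error$.
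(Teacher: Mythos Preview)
Your proposal misses the paper's key idea, and the direct double-counting route you sketch is both vague and partially incorrect.

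The paper's argument is short and works one dimension \emph{up}: it does not analyse the error $\error$ and its incidence with triangles at all. Instead it looks at the syndrome $\ff=\delta_1(\error)$ as a $2$-cochain of the full $3$-dimensional complex $(V,E,T,P)$. Since $\delta_2\delta_1=0$, we have $\delta_2(\ff)=0$. Now one invokes the cosystolic expansion result of \cite{KKL} for $2$-cochains: any \emph{locally minimal} $2$-cochain $\ff$ of weight $\leq\gamma_2|T|$ satisfies $|\delta_2(\ff)|\geq\epsilon_2|\ff|>0$. Because $|\error|\leq c|E|$ and degrees are bounded, $|\ff|\leq\gamma_2|T|$; but $\delta_2(\ff)=0$, so $\ff$ cannot be locally minimal. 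By definition of local minimality for $2$-cochains, this means there exists a single edge $e$ with $|\ff+\delta_1(e)|<|\ff|$. That is the whole argument for the existence of a good single edge; the pyramids enter only through the black-box statement ``small locally minimal $2$-cochains have large $\delta_2$''.

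Your attempt instead sums ``bad edge'' inequalities over $e\in\error_k$ and tries to contradict vertex-link expansion. Several things go wrong. First, the good edge need not lie in $\error_k$, so restricting the sum to error-edges is already a loss. Second, your description of the local structure is for the wrong complex: in the $2$-skeleton of a $3$-dimensional LSV complex the vertex link is the flag complex of a $3$-dimensional projective space (points/lines/planes), not the projective-plane incidence graph you invoke, and the edge link is a generalised-quadrangle incidence graph, not a ``$1$-dimensional LSV complex''. Third, and most importantly, you never actually use $\delta_2$ (the triangle-to-pyramid map), which is precisely the extra structure the $3$-dimensional complex provides and the only place the argument can gain over the $2$-dimensional case; your sketch stays at the $(V,E,T)$ level and therefore cannot do better than the weight-$Q/2$ local moves of Theorem~\ref{thm:local-decoding-2d}. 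Finally, the correctness wrap-up should appeal to the $1$-cochain expansion (so that $\error+\error'\in B^1=\Im\delta_0$), not to $S^2$ or $\Im\partial_3$.
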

Note that the constant $c'$ in Theorem~\ref{thm:local-decoding-3d}
depends on $q$, as opposed to the constant $c$ in
Theorem~\ref{thm:local-decoding-2d} which is universal.

\subsection{Comments and open questions}
\begin{itemize}
\item
The ``expander'' quantum LDPC codes of \cite{LTZ15} can be seen as having been
constructed through a co-complex tensoring operation with two $1$-dimensional chain complexes, i.e. two bipartite graphs that are taken to be expanding graphs. 
The present quantum codes are obtained by replacing one of the components by a
coboundary expanding simplicial complex. In both cases expansion is crucial to
decoding, even though we rely on different decoding strategies. We have focused
on using Ramanujan LSV complexes, but good quantum codes are also liable to come from
more general families of higher-dimensional expanders. What is needed is a
simplicial complex with sufficiently good local expansion in its links: from
this global expansion properties can be derived \cite{O} and the required coboundary
expansion follows. To obtain a quantum
code of non-zero dimension one furthermore needs non-zero homology, and to
obtain systolic bounds one requires the existence of a covering complex with
zero homology and a growing injectivity radius.

\item Reasonable values can be given for the constants in Theorem~\ref{thm-dim-1}: for large enough $q$, we have that $d_Z=S^1(\bfX)$ is at
least a quantity
arbitrarily close to $n/4$ and $d_X\geq \frac{1}{32}\log_qn -3$.
The number of $Z$-errors correctable by the algorithm in 
Theorem~\ref{thm:local-decoding-2d} can be made arbitrarily
close to a $1/144$ fraction of the distance $d_Z$.
The constants in Theorem~\ref{thm:S^2} and Theorem~\ref{thm:local-decoding-3d}
are much looser.
\item
It is possible to show that the logarithmic behaviour of the systolic distance $d_X$ in Theorem~\ref{thm-dim-1} cannot be improved.
However it is very much open as to whether the $\log^2n$ lower bound 
on $d_X$ in Theorem~\ref{thm:S^2} is best possible or not. Any improvement would
of course mean an improvement over the minimum distance of the quantum code of
Corollary~\ref{cor:record} $(ii)$.
\item
Coboundary decoding of $2$-dimensional LSV complexes and of $2$-skeletons of
$3$-dimensional complexes $\bfX$ is linear in their length $|E|$, which translates into
linear-time decoding from $Z$-errors for the quantum code $\cQ(\X)$.
However for $X$-errors we need to rely on complete decoding of 
the cycle code associated to the $1$-skeleton of the complex $\bfX$
which is not linear in $|E|$ and we do not obtain a linear-time decoding
algorithm for $\cQ(\X)$. It would of course be interesting to find an
alternative strategy that would result in linear-time decoding from $X$-errors.
\item
The distance record-breaking higher-dimensional code of
Corollary~\ref{cor:record} $(ii)$ can be decoded from $Z$-errors by the same
strategy as that of Theorem~\ref{thm:local-decoding-3d}
and Section~\ref{sec:coboundary-decoding-3d}, if we replace
the component $(E,T,P)$ $2$-complex of a $3$-dimensional LSV complex
by the $(E,T,P)$ component of a {\em $4$-dimensional} LSV complex.
How to decode the corresponding quantum code from $X$-errors has eluded us
however.
\item
We have focused on decoding worst-case errors. It would be interesting to
address the decoding problem for random errors. The $X$-error decoding algorithm
for the quantum code $\cQ(\X)$ is easily seen to work just as well for random
errors up to a positive fraction of the code length. However, our $Z$-error
correcting strategy fails for linear weight random errors and a different
approach needs to be devised.
\end{itemize}

\subsection*{Outline of the manuscript}  
In Section~\ref{sec-per} we review the relevant algebraic and coding-theoretic
background that we need.
In Section~\ref{sec-prod} we give the details of the construction of the product
complex $\X$ and prove Theorem~\ref{thm:product-overview}.
Section~\ref{sec-Ram} is devoted to Ramanujan complexes and to the proof of
Theorems~\ref{thm-dim-1} and \ref{thm:S^2}.
Section~\ref{sec-dec-prod} describes how to decode the product 
complex $\cQ(\X)$
and proves Theorem~\ref{thm:reduction}. In Section~\ref{sec-dec-Ram} 
we give a refined analysis of coboundary expansion of $2$-dimensional LSV
complexes and prove Theorem~\ref{thm:local-decoding-2d}. Finally in
Section~\ref{sec:coboundary-decoding-3d} we prove
Theorem~\ref{thm:local-decoding-3d} for $2$-skeletons of $3$-dimensional LSV
complexes.

\subsection*{Acknowledgments}
We wish to thank Anthony Leverrier for pointing out to us the relevance of reference~\cite{Has17} and setting us on the path leading to this paper. We are grateful to the Israeli
Institute for Advanced Studies and the Simons Institute for the theory of
Computing for hosting programmes on high-dimensional expanders that fostered
this work.

\clearpage
\section{Preliminaries} \label{sec-per}

\subsection{Chain complexes}
Chain complexes are a useful formalism for studying quantum CSS codes~: it has
been developed and applied in particular in \cite{BH,AC}.

A $d$-dimensional {\it chain complex} $\bfX$, is 
a sequence running on $p=0,1,\ldots,d$ of binary vector spaces, called the $p$-chain spaces, 
$C_p(\bfX)$. Each chain space $C_p$ comes with a distinguished basis $X_p$, called
the set of {\it $p$-faces} of $\bfX$, so that we have the identification
$C_p=\f_2^{X_p}$.
The chain spaces come together 
with a family of linear maps between them, called the {\it
$p$-boundary} operators, denoted 
\begin{equation}
\partial_{p}\,:\,C_{p}(\bfX)\rightarrow C_{p-1}(\bfX),\nonumber
\end{equation}
for $p=0..d+1$, with the convention $\partial_0~:C_0\to 0$ and
$\partial_{p+1}~: 0\to C_p$, and
such that the boundary of the boundary is identically zero:
\begin{equation}
\partial_{p}\circ \partial_{p+1}=0.\nonumber
\end{equation}
One defines the spaces of {\it $p$-cycles} and {\em $p$-boundaries},
\begin{equation}
B_{p}(\bfX):=\mbox{im}\partial_{p+1}\subseteq\ker \partial_{p}=:Z_{p}(\bfX),\nonumber
\end{equation}
and the quotient, called the {\em $q$-homology space} of $\bfX$, is 
\begin{equation}
H_{p}(\bfX):=Z_{p}(\bfX)/B_{p}(\bfX).\nonumber
\end{equation}
The adjoint operator of the $(p+1)$-boundary operator, called the {\em
$p$-coboundary operator}, is denoted
\begin{equation}
\delta_p=\partial_{p+1}^{*}\,:\,C_{p}^*(\bfX)\rightarrow C_{p+1}^*(\bfX)\nonumber
\end{equation}
Vectors of $C_p^*$ are called {\em cochains} but the space $C_p^*$ can be
identified with $C_p$.
Note that one also has that the coboundary of a coboundary is identically zero,
\begin{equation}
\delta_{p}\circ \delta_{p-1}=0.
\end{equation}
Similarly, one defines the spaces of  {\em $p$-cocycles} and {\em $p$-coboundaries}, 
\begin{equation}
B^{p}(\bfX):=\mbox{im}\delta_{p-1}\subseteq\ker \delta_{p}=:Z^{p}(\bfX)\nonumber
\end{equation}
and the {\em $p$-cohomology space} $H^1(\bfX)=Z^1(\bfX)/B^1(\bfX)$ of $\bfX$.\nonumber
Note that the following holds 
\begin{equation}
B^p(\bfX)^\perp  = Z_p(\bfX) \qquad \mbox{and} \qquad B_p(\bfX)^\perp = Z^p(\bfX).\nonumber
\end{equation}

Define the $p$-systole to be
\begin{equation}
S_{p}(\bfX)=\min\{|v|\,:\,v\in Z_{p}(\bfX)\setminus B_{p}(\bfX)\},\nonumber
\end{equation}
and the $q$-cosystole to be 
\begin{equation}
S^{p}(\bfX)=\min\{|v|\,:\,v\in Z^{p}(\bfX)\setminus B^{p}(\bfX)\}.\nonumber
\end{equation}

If $\bfX$ is a complex, it will be useful to denote by $\bfX^*$ its {\em co-complex},
defined by $\bfX^*_i=X_{p-i}$, and with $\delta_i$ as the boundary map from
$\bfX^*_i$ to $\bfX^*_{i-1}$.

If we are dealing with a family of $d$-dimensional complexes rather than an single complex 
$\bfX$,
then, following a standard abuse of terminology, we will say that the complex
$\bfX$ (as an abreviation for the family of complexes that it represents) has {\em
bounded degree} if for any $v\in X_{p}$, the weight of $\partial_{p}(v)$ and
$\delta_{p}(v)$ is bounded from above by a constant.

\paragraph{Simplicial Complexes.}
The $d$-dimensional complex $\bfX$ is said to be {\em simplicial}, if for every
$v\in X_p$, $p=1,\ldots d$, the weight of its boundary $\partial_p(v)$ is equal
to $p+1$.
This means in particular that the boundary map from $X_1$ to $X_0$ defines a
graph structure on $X_0$, with edge set $X_1$, every element $e$ of $X_1$
connecting the two vertices defined by $\partial_1(e)$. Similarly, the set $X_2$
defines a set of triangles in the graph, and more generally $X_p$ defines a set
of $(p+1)$ simplices. Conversely, given a graph we may define its {\em clique
complex} by defining $X_p$ to be the set of $(p+1)$-cliques in the graph.

A $1$-dimensional simplicial complex is therefore just a graph. In the same way
that we often write $(V,E)$ to describe a graph, with the incidence relation
between $V$ and $E$ being implicit, we will often allow ourselves to write
$\bfX=(X_0,X_1,X_2)$ for a $2$-dimensional complex, with the boundary operators
being implicit.

\subsection{CSS codes}\label{ssec:CSS}
\subsubsection{Classical codes}\label{sssec:classical}
Recall that a classical binary linear code $C$ of length $n$ is defined by a parity-check matrix
$H$ with $n$ columns as the set of vectors $\x\in\f_2^n$ such that $H\x^T=0$. 
It has dimension $n-\rank(H)$ and its minimum distance is defined as the
smallest weight of a non-zero codeword. The decoding problem is, given a vector
$\y\in\f_2^n$, to find the closest codeword for the Hamming distance. An
equivalent version is to be given the {\em syndrome} of $\y$, i.e. the quantity
$s=\sigma(\y)=H\y^T$, and to find the smallest weight vector $\error$ such that
$\sigma(\error)=s$. 
An infinite family of codes is said to be {\em asymptotically good}, if both the
code dimension and the code minimum distance are bounded from below by a
constant times the length $n$.

A family of codes is said to be LDPC (Low Density Parity
Check) if each code of the family has a parity-check matrix whose row-weight and
column weight are both bounded from above by a constant. It has been known since
Gallager first studied them that LDPC codes are asymptotically good. 

 An LDPC code is often
described by its {\em factor graph} (also called a Tanner graph) which is a
bipartite graph $(A,B)$ where $A$ (resp. $B$) is identified with the set of columns
(resp. rows) of $H$, and $H$ is a matrix representation of the $(A,B)$ incidence
structure. It was shown by Sipser and Spielman \cite{SS} that LDPC codes whose factor
graph is sufficiently expanding are asymptotically good and that this allows one
to construct asymptotically good LDPC codes, as opposed to showing they exist.

With the terminology of chain complexes, a code $C$ defined by the
matrix $H$ is a $1$-dimensional complex $\bfX$, with $X_0=B$, $X_1=A$, and with
boundary map the syndrome function given by $H$. The actual code $C$ as a
subspace of $\f_2^n=\f_2^A$ is the homology space $H_1$ and its minimum distance
is the $1$-systole $S_1(\bfX)$.

\paragraph{Cycle codes of graphs.} 
When the $1$-dimensional complex that defines $C$ is simplicial, we obtain a
{\em cycle code} of a graph on vertex set $X_0$. Cycle codes of graphs are often
not considered in the LDPC literature, in part because there are much better
codes: in particular if the code has dimension linear in $n$ a cycle code cannot
have minimum distance larger than $\log n$ (the behaviour of the girth of the
graph). However, cycle codes have an interesting property that will be useful to
us in the quantum coding context: they can be decoded completely in polynomial
time. This means that there is an efficient algorithm that,
given any vector $s$ belonging the syndrome (boundary)
space, finds a smallest weight vector (set of edges) that maps to $s$ by the
syndrome function (boundary map) \cite{NH}.

\subsubsection{Quantum codes}\label{sssec:quantum}
A CSS code of length $n$ is defined by two classical codes $C_X$ and $C_Z$,
associated with two parity-check matrices $H_X$ and
$H_Z$. The two codes $C_X$ and $C_Z$ define a CSS code if the condition
$C_X\supset C_Z^\perp$ (equivalently $C_Z\supset C_X^\perp)$ is satisfied.
In other words the rowspaces of $H_X$ and $H_Z$ should be orthogonal.
The dimension of the quantum code is defined by $\dim C_X/C_Z^\perp = \dim
C_Z/C_X^\perp$. We define the $X$-minimum distance $d_X$ as the smallest weight
of a vector of $C_X$ not in $C_Z^\perp$ and the $Z$-minimum distance as the
smallest weight of a vector of $C_Z$ not in $C_X^\perp$. The minimum distance
of the quantum code is defined as $d=\min(d_X,d_Z)$.

Let $X_0$ be the set of rows of $H_X$, $X_1$ the set of columns of either $H_X$
or $H_Z$, and $X_2$ the set of rows of $H_Z$. With the convention that vectors
are columns, we define the map $\partial_1~: \f_2^{X_1} \to \f_2^{X_0}$ by
left multiplication by $H_X$, and $\partial_2~: \f_2^{X_2} \to \f_2^{X_1}$ by
left multiplication by $H_Z^T$, and obtain a $2$-dimensional chain complex.
The codes $C_X$ and $C_Z$ are the cycle and cocycle spaces $Z_1$ and $Z^1$
respectively, the rows spaces of $H_X$ and $H_Z$ are the boundary and coboundary
spaces $B_1$ and $B^1$.
The homology and cohomology subspaces $H_1$ and $H^1$ are the subspaces 
$C_X/C_Z^\perp$ and $C_Z/C_X^\perp$ and the $X$ and $Z$-minimum distances are
the $1$-systolic and $1$-cosystolic constants respectively.
Conversely, any $2$-dimensional chain complex gives (actually is) a quantum CSS
code. 
Furthermore, any $d$-dimensional complex gives rise to a CSS code by
extracting from it the chain spaces $C_{p-1},C_p,C_{p+1}$ and the associated
boundary maps $\partial_p$ and $\partial_{p+1}$ that in matrix form yield $H_Z$
and $H_X$. The length of the code is $n=|X_p|$, its dimension is $\dim H_p(X)$,
and its $X$ and $Z$-distances are $S_p(X)$ and $S^p(X)$ respectively.

A quantum CSS code comes with two syndrome functions $\sigma_X$ and $\sigma_Z$
defined, for $\x\in\f_2^n$, as the maps $\x\mapsto H_X\x^T$ and $\x\mapsto
H_Z\x^T$ respectively. They may also be viewed as boundary and coboundary
operators respectively.
Let $\error=(\error_X,\error_Z)$ be a couple of vectors of $\f_2^n$, each of weight at most
$t$.
The {\em decoding problem} for a quantum CSS code is, given $\sigma_X(\error_X)$ and
$\sigma_Z(\error_Z)$, to recover an {\em equivalent} version of $\error$, namely
a vector $\error'=(\error_X',\error_Z')$ such that $\error_X+\error_X'\in
C_Z^\perp$ and $\error_Z+\error_Z'\in C_X^\perp$.

Finally, a quantum CSS code is said to be LDPC if it can be defined by matrices
$H_X$ and $H_Z$ that are low density, i.e. whose row and column weights are
bounded from above by a constant. As usual, when we talk about ``an LDPC code'',
we really mean a family of LDPC codes of growing lengths. As in the classical
case, it is convenient to represent the code by a factor graph, as in
Figure~\ref{fig:factor}.

\begin{figure}
\begin{center}
\begin{tikzpicture}
\draw (0,0) ellipse (0.8cm and 2cm);
\node (E) at (0,2.4) {$X_1$};
\draw (2.5,2) ellipse (0.6cm and 1.5cm);
\node (T) at (2.5,3.8) {$X_2$};
\draw [decorate,decoration={brace,amplitude=10pt,mirror,raise=4pt},yshift=0pt]
(3.2,0.5) -- (3.2,3.5) node [black,midway,xshift=0.8cm] {$Z$};
\draw (2.5,-2) ellipse (0.6cm and 1.5cm);
\draw [decorate,decoration={brace,amplitude=10pt,mirror,raise=4pt},yshift=0pt]
(3.2,-3.5) -- (3.2,-0.5) node [black,midway,xshift=0.8cm] {$X$};
\node (V) at (2.5,-0.2) {$X_0$};
\end{tikzpicture}
\end{center}
\caption{The factor graph of a CSS code~: ``variables'' are on the left. Two set of checks, the $X$-checks and
the $Z$-checks define the $\sigma_X$ and $\sigma_Z$ syndrome functions.}
\label{fig:factor}
\end{figure}
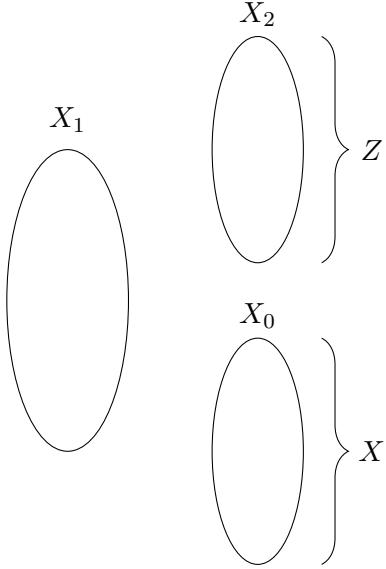

\section{Product complexes} \label{sec-prod}
The simplicial complexes that we will describe in the next section
yield quantum codes with very unbalanced distances $d_X$ and $d_Z$.
We describe how taking an appropriate product 
with a classical LDPC code yields a new quantum code with balanced $X$ and
$Z$-distances.

We first recall the following tensoring procedure (see e.g. \cite{AC}), which takes two complexes
$\bfX$, $\bfY$ and outputs a new complex $\bfX\otimes \bfY$, where 
\begin{equation}
(\bfX\otimes \bfY)_{p}=\bigsqcup_{i=0}^{p}X_{i}\times Y_{p-i}\nonumber
\end{equation}
hence the chain spaces are direct sums of tensor products
\begin{equation}
C_{p}(\bfX\otimes \bfY)=\oplus_{i=0}^{p}C_{i}(\bfX)\otimes C_{p-i}\bf(Y)\nonumber
\end{equation}
and the boundary maps acts on a tensor element $v\otimes u\in C_{i}(\bfX)\otimes
C_{p-i}(\bfY)$, as follows 
\begin{equation}
\partial_{p}^{\bfX\otimes \bfY}(v\otimes u)=\partial_{i}^{\bfX}(v)\otimes
u+v\otimes d_{p-i}^{\bfY}(u)\nonumber
\end{equation}
and $\partial_{p}^{\bfX\otimes \bfY}$ is extended linearly. 

It is important to note that if both $\bfX$ and $\bfY$ are bounded degree, then
so is $\bfX\otimes \bfY$. 

The homology of the product complex is described simply by
the K\"unneth formula (see e.g. \cite[Section~3.B]{Hat}), 
\begin{equation}
H_{p}(\bfX\otimes \bfY)\cong\oplus_{i=0}^{p}H_{i}(\bfX)\otimes H_{p-i}(\bfY).\nonumber
\end{equation}

which gives in particular:
\begin{equation}
|(\bfX\otimes \bfY)_{p}|=\sum_{i=0}^{p}|\bfX_{i}||\bfY_{q-i}|,\qquad\dim
H_{p}(\bfX\otimes
\bfY)=\sum_{i=0}^{p}\dim H_{i}(\bfX)\dim H_{p-i}(\bfY).\nonumber
\end{equation}

We now turn our attention to the specific case when $\bfX$ is a $2$-dimensional
$(d=2)$ complex and $\bfY$ is $1$-dimensional. As described in
Section~\ref{ssec:CSS}, $\bfX$ can be viewed as a quantum CSS code
of length $|X_1|$ whose $H_X$ and $H_Z$ matrices describe the
$(X_0,X_1)$ and $(X_2,X_1)$ incidence structures respectfully,
and $\bfY$, that is nothing more than a bipartite graph $(A,B)$,
can be viewed as a classical code of length $|A|$ with parity check matrix $H$
being the $B\times A$ incidence matrix. With these conventions, 
\begin{itemize}
\item the quantum code described by the $2$-complex $\bfX$ has length $n(\bfX)=|X_1|$,
dimension $k(\bfX)=\dim H_1(\bfX)$, $X$-distance $d_X(\bfX)=S_1(\bfX)$ and
$Z$-distance $d_Z(\bfX)=S^1(\bfX)$.
\item the classical code described by the $1$-complex $\bfY$ has dimension
$k(\bfY)=\dim H_1(\bfY)$, and minimum distance $d(\bfY)=S_1(\bfY)$.
\end{itemize}

\begin{definition}\label{def:product}
For complexes $\bfX=(X_0,X_1,X_2)$ and $\bfY=(A,B)$,
we define the complex $\X$ to be the co-complex of the product $X^*\otimes Y^*$.
Specifically, we have
\begin{align*}
\X_0&= (X_0\times A)\cup (X_1\times B)\\
\X_1&= (X_1\times A)\cup (X_2\times B)\\
\X_2&= X_2\times A
\end{align*}
with boundary map $\partial_1^\X$ defined as 
$\partial^\X_1=(\partial_1^X\otimes\Id_{A})+(\Id_{X_1}\otimes\partial_1^Y)
$
over $\f_2^{X_1\times A}$ and as
$(\partial_2^X\otimes\Id_B)$ over $\f_2^{X_2\times B}$
and with boundary map 
$\partial_2^\X=(\partial_2^X\otimes \Id_A) +
(\Id_{X_2}\otimes\partial_1^Y)$.
\end{definition}

\begin{theorem}
\label{thm:product}
For a $2$-dimensional complex $\bfX=(X_0,X_1,X_2)$ and a $1$-dimensional complex
$\bfY = (A,B)$ such that $H_2(\bfY)=0$, the associated complex $\X$ 
has:
\begin{align*}
\dim H_1(\X)&=\dim H_1(\bfX)\dim H_1(\bfY)\\
S_1(\X)     &=S_1(\bfX)S_1(\bfY)\\
S^1(\X)     &=S^1(\bfX).
\end{align*}
\end{theorem}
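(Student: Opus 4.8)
The plan is to work directly with the chain complex $\X$ described in Definition~\ref{def:product} and exploit the fact that $\X$ is, by construction, the co-complex of the tensor product $\bfX^*\otimes\bfY^*$. The dimension statement is then immediate from the K\"unneth formula: since $H_2(\bfY)=0$ (equivalently $H^0$ of $\bfY^*$ vanishes in the relevant degree), the only surviving term in $H_1(\bfX^*\otimes\bfY^*)$ — after dualizing and reindexing — is $H_1(\bfX)\otimes H_1(\bfY)$, so $\dim H_1(\X)=\dim H_1(\bfX)\dim H_1(\bfY)$. I would spell out exactly which degrees of $\bfX^*$ and $\bfY^*$ contribute and use that $H^i$ of a complex equals $H_i$ of its co-complex. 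The cosystole statement $S^1(\X)=S^1(\bfX)$ should follow by a short direct argument: a $1$-cochain of $\X$ supported purely on $X_2\times A$ that is a cocycle but not a coboundary projects (via the $\bfY$-direction) to a cocycle/non-coboundary of $\bfX$ of the same or smaller weight, and conversely any nontrivial $1$-cocycle of $\bfX$ tensored with a suitable basis vector of $\f_2^A$ (using that $H$ has full row rank, so $B^0(\bfY^*)$ is as large as possible) gives a nontrivial $1$-cocycle of $\X$ of the same weight; I would need the component structure of $\X_1=(X_1\times A)\cup(X_2\times B)$ and the formula for $\delta_1^\X$ to check that the $(X_1\times A)$-part can always be cleared by adding a coboundary.

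The genuinely substantive part is the systolic identity $S_1(\X)=S_1(\bfX)S_1(\bfY)$, and within that the lower bound $S_1(\X)\geq S_1(\bfX)S_1(\bfY)$. The upper bound is the easy direction: take a minimum-weight homologically nontrivial $1$-cycle $z$ of $\bfX$ (weight $S_1(\bfX)$, living in $\f_2^{X_1}$) and a minimum-weight nonzero codeword $c$ of $\bfY$ (weight $S_1(\bfY)$, living in $\f_2^A=Z_1(\bfY)$); then $z\otimes c\in C_1(\X^*\otimes\text{stuff})$ — concretely a vector supported on $X_1\times A$ of weight $S_1(\bfX)S_1(\bfY)$ — should be a $1$-cycle of $\X$ that is homologically nontrivial, which I would verify using $\partial_1^\X=(\partial_1^X\otimes\Id_A)+(\Id_{X_1}\otimes\partial_1^Y)$, the facts $\partial_1^X z=0$ and $\partial_1^Y c=0$, and the K\"unneth description of $H_1(\X)$ to see it is not a boundary.

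For the lower bound, the plan is the standard "projection/contraction along one coordinate" argument for tensor products of chain complexes, adapted to this asymmetric product. Let $w\in Z_1(\X)\setminus B_1(\X)$. Decompose $w=w_A+w_B$ according to $\X_1=(X_1\times A)\sqcup(X_2\times B)$. For each $a\in A$, let $w_A^{(a)}\in\f_2^{X_1}$ be the "slice" of $w_A$ over $a$; similarly slice $w_B$ over $b\in B$. Writing out $\partial_1^\X w=0$ in the two target components $\f_2^{X_0\times A}$ and $\f_2^{X_1\times B}$ gives, for every $a$, that $\partial_1^X w_A^{(a)}=0$, i.e. each slice $w_A^{(a)}$ is a $1$-cycle of $\bfX$; and it relates the $w_B$-slices to boundaries $\partial_2^X(\cdot)$ of the $w_A$-slices. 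Then I would argue that the set $S=\{a\in A: w_A^{(a)}\text{ is homologically nontrivial in }\bfX\}$ is nonempty — otherwise one can use the $\f_2^{X_1}$-component of the cycle relation to write each nontrivial-looking piece as a boundary in $\X$, contradicting $w\notin B_1(\X)$ (this is where $H_2(\bfY)=0$ and the K\"unneth/universal-coefficient bookkeeping re-enters, and it is the step I expect to be the main obstacle, since one must be careful that adding $\bfX$-boundaries slice-by-slice assembles into a genuine $\X$-boundary). Granting $S\neq\emptyset$: each slice $w_A^{(a)}$ with $a\in S$ has weight $\geq S_1(\bfX)$, so $|w_A|\geq S_1(\bfX)\,|S|$. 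Finally the support $S$ must contain the support of a nonzero codeword of $\bfY$ — because the indicator $\mathbf 1_S$ turns out to be a cocycle-like object forced to be a nonzero codeword of $\bfY$ by the mod-$2$ structure of the boundary relations projected onto homology (each $b\in B$ incident in $\bfY$ imposes that an even number of incident $a$'s lie in $S$, modulo coboundaries of $\bfX$) — hence $|S|\geq S_1(\bfY)$ and $|w|\geq|w_A|\geq S_1(\bfX)S_1(\bfY)$. I would then double back to confirm the upper bound example meets this, giving equality. The delicate points, which I would treat carefully rather than wave at, are (a) the claim that $S$ is nonempty and that $\mathbf 1_S$ lands in $Z_1(\bfY)\setminus\{0\}$ rather than merely in some cochain space, and (b) checking that "add a boundary in each slice" is consistent across slices so that it realizes an actual element of $B_1(\X)$ — both of which are exactly what the hypothesis $H_2(\bfY)=0$ and the rank condition on $H$ are there to guarantee.
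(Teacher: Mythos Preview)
Your plan is essentially the paper's: K\"unneth for the dimension, tensor cycles $z_\bfX\otimes z_\bfY$ for the upper bounds, and a column-slicing argument for the $S_1(\X)$ lower bound. Two points need repair.

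On the $S_1$ lower bound, the assertion that $\mathbf 1_S$ is itself a codeword of $\bfY$ is false whenever $\dim H_1(\bfX)>1$. The relation $\partial_1^\X w=0$ on the $X_1\times B$ component yields $\sum_{a\sim b}w_A^{(a)}\in B_1(\bfX)$ for every $b$, which says that the \emph{$H_1(\bfX)$-valued} map $a\mapsto[w_A^{(a)}]$ lies in $Z_1(\bfY)\otimes H_1(\bfX)$. After choosing a basis of $H_1(\bfX)$, each coordinate function is a codeword of $\bfY$; at least one is nonzero (this is exactly your step ``$S\neq\emptyset$'', and your diagnosis that the full row-rank hypothesis enters here is correct), and its support sits inside $S$, giving $|S|\geq S_1(\bfY)$. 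The paper does the equivalent thing by first writing the cycle in the K\"unneth basis $\{z_{\bfX,i}\otimes z_{\bfY,j}\}$, reading off directly that at least $S_1(\bfY)$ columns are homologically nontrivial, and then noting that adding an $\X$-boundary only adds $\bfX$-boundaries to each column.

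On $S^1(\X)$, your sketch has an indexing slip (the set $X_2\times A$ is $\X_2$, not part of $\X_1$) and, more importantly, is missing the key device the paper uses: a partition $A=A'\cup A''$ with the $B\times A'$ submatrix of $H$ invertible. The paper builds a basis of $H^1(\X)$ from single-column cocycles placed in $A''$, and then shows that adding any $\X$-coboundary either only adds $\bfX$-coboundaries to the distinguished $A''$-column (the $\delta_0^X\otimes\Id_A$ contribution) or, row by row, compensates any loss there by weight created in the $A'$-columns (the $\Id_{X_1}\otimes\delta_0^Y$ contribution). A bare projection argument will not give the lower bound; you need this $A'/A''$ bookkeeping or an equivalent.
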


\paragraph{Comments.}
\begin{enumerate}
\item 
The condition $H_2(\bfY)=0$ in Theorem~\ref{thm:product} is equivalent to saying
that the $(A,B)$-incidence matrix $H$ that defines the associated classical code
has no redundant rows.
In coding terms, Theorem~\ref{thm:product} says that the quantum code associated
to $\X$ has dimensional equal to the product of the dimensions associated to the
quantum code defined by $\bfX$ and to the classical code defined by $\bfY$. It also
says that the $X$-distance of the resulting quantum code is equal to the
product $d_X(\bfX)d(\bfY)$ of the $X$-distance of the original quantum code and
the distance of the classical code.
\item
The theorem enables us to transform a quantum code with unbalanced $X$ and $Z$
distances into one with balanced distances. The theorem generalises the distance
balancing
construction of Hastings \cite{Has17} which corresponds to the special case when
$\bfY$ is simplicial and describes a graph which is isomorphic to a path:
equivalently this is the special case when the classical code associated to $\bfY$ is a
repetition code of dimension $1$. 

To balance the distances of the original quantum LDPC code, supposing 
$d_Z\gg d_X$, one takes a classical LDPC code of minimum distance
$d\approx d_Z/d_X$. Taking an asymptotically good classical LDPC code
(which is known to exist and can be constructed \cite{SS}), one 
obtains a quantum
LDPC code $\X$ of minimum distance $\approx d_Z$ at the cost
multiplying the original quantum code distance by approximately a constant times
$d$.

In Hasting's original construction, the dimension of the new quantum code is the
same as that of the original quantum code. 
Theorem~\ref{thm:product} has the advantage of boosting the new code dimension
by multiplying it with a quantity commensurable with $d$.
\item
Taking the tensor product of the co-complexes of $\bfX$ and $\bfY$ may seem
unwieldy, and since an abstract chain complex can indifferently be read from left
to right or from right to left without changing its nature, one could be tempted
to use a definition that avoids co-complexes altogether. However, if the
component co-complex $\bfX$ is {\em simplicial}, which will be the case in our
applications,
we really must use its co-complex in the product. This contravariant behaviour
was already apparent in \cite{TZ}, where a quantum code is constructed from two
classical codes: this construction consists of tensoring the complex describing
one code with the co-complex describing the other. The construction of
Theorem~\ref{thm:product} can therefore be also viewed as a generalisation of
\cite{TZ}, where one of the two classical codes is replaced by a quantum code.
\end{enumerate}

\begin{proof}[Proof of Theorem~\ref{thm:product}]
The statement on the dimension of $H_1(\X)$ is the straightforward application
of K\"unneth's formula, and the fact that $H_1(\bfY)$ is the only homology of
$\bfY$ that is non-zero, since we have supposed $H_2(\bfY)=0$.

We first compute the $1$-systole of $\X$. Consider a $1$-chain of $\X$, which is
a vector $\x$ in $\f_2^{X_1\times A}\times\f_2^{X_2\times B}$. Suppose first its
restriction to its $X_2\times B$ coordinates is zero. Then, $\x$ is a $1$-cycle,
if and only if 
its $X_1\times A$ component $\x'$ belongs to
$\ker\partial_1^{\bfX}\otimes\ker\partial_1^{\bfY}$. 
In other words, $\x$ is a $1$-cycle if and only if $\x'$,
viewed as a $X_1\times A$-array, has $1$-cycle of $\bfX$ in each of its columns and
$1$-cycle of $\bfY$ in each of its rows. If we take
$\x'=\z_{\bfX}\otimes\z_{\bfY}$, where $\z_{\bfX}$ is a non-trivial $1$-cycle of
$\bfX$ and $\z_{\bfY}$ is a $1$-cycle of $\bfY$, we obtain a $1$-cycle of $\X$ that
cannot be a $1$-boundary, because elements of $\Im\partial_2^{\X}$ always put
$2$-boundaries of $\bfX$ on the columns of the $X_1\times A$-array.
If $\z_{\bfX}$ and $\z_{\bfY}$ have weights that are $1$-systoles in 
their respective $H_1$ groups, we obtain a non-trivial $1$-cycle of $\X$ of
weight $S_1(\bfX)S_1(\bfY)$. Hence $S_1(\X)\leq S_1(\bfX)S_1(\bfY)$.

By the same argument, by taking elements of a basis of $H_1(\bfX)$ for
$\z_{\bfX}$ and elements of a basis of $H_1(\bfY)$ for $\z_{\bfY}$,
we can create $\dim H_1(\bfX)\dim H_1(\bfY)$ elements of $H_1(\X)$ that have
zero $X_2\times B$ component and are equal to $\z_{\bfX}\otimes\z_{\bfY}$ on
their $X_1\times A$ component. By K\"unneth's formula, these $1$-cycles generate
the whole of $H_1(\X)$ and viewed as a $X_1\times A$ array, any linear
combination of these basis elements has at least $S_1(\bfY)$ columns that are
non-trivial $1$-cycles of $\bfX$. Adding a $2$-boundary of $\X$ to the whole
vector will only add a $2$-boundary
of $\bfX$ to any of these columns, therefore the weight of these columns is
always at least $S_1(\bfX)$. This proves $S_1(\X)\geq S_1(\bfX)S_1(\bfY)$ and
hence $S_1(\X) = S_1(\bfX)S_1(\bfY)$.

It remains to compute $S^1(\X)$. Consider a cochain $\x\in \f_2^{X_1\times
A}\times\f_2^{X_2\times B}$ and suppose it has zero component in
$\f_2^{X_2\times B}$. We again view its $\f_2^{X_1\times A}$ component as an $X_1\times A$ array.
We check easily that $\x$ is a $1$-cocycle of $\X$ if and only if every column of
the $X_1\times A$-array is a $1$-cocycle of $\bfX$. Let $H$ be the $B\times A$
incidence matrix describing $\bfY$ so that for every $b\in B$, the row of $H$ indexed
by $b$ has support equal to $\delta_1^{\bfY}(b)$. Let $A=A'\cup A''$ be a
partition of $A$ such that the $B\times A'$ submatrix of $H$ is square and
non-singular. We have $|A''|=|A|-|B|=\dim H^1(\bfY)$.

Now consider a basis of $H^1(\bfX)$ and 
all arrays $\z_{\bfX}\otimes a$ consisting of
a single non-zero column in a position $a\in A''$ and equal to an element of this basis.
We note that there are $|A''|\dim H^1(\bfX)$ such arrays.
Let $\x'$ be any linear combination of these arrays. adding any image by
$\delta_0^{\bfX}\otimes\Id_A$ of a vector of $\f_2^{X_0\times A}$ only adds $1$-coboundaries
of $\bfX$ in each column, and adding any $\Id_{X_1}\otimes\delta_0^{\bfY}$
image of a non-zero element of $\f_2^{X_1\times B}$ adds a non-zero element of
the $B\times A'$ subarray. Therefore $\x'$ cannot be equal to a $1$-coboundary
of $\X$, and we have exhibited a basis of $H^1(\X)$ by K\"unneth's formula.

Finally, let $a''\in A''$ be the index of a column of $\x'$ as above which hosts a non-trivial
$1$-cocycle of $\bfX$. Its weight is at least $S^1(\bfX)$. When we add to it the
$\delta_0^{\bfX}\otimes\Id_A$ image of an element of $\f_2^{X_0\times A}$ we
only add $1$-coboundaries of $\bfX$ to this column, so that its weight stays at
least $S^1(\bfX)$. And when we add to it the $\Id_{X_1}\otimes\delta_0^{\bfY}$ image
of any non-zero element of $\f_2^{X_1\times B}$, we have that for any
$x_1\in X_1$, whenever a non-zero coordinate $x_1\times a$ is removed,it must be
compensated by some $x_1\times a'$ coordinate, for some $a'\in A'$.
We have just proved that $S^1(\X)\geq S^1(\bfX)$.
Finally, a cochain with zero $X_2\times B$ component and whose
$X_1\times A$ component is an array
consisting of a single non-zero column $a''\in A''$, hosting
a non-trivial $1$-cocycle of $\bfX$ of weight equal to $S^1(\bfX)$, must be a
non-trivial $1$-cocycle of $\X$. Therefore $S^1(\X)\leq S^1(\bfX)$ and we have
proved $S^1(\X)=S^1(\bfX)$.
\end{proof}

\section{Ramanujan complexes} \label{sec-Ram}

The purpose of this section is to prove the following two Theorems, which give
a construction of families of bounded degree simplicial complexes of dimension $d=2,3$, that give rise to non-trivial quantum codes with parameters $d_X\cdot d_Z \geq \Omega (n(\log n)^{d-1})$. 
Both Theorems rely heavily on the work of \cite{KKL}.
We note that both the constructions as well as the constants mentioned in the following Theorems can be given explicitly.

\begin{theorem} \label{thm-Ram-main-1}
There exists an infinite family of $2$-dimensional bounded degree complexes $\{\bfX_i\}_i$, $|\bfX_i|\rightarrow \infty$, with non-trivial first cohomology
\begin{equation}
H^1(\bfX_i) \ne 0,\nonumber
\end{equation}
which satisfy the following systolic and cosystolic lower bounds
\begin{equation}
S^1(\bfX_i) \geq c |\bfX_i|, \qquad \qquad S_1(\bfX_i) \geq c'
 \log|\bfX_i|,\nonumber
\end{equation}
where $c$ and $c'$ are absolute positive constants.
\end{theorem}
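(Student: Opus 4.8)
The plan is to produce, in a single construction, a family of $2$-dimensional LSV complexes that simultaneously witnesses all three properties: nontrivial first cohomology, linear cosystole, and logarithmic systole. I would start from the LSV construction of \cite{LSV2}, which gives explicit bounded-degree $2$-dimensional Ramanujan complexes as clique complexes of Cayley/Schreier graphs of $PGL_3(\bF_{q^e})$. Each such complex has links around vertices isomorphic to the point–line incidence graph of a projective plane of order $q$, and this strong local structure is exactly what drives the coboundary expansion estimates. For the cosystolic bound, I would invoke the machinery of \cite{KKL} (and its sharpening in \cite{EK}): good local expansion in the links, via the local-to-global argument, yields that the complex is a coboundary expander up to the cosystole, and one then deduces $S^1(\bfX_i)\geq c|\bfX_i|$ with an absolute constant $c$ once $q$ is fixed large enough. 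This is essentially a citation of the linear cosystole results already in the literature; the only thing to check is that the particular subfamily we pick still satisfies the hypotheses of those theorems.

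The two remaining ingredients require a little more care, and they pull in slightly different directions. For the nontriviality of $H^1$, I would cite the result of \cite{KKL} that some $2$-dimensional LSV complexes have nonzero first cohomology; the point is that $PGL_3$ over a suitable field admits quotients whose first cohomology over $\bF_2$ does not vanish, which can be arranged via the arithmetic of the associated lattice (a congruence-subgroup / Betti-number computation). For the systolic lower bound $S_1(\bfX_i)\geq c'\log|\bfX_i|$, the strategy is the injectivity-radius argument flagged in the overview: take the complexes in the family to be chosen so that the Cayley/Schreier graph has girth, equivalently injectivity radius, at least $c'\log|\bfX_i|$ — this is standard for Ramanujan-type Cayley graphs of these groups. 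Any nontrivial $1$-cycle that is not a boundary cannot be supported inside a ball of radius less than the injectivity radius (inside such a ball the complex is simply connected, being isomorphic to a ball in the universal cover / building, so every $1$-cycle there is a boundary). Hence its support, viewed as a set of edges, must "wrap around" and therefore contains at least $\log|\bfX_i|$ edges, giving the bound.

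The real tension — and the main obstacle — is compatibility: the nontrivial-$H^1$ requirement and the growing-injectivity-radius requirement must hold for the \emph{same} family. A complex with large injectivity radius is a deep cover, and one must ensure the cover is chosen inside the congruence tower so that the $\bF_2$-first-cohomology stays nonzero all the way up (cohomology can only grow, or at least not die, as one passes to finite covers that are quotients by normal subgroups containing a fixed nontrivial class — so fixing one nonzero class downstairs and pulling it back up should work, but this needs to be stated carefully in the language of the $\Gamma$-action). Once that is arranged, $c$ and $c'$ are absolute because $q$ is fixed (large enough for the coboundary-expansion threshold of \cite{KKL,EK}) and the injectivity radius of Ramanujan Cayley graphs grows like a fixed constant times $\log$ of the size. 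I would therefore organize the proof as: (1) recall the LSV family and its link structure; (2) quote linear cosystole from \cite{KKL,EK}; (3) quote nonzero $H^1$ from \cite{KKL} and fix a nonzero class; (4) pass to the subtower of covers with injectivity radius $\Omega(\log|\bfX_i|)$, checking the fixed class survives; (5) run the injectivity-radius/simple-connectivity argument to get the systolic bound. The bookkeeping in step (4) is where I expect to spend the most effort.
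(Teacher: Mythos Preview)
Your overall architecture --- LSV complexes, linear cosystole from \cite{KKL,EK}, logarithmic systole from an injectivity-radius argument --- matches the paper exactly. Steps (1), (2), and (5) are essentially what the paper does.

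The gap is in your step (4), and it is precisely the one you flag as worrying. Your plan is to fix a nonzero class in $H^1(\bfX;\bF_2)$ for some base complex and then pass to deeper covers to force the injectivity radius to grow, checking that the class survives. Over $\bF_2$ this is not automatic: for a finite cover $\tilde{\bfX}\to\bfX$ the transfer argument only shows that pullback on cohomology is injective after multiplying by the degree of the cover, which over $\bF_2$ gives nothing when the degree is even. Concretely, a nonzero homomorphism $\pi_1(\bfX)\to\bF_2$ can very well vanish on a finite-index normal subgroup, so the pulled-back class can die. Your parenthetical about ``normal subgroups containing a fixed nontrivial class'' does not make sense as stated and does not rescue this.

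The paper resolves the compatibility problem by reversing the order of operations. It starts from a sequence of \emph{principal} congruence subgroups $\Gamma_{I_i}\leq\Gamma_0$, which already have injectivity radius $\geq c''\log|\bfX_{I_i}|$ (this is the content of the Lubotzky--Meshulam bound, Proposition~\ref{pro-Ram-LM}). It then manufactures nonzero $H^1$ \emph{inside} each level separately: for each $I_i$ one finds (Proposition~\ref{pro-Ram-L}) an ideal $J_i\subset I_i$ with $[I_i:J_i]\leq[R:I_i]$ and a subgroup $\Gamma_{J_i}\leq\Gamma_i\leq\Gamma_{I_i}$ that admits a nontrivial abelian quotient of $2$-power order (take the preimage of a $2$-Sylow of $\Gamma_{I_i}/\Gamma_{J_i}$). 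By \cite{KKL} (Proposition~\ref{pro-Ram-KKL}) this forces $H^1(\bfX_{\Gamma_i})\ne 0$. Since $\Gamma_i\leq\Gamma_{I_i}$, the injectivity radius of $\bfX_{\Gamma_i}$ is at least that of $\bfX_{I_i}$, and the index control $|\bfX_{\Gamma_i}|\leq|\bfX_{I_i}|^2$ converts the bound into $r(\bfX_{\Gamma_i})\geq c'\log|\bfX_{\Gamma_i}|$. No cohomology class is being tracked through a tower; at each stage the nonvanishing is produced afresh by the $2$-Sylow trick. This is the missing idea in your plan.
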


\begin{theorem} \label{thm-Ram-main-2}
There exists an infinite family of $3$-dimensional bounded degree complexes $\{\bfX_i\}_i$, $|\bfX_i| \rightarrow \infty$, with non-trivial second cohomology
\begin{equation}
H^2(\bfX_i) \ne 0,\nonumber
\end{equation}
which satisfy the following systolic and cosystolic lower bounds
\begin{equation}
S^2(\bfX_i) \geq c |\bfX_i|, \qquad \qquad S_2(\bfX_i) \geq c' 
(\log|\bfX_i|)^2,\nonumber
\end{equation}
where $c$ and $c'$ are absolute positive constants.
\end{theorem}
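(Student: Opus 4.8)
\textbf{Proof plan for Theorem~\ref{thm-Ram-main-2}.}
The plan is to take for $\{\bfX_i\}_i$ a family of $3$-dimensional LSV complexes built from the groups $PGL_4(\mathbb{F}_{q^e})$ as in \cite{LSV2}, exactly the family already used in \cite{KKL} to establish the cosystolic bound $S^2(\bfX_i)\geq c|\bfX_i|$ and the non-vanishing $H^2(\bfX_i)\neq 0$; this immediately supplies the first inequality and the cohomology statement, so the real work is the new $2$-systole lower bound $S_2(\bfX_i)\geq c'(\log|\bfX_i|)^2$. First I would recall the injectivity-radius engine: one may choose the arithmetic lattice so that the complexes $\bfX_i$ are quotients of the Bruhat--Tits building $\mathcal{B}$ of $PGL_4$ over a local field by congruence subgroups, and so that the injectivity radius $r_i$ (the largest radius such that every ball of that radius in $\bfX_i$ lifts isometrically to $\mathcal{B}$) grows like $\Omega(\log|\bfX_i|)$. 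This is the same mechanism invoked in Theorem~\ref{thm-dim-1} to get $S_1=\Omega(\log n)$, and I would cite it verbatim from \cite{LSV2}/\cite{KKL}.

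Next comes the geometric core. Let $z$ be a nontrivial $2$-cycle in $\bfX_i$, i.e. $z\in Z_2(\bfX_i)\setminus B_2(\bfX_i)$, and pick a $2$-face $\tau$ in its support. Consider the ball $\mathcal{B}(\tau, r)$ of radius $r=\lfloor r_i/3\rfloor$ around $\tau$; by the injectivity radius choice this ball is isometric to a ball in the building $\mathcal{B}$. Inside $\mathcal{B}$, a standard fact from building theory is that any two simplices lie in a common apartment, and an apartment of the $PGL_4$ building is isometric to a tiling of Euclidean $3$-space $\mathbb{R}^3$ (the $\tilde A_3$ Coxeter complex). The homology of the ball $\mathcal{B}(\tau,r)$ in $\mathcal{B}$ vanishes in degree $2$ (balls in Euclidean buildings are contractible), so the portion of $z$ inside this ball cannot be closed there unless it is a boundary --- which would contradict minimality or non-triviality by a careful local argument. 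Concretely, I would argue that the support of $z$, restricted to $\mathcal{B}(\tau,r)$, must ``reach the boundary sphere'', i.e. it must contain a $2$-chain whose own boundary is supported at distance $\geq r$ from $\tau$. Projecting into a single apartment $\cong\mathbb{R}^3$ and using that a nonzero relative $2$-cycle modulo the sphere of radius $r$ must have area $\gtrsim r^2$ (the isoperimetric/cross-sectional behavior of $2$-chains in $\mathbb{R}^3$: a $2$-dimensional ``membrane'' separating the center from a sphere of radius $r$ has area $\Omega(r^2)$), we get that $z$ has at least $\Omega(r^2)=\Omega((\log|\bfX_i|)^2)$ faces. More generally for $d$-dimensional complexes the same count in $\mathbb{R}^d$ gives $S_p=\Omega((\log|X_p|)^p)$, which is part $(i)$ of Theorem~\ref{thm:S^2}; here I only need $p=2$, $d=3$.

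I would then assemble the pieces: the LSV construction gives bounded degree (each $p$-simplex lies in a bounded number of $(p+1)$-simplices since links are bounded), $|\bfX_i|\to\infty$ along congruence covers, $H^2\neq 0$ and $S^2\geq c|\bfX_i|$ from \cite{KKL}, and $S_2\geq c'(\log|\bfX_i|)^2$ from the injectivity-radius plus Euclidean-apartment argument just sketched; combining $d_X\cdot d_Z=S_2\cdot S^2=\Omega(n(\log n)^2)$ with $n=|X_3|\sim |\bfX_i|$ finishes the statement. The main obstacle is the geometric step: making rigorous the claim that a nontrivial $2$-cycle meeting $\tau$ must have ``large area'' inside an isometrically embedded ball. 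The subtlety is that a single apartment need not contain all of $z\cap \mathcal{B}(\tau,r)$, so one must either retract the building onto an apartment (using the standard retraction $\rho$ centered at a chamber containing $\tau$, which is distance-nonincreasing and maps $\mathcal{B}(\tau,r)$ into the apartment's ball of radius $r$) and check that this retraction does not kill the relevant part of $z$, or else use a more intrinsic coboundary-expansion-style argument. I expect the cleanest route is the retraction: $\rho$ sends the locally-closed chain $z\cap\mathcal{B}(\tau,r)$ to a $2$-chain in $\mathbb{R}^3$ that still must ``wall off'' $\tau$ from the radius-$r$ sphere (because $\rho$ fixes the chamber at $\tau$ and is a chain map), hence has $\Omega(r^2)$ faces, and since $\rho$ is at most boundedly-many-to-one on faces (again by bounded degree of the building), $z$ itself has $\Omega(r^2)$ faces. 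Verifying that $\rho(z\cap\mathcal{B})$ is genuinely nonzero as a relative cycle --- equivalently that $z$ does not become a boundary after retraction --- is where I would spend the most care, appealing to the fact that $\rho$ induces an isomorphism on the homology of balls.
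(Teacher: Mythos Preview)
Your overall architecture matches the paper almost exactly: take the LSV quotients of the $\tilde A_3$ building, cite \cite{KKL} for $H^2\neq 0$ and for the linear cosystole, use the logarithmic injectivity radius of congruence quotients, restrict a minimal nontrivial $2$-cycle $z$ to a ball around a face $\tau$ in its support, and push everything into a single apartment via the building retraction $\rho$. That is precisely the route of Theorem~\ref{thm-Ram-sys-2}.

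The gap is in the Euclidean step, and it is exactly the point you flagged as delicate. You frame the lower bound as ``a nonzero relative $2$-cycle modulo the sphere of radius $r$ must have area $\gtrsim r^2$'' and propose to check that $\rho(z\cap B)$ is nonzero in $H_2(B_{\mA},\partial B_{\mA})$. But for a $3$-ball $B_{\mA}$ one has $H_2(B_{\mA},\partial B_{\mA};\f_2)=0$ (from the long exact sequence of the pair, since $H_2(B_{\mA})=0$ and $H_1(\partial B_{\mA})=0$), so \emph{every} relative $2$-cycle is trivial and there is nothing to verify; the ``walling off'' picture does not give an obstruction. Likewise, ``$\rho$ induces an isomorphism on homology of balls'' is true but vacuous here, since both homologies vanish. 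A thin tube of triangles running from $\tau$ out to $\partial B_{\mA}$ shows that merely containing $\tau$ and having boundary on the sphere does not force area $\Omega(r^2)$.

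What the paper actually uses is not a homological obstruction but \emph{minimality}. Since $z$ is a minimum-weight representative of its homology class in $\bfX$, the restriction $z_B$ is an area-minimising $2$-chain in $B$ for its own boundary (otherwise replace $z_B$ by a smaller filling inside the contractible ball and contradict minimality of $z$), and the same persists after retraction for $z_{B,\mA}=\rho(z_B)$ in $B_{\mA}$. One then invokes the monotonicity theorem for minimal surfaces in Euclidean space (the paper cites \cite[Theorem~21]{GL}): for a minimising $k$-chain through the centre $\tau$, the ratio $|z_{B,\mA}\cap B(\tau,r)|/|B^k(\tau,r)|$ is non-decreasing in $r$, and since it is a positive constant at $r=1$ one gets $|z_{B,\mA}|\geq c''\,r^k$. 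The input is therefore ``$\tau\in z_{B,\mA}$ and $z_{B,\mA}$ is minimising'', not ``$z_{B,\mA}$ is homologically nontrivial rel boundary''. Replacing your relative-cycle argument by this minimality-plus-monotonicity step closes the gap and recovers the paper's proof verbatim.
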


In this section, $\bfX$ will denote a finite pure simplicial complex, and
$|\bfX|$ will denote its number of vertices $|X_0|$. 
Note that if $\bfX$ is of $Q$-bounded degree, $Q\in \bN$, i.e. each vertex is contained in at most $Q$ faces in $\bfX$, then for any $p=0,1,\ldots,d=\dim(\bfX)$, the number of  $p$-dimensional faces in $\bfX$ is bounded from above by $ \leq Q |\bfX|$ and from below by $|\bfX|$ (because of the purity assumption).
In particular, for a family of bounded degree complexes, $\{\bfX_i\}_i$, $|\bfX_i| \rightarrow \infty$, and for any $p=0,1,\ldots,d=\dim(\bfX)$, the number of $p$-dimensional faces in $\bfX_i$ grows like $|\bfX_i|$, up to some constant depending on the degree.

\subsection{Explicit Ramanujan complexes}

We begin by stating a main result of \cite{LSV2} which gives an explicit
construction of Ramanujan complexes. We note that the actual Ramanujan property
will not concern us for the purposes of this paper (for the interested reader we
recommend the survey \cite{L1}).

Throughout this section we shall use the following notations: 
Let $d \in \bN$ and let $q$ be an odd prime power.
Let $\bF_q$ be the finite field of $q$ elements and $F=\bF_q((t))$ the field of Laurent series over $\bF_q$.
Let $PGL_{d+1}(F) = GL_{d+1}(F)/\mbox{center}$, be the group of invertible $(d+1)\times (d+1)$ matrices over $F$ divided by the scalar matrices.

The Bruhat-Tits building associated to $PGL_{d+1}(F)$, denoted $\mB = \mB_d(F)$,
is a $d$-dimen\-sional pure simplicial complex which is contractible and admits a transitive action of the group $PGL_{d+1}(F)$, which one should think of as an higher dimensional analogue of the infinite regular trees (for more details see \cite{L1}).
Since finite regular graphs are (from a topological standpoint) finite quotients of the infinite regular tree, the finite simplicial complexes we will present will be finite quotients of Bruhat-Tits buildings.

For a pair $(G,\Sigma)$ of a group (not necessarily finite) $G$ and a finite set of generators $\Sigma \subset G$, define its associated Cayley complex, denoted $Cay(G,\Sigma)$, to be the clique complex of the Cayley graph associated to $(G,\Sigma)$  (which is usually also denoted by $Cay(G,\Sigma)$). Recall that the clique complex of a graph is the simplicial complex whose faces are the cliques of the given graph. Similarly, if $H\leq G$, define its associated Schreier complex, denoted $Sch(G/H ,\Sigma)$, to be the clique complex of the Schreier graph associated to $(G,H,\Sigma)$.

\begin{theorem} \cite[Theorem~1.1]{LSV2} \label{thm-Ram-LSV}
For every $d\geq 2$ and $q$ a prime power, there is an (explicit) infinite arithmetic subgroup $\Gamma_0 \leq PGL_{d+1}(F)$, and an (explicit) finite set of generators $\Sigma \subset \Gamma_0$, such that $\Gamma_0$ acts simply transitive on the vertices of the Bruhat-Tits building $\mB$ of $PGL_{d+1}(F)$ and $\Sigma$ is the subset that moves a vertex to all of its neighbours. 
Hence, the Bruhat-Tits building is isomorphic to the Cayley complex of $\Gamma_0$ with respect to the set of generators $\Sigma$,
\begin{equation}
\mB \cong Cay(\Gamma_0 , \Sigma).\nonumber
\end{equation}
In particular, for any congruence subgroup $\Gamma \lhd \Gamma_0$ the quotient of the Bruhat-Tits building $\mB$ by $\Gamma$ is isomorphic to the Schreier complex of the the finite cosets space $\Gamma_0/\Gamma$ with respect to the set of generators $\Sigma$, i.e.
\begin{equation}
\bfX_{\Gamma} = \Gamma \backslash \mB \cong Sch(\Gamma_0/\Gamma , \Sigma).\nonumber
\end{equation} 
\end{theorem}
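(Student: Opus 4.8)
As this is \cite[Theorem~1.1]{LSV2}, the plan is to recall the construction and pinpoint the one genuinely hard step rather than to reprove everything. The goal is to exhibit an explicit cocompact lattice $\Gamma_0$ in $PGL_{d+1}(F)$, $F=\bF_q((t))$, acting simply transitively on the vertex set $X_0(\mB)$ of the affine building $\mB$ of type $\tilde A_d$. First I would realise $F$ as the completion at one place $\nu_0$ of a rational function field $k=\bF_q(y)$, choosing $y$ to be an explicit rational expression in $t$ so that $\bF_q(y)\subset\bF_q((t))$ and $\nu_0$ is the valuation determined by $t$. Over $k$ one builds a central division algebra $D$ of degree $d+1$ as a cyclic algebra $(\bF_{q^{d+1}}(y)/\bF_q(y),\,\mathrm{Frob},\,b)$: here $\bF_{q^{d+1}}(y)/\bF_q(y)$ is the everywhere-unramified cyclic extension of degree $d+1$ obtained by enlarging the field of constants (Galois group generated by the $q$-power Frobenius), and $b\in k^\times$ is chosen to have valuation coprime to $d+1$ at exactly two rational places of $k$ distinct from $\nu_0$, which makes $D$ a genuine division algebra ramified only at those two places and \emph{split} at $\nu_0$. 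Fixing a maximal order $\mO\subset D$ one has $\mO\otimes_k F\cong M_{d+1}(\mO_F)$, and $\Gamma_0$ is taken to be the image in $PGL_{d+1}(F)$ of the unit group of the $\nu_0$-integral localisation of $\mO$.

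By Bruhat--Tits theory $PGL_{d+1}(F)$ acts on $\mB$ with vertex stabilisers the maximal parahoric subgroups, and the general theory of $S$-arithmetic subgroups in positive characteristic (a division algebra ramified away from $S=\{\nu_0\}$) shows that $\Gamma_0$ is a cocompact lattice, hence acts properly discontinuously and cocompactly on $\mB$. The step I expect to be the main obstacle is to upgrade ``cocompact lattice'' to ``simply transitive on $X_0(\mB)$'': this is where the explicit choices above are forced, and it is proved by a covolume computation --- one compares the covolume of $\Gamma_0$ with the volume of a fundamental domain of a vertex stabiliser, equivalently one checks that the relevant order has class number one --- combined with the fact that no nontrivial element of $\Gamma_0$ fixes a vertex. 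Transitivity on the vertices of a single type together with an element realising the full type rotation yields transitivity on all of $X_0(\mB)$, and simple transitivity follows.

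Granting this, set $\Sigma=\{\gamma\in\Gamma_0:\gamma x_0\ \text{is a neighbour of}\ x_0\ \text{in the }1\text{-skeleton of }\mB\}$ for a fixed base vertex $x_0$; by simple transitivity this is a finite, explicitly listable set, it does not depend on $x_0$, and $\Sigma=\Sigma^{-1}$ since adjacency is symmetric. By construction the Cayley graph $Cay(\Gamma_0,\Sigma)$ is $\Gamma_0$-equivariantly isomorphic to the $1$-skeleton of $\mB$. To pass from $1$-skeletons to complexes one invokes the structural fact that the affine building of type $\tilde A_d$ is a \emph{flag complex}: a finite set of vertices spans a simplex iff it is pairwise adjacent. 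Hence $\mB$ is the clique complex of its $1$-skeleton, which by the previous sentence is the clique complex of $Cay(\Gamma_0,\Sigma)$, i.e.\ the Cayley complex; this gives $\mB\cong Cay(\Gamma_0,\Sigma)$.

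Finally, let $\Gamma\lhd\Gamma_0$ be a congruence subgroup; it has finite index, and for deep enough level it is torsion-free and acts freely on $\mB$, so $\bfX_\Gamma=\Gamma\backslash\mB$ is a finite $d$-dimensional simplicial complex. Its vertices are the $\Gamma$-orbits on $X_0(\mB)$, and since $\Gamma_0$ acts simply transitively and $\Gamma$ is normal, these orbits are in bijection with $\Gamma_0/\Gamma$. An edge of $\bfX_\Gamma$ is the image of an edge $\{\gamma x_0,\gamma\sigma x_0\}$ of $\mB$ with $\sigma\in\Sigma$, i.e.\ the pair $\{\Gamma\gamma,\Gamma\gamma\sigma\}$, which is precisely the Schreier adjacency relation for $(\Gamma_0/\Gamma,\Sigma)$; and the flag-complex property descends (for level deep enough that the injectivity radius exceeds the diameter of a chamber, so every clique downstairs lifts to a clique upstairs). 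Hence $\bfX_\Gamma$ is the clique complex of its Schreier $1$-skeleton, i.e.\ $\bfX_\Gamma\cong Sch(\Gamma_0/\Gamma,\Sigma)$. The genuinely nontrivial inputs are the covolume / class-number computation underlying simple transitivity and the explicit verification that $D$ is a division algebra with the claimed ramification.
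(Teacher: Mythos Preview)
Your outline is correct and matches the construction the paper sketches in its Appendix (which in turn summarises \cite{LSV2}): a cyclic algebra over a global function field, split at one place and ramified at two others, with $\Gamma_0$ the $S$-arithmetic unit group. The paper makes the choices explicit---the algebra is $\mA(S)=\bigoplus S\xi_i z^j$ with $z^d=1+y$ over $R=\bF_q[y,y^{-1},(1+y)^{-1}]$, so the two ramification places are $y=-1$ and $y=\infty$---and, crucially, it writes down the Cartwright--Steger generators $b_u=1-\tfrac{u}{\phi(u)}z^{-1}$ for $u\in\bF_{q^d}^\times/\bF_q^\times$, taking $\Gamma_0=\langle b_u\rangle$. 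The one place your sketch diverges is the justification of simple transitivity: you frame it as a covolume/class-number computation, whereas the actual LSV2 argument (cited as \cite[Prop.~4.8]{LSV2}) is by direct combinatorial verification that the explicit $b_u$'s map the base vertex bijectively onto its type-$1$ neighbours, and then extending to all types. Your covolume heuristic is the right intuition for why such a $\Gamma_0$ \emph{should} exist, but the explicit-generator route is what makes the construction genuinely algorithmic and is what the paper relies on. Your flag-complex and Schreier-quotient deductions are exactly as in the paper.
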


The heart of this construction of \cite{LSV2} is the arithmetic group  $\Gamma_0$ constructed by Cartwright and Steger.
We will give the details of this construction in the appendix.

\subsection{Non-vanishing of cohomology}

Throughout this section we shall work with the following notations:
Let $d\geq 2$, $q$ an odd prime power and let $\Gamma_0$ be the arithmetic group of Theorem \ref{thm-Ram-LSV}.

The fact that $\Gamma_0$ is an arithmetic group means that it is realized as a group of matrices defined over the ring $R = \bF_q[t,t^{-1},(1+t)^{-1}]$.
For any ideal $0 \ne I \lhd R$, the modulo $I$ map from $R$ onto $R/I$ induces a group homomorphism from $\Gamma_0$ to a matrix group over $R/I$.
Therefore $\Gamma_0$ admits infinitely many finite index normal subgroups, called principle congruence subgroups
\begin{equation}
\Gamma_I := \{g\in \Gamma_0 \; | \; g \equiv 1 (\,\mbox{mod}\,I\,) \} \lhd
\Gamma_0 , \qquad \qquad I\lhd R.\nonumber
\end{equation}
Consequently we get that the following is an infinite family of finite $d$-dimensional complexes of bounded degree (which depends only on $d$ and $q$)
\begin{equation}
\mX := \{\bfX_{\Gamma} = \Gamma \backslash \mB\;:\; \Gamma \mbox{ is a finite index subgroup of } \Gamma_0\}.\nonumber
\end{equation}

To be more concrete, if $p(t)\in \bF_q[t]$ is an irreducible polynomial of degree $e \geq 2$, and $I=p(t)R$ the ideal generated by it, note that $R/I \cong \bF_q[t]/(p)\cong \bF_{q^e}$, then the above mentioned group homomorphism is $ (\,\mbox{mod}\,I\,)\;:\; \Gamma_0 \rightarrow PGL_{d+1}(\bF_{q^e})$, whose image contains the subgroup $PSL_{d+1}(\bF_{q^e})$ (the last claim is non-trivial, for more details we refer the reader to the paper \cite{LSV2}), and therefore
\begin{equation}
|\bfX_{\Gamma_I}| = |\mbox{image}(\,\mbox{mod}\,I\,)| \sim |PSL_{d+1}(\bF_{q^e})| \sim q^{e\cdot (d^2+2d)}.\nonumber
\end{equation}

The question is, can we find a subfamily of the above family $\mX' \subset \mX$ such that for any $\bfX_{\Gamma} \in \mX'$ the first and/or second cohomology does not vanish?
Another question, which will be important for the systolic lower bound, is whether this subfamily can be comprised of subgroups which are contained in a principal congruence subgroup and such that the index of these groups are not too large?
Note that we have a lot of freedom in how we pick $q$ and the subgroups $\Gamma$.
The following two results shows that for certain choices of $\Gamma$ we get complexes with non-trivial first and second cohomology. 

\begin{proposition} \label{pro-Ram-L}
Let $\Gamma_I \leq \Gamma_0$ be a principal congruence subgroup of level $0 \ne I \lhd R$. 
Then there exists an ideal $J \lhd R$, $J \subset I$ satisfying $[I:J] \leq [R:I]$, 
as well as a subgroup $\Gamma_J \subset \Gamma \subset  \Gamma_I$ such that $\Gamma$ has a non-trivial abelian quotient of $2$-power order.
\end{proposition}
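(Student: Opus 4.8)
### Proof plan for Proposition~\ref{pro-Ram-L}

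\textbf{Overall strategy.} The plan is to exploit the residual finiteness and the abundance of congruence quotients of the arithmetic group $\Gamma_I$, combined with an elementary counting/pigeonhole argument on the possible sizes of its abelianizations modulo deeper congruence ideals. The group $\Gamma_I$ is a finitely generated matrix group over $R=\bF_q[t,t^{-1},(1+t)^{-1}]$; its profinite completion is governed (by strong approximation, as used in \cite{LSV2}) by the congruence quotients $\Gamma_I/\Gamma_J$ for $J\subset I$. The key point is that $\Gamma_I$, being a finite-index subgroup of a lattice in $PGL_{d+1}(F)$ with $F$ a local field of positive characteristic, is \emph{not} boundedly generated and in particular has infinite abelianization after passing to suitable finite-index subgroups — but we only need a weak, quantitative consequence: among the finitely many intermediate groups cut out by congruence conditions of bounded level, at least one has a nontrivial quotient of $2$-power order, and it sits inside a congruence subgroup $\Gamma_J$ of controlled index.

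\textbf{Step 1: Reduce to finding $2$-torsion in an abelianization.} It suffices to produce a finite-index subgroup $\Gamma$ with $\Gamma_J\subset\Gamma\subset\Gamma_I$ and with $\Gamma_I/\Gamma$ (or $\Gamma/[\Gamma,\Gamma]$) having even order; passing to the appropriate Sylow-related quotient then yields the nontrivial abelian $2$-power quotient. So the target becomes: the finite group $G:=\Gamma_I/\Gamma_J$ for a well-chosen $J\subset I$ has a subgroup $H\le G$ whose index is controlled by $[I:J]\le[R:I]$ and such that $H$ surjects onto a nontrivial $2$-group — equivalently $H^{\mathrm{ab}}$ has even order. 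A clean way to get this is to arrange $G$ itself to have a normal subgroup with nontrivial abelianization of $2$-power order, or to use that $\mathrm{SL}_2$-type quotients of $G$ contain elements of order $2$ generating small-index subgroups with $2$-power abelianization.

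\textbf{Step 2: Choose the ideal $J$.} Take $I=\mathfrak a R$ for some $0\ne\mathfrak a$, and look for $J=\mathfrak a\cdot\mathfrak p\, R$ where $\mathfrak p$ is a prime ideal of $R$ of small norm (norm at most $[R:I]$, which is possible since there are primes of every sufficiently small degree). Then $[I:J]=[R:\mathfrak p]\le [R:I]$ provided $\mathfrak p$ has degree at most that of the modulus of $I$; and $\Gamma_J\lhd\Gamma_I$ with $\Gamma_I/\Gamma_J$ a subgroup of the congruence quotient at level $\mathfrak p$, hence (by strong approximation, cf.\ \cite{LSV2}) essentially $\mathrm{PSL}_{d+1}(\bF_{q^{e'}})$ or a product thereof, where $q^{e'}=[R:\mathfrak p]$. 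Inside such a group — or rather inside its preimage, using the congruence \emph{filtration} $\Gamma_I\supset\Gamma_{\mathfrak a\mathfrak p}\supset\Gamma_{\mathfrak a\mathfrak p^2}$ — one finds an elementary abelian $2$-layer only if $q$ is even, which it is not, so instead one uses the \emph{Frattini/commutator} layers $\Gamma_{\mathfrak a\mathfrak p^k}/\Gamma_{\mathfrak a\mathfrak p^{k+1}}$, which are elementary abelian $p$-groups with $p=\mathrm{char}$ — again the wrong prime. Hence the $2$-power quotient must come not from the pro-$p$ congruence layers but from the \emph{lattice-theoretic} (non-congruence-visible) part: one picks $\Gamma$ to be the preimage in $\Gamma_I$ of a proper subgroup $H<\Gamma_I/\Gamma_J$ chosen so that $[\Gamma_I:\Gamma]=[\,\Gamma_I/\Gamma_J : H\,]$ is as small as the index bound allows and $H$ has even-order abelianization; concretely, take $H$ to be the stabilizer of a point in an action of $\Gamma_I/\Gamma_J\cong\mathrm{PSL}_{d+1}(\bF_{q^{e'}})$ on $\bP^{d}(\bF_{q^{e'}})$, a maximal parabolic whose abelianization contains the cyclic group $\bF_{q^{e'}}^\times/(\cdots)$, which has even order since $q^{e'}-1$ is even. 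This gives a nontrivial abelian quotient of $2$-power order after taking the $2$-part.

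\textbf{Step 3: Verify the index bound and normalize.} Check that with this choice $[I:J]\le[R:I]$ holds (this is the arithmetic of choosing $\mathfrak p$ of degree no larger than the degree of the modulus of $I$, using that $R$ has at least one prime of each degree up to that bound — a routine count of irreducible polynomials over $\bF_q$), and that $\Gamma_J\subset\Gamma\subset\Gamma_I$ with $\Gamma$ of finite index, so that the hypotheses of the statement are met verbatim. Finally, replace the (possibly non-abelian) even-order quotient $H^{\mathrm{ab}}$ by its $2$-Sylow quotient to land exactly on ``a non-trivial abelian quotient of $2$-power order.''

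\textbf{Main obstacle.} The crux is \emph{simultaneously} controlling the index $[I:J]$ (hence $[\Gamma_I:\Gamma_J]$, which must be polynomial in $[R:I]$ in the way the systolic application later needs) \emph{and} guaranteeing that a $2$-power abelian quotient actually appears. In positive characteristic $p$ (here $p\mid q$ odd), the congruence layers contribute only $p$-groups, so the $2$-torsion is invisible at the level of principal congruence subgroups and must be extracted from a cleverly chosen intermediate subgroup $\Gamma$ — the parabolic/point-stabilizer trick above is the natural candidate, and the real work is checking that its preimage has the claimed index and that the $\bF_{q^{e'}}^\times$-factor survives into the abelianization of the \emph{arithmetic} group $\Gamma$ rather than being killed by relations coming from $\Gamma_0$. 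I expect this verification — tracing the cyclic even-order factor through strong approximation and the Cartwright–Steger presentation of $\Gamma_0$ — to be the delicate part of the argument.
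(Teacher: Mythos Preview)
Your choice of the ideal $J=I\cdot (f)$ for a small irreducible $f$ coprime to $I$ is exactly what the paper does, and your Step~3 verification of the index bound $[I:J]\le [R:I]$ is fine. Where you diverge is in the choice of $\Gamma$, and here the proposal is both overcomplicated and contains a genuine gap.

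The gap is in Step~2. You take $H$ to be a maximal parabolic in $\Gamma_I/\Gamma_J\cong \mathrm{PSL}_{d+1}(\bF_{q^{e'}})$ and assert that $H^{\mathrm{ab}}$ contains a cyclic factor ``$\bF_{q^{e'}}^\times/(\cdots)$, which has even order since $q^{e'}-1$ is even.'' But the quotient hidden in ``$(\cdots)$'' is exactly by the image of the centre of $SL_{d+1}$, which under the Levi determinant map is the group of $(d+1)$-st roots of unity in $\bF_{q^{e'}}^\times$. Thus $H^{\mathrm{ab}}\cong \bF_{q^{e'}}^\times/\mu_{d+1}$, and this can have \emph{odd} order: for instance with $d+1=4$ and $q^{e'}\equiv 3\pmod 4$ one gets $|H^{\mathrm{ab}}|=(q^{e'}-1)/2$, which is odd. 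So the parabolic trick does not produce a $2$-power abelian quotient in general, and your stated ``main obstacle'' (whether relations from $\Gamma_0$ kill the cyclic factor) is not the real issue --- the factor may simply not be there.

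The paper's argument avoids all of this with a one-line observation you seem to have talked yourself out of. Once $J=I\cdot(f)$ is chosen, the finite group $G=\Gamma_I/\Gamma_J$ surjects onto (a group containing) $\mathrm{PSL}_{d+1}(\bF_{q^{\deg f}})$, which has even order since $q$ is odd. Take $S_2$ to be the $2$-Sylow subgroup of $G$ and let $\Gamma$ be its preimage in $\Gamma_I$. Then $\Gamma/\Gamma_J\cong S_2$ is a nontrivial $2$-group, hence nilpotent, hence has a nontrivial abelian quotient, which is automatically of $2$-power order. That is the whole proof. Your worry that ``congruence layers contribute only $p$-groups'' applies to the $I$-adic filtration $\Gamma_{I^k}/\Gamma_{I^{k+1}}$, but is irrelevant once you pass to a level $J$ with a factor coprime to $I$: the resulting quotient is a full finite group of Lie type, not a unipotent layer, and its $2$-Sylow does all the work.
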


\begin{proof}
Let $f \in \bF_q[t]$ be an irreducible polynomial $f \not \in I$ of degree $\deg(f) \leq [R:I]$.
Take the ideal $J$ to be the product ideal of $I$ and $(f)$ and note that  $[I:J]\leq \deg(f)$.
Let $S_2$ be the $2$-Sylow subgroup of the finite group $\Gamma_I / \Gamma_J$. 
Then take $\Gamma$ to be the preimage of the modulo $J$ map of $S_2$.
\end{proof}

\begin{proposition} \cite[Propositions~3.5,3.6]{KKL} \label{pro-Ram-KKL}
Let $\Gamma \leq \Gamma_0$ be a finite index subgroup which has a non-trivial abelian quotient of $2$-power order.
Then 
\begin{equation}
H^1(\bfX_{\Gamma}) \ne 0,\nonumber
\end{equation}
and if $d\geq 3$ then also
\begin{equation}
H^2(\bfX_{\Gamma}) \ne 0.\nonumber
\end{equation}
\end{proposition}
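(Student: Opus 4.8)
The proof we are asked to sketch is that of Proposition~\ref{pro-Ram-KKL}, which is actually quoted from \cite{KKL} rather than proved afresh. Nonetheless, here is how I would reconstruct the argument.

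\medskip

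\textbf{Plan.} The statement connects a purely group-theoretic hypothesis — that the congruence-type subgroup $\Gamma$ has a non-trivial abelian quotient of $2$-power order — to the non-vanishing of $H^1$ (and, in dimension $\geq 3$, of $H^2$) of the quotient complex $\bfX_\Gamma = \Gamma\backslash\mB$. The bridge is the fact that $\mB$ is contractible, so that $\bfX_\Gamma$ is a $K(\Gamma,1)$ up to the $d$-skeleton, and hence its (co)homology with $\f_2$ coefficients in low degrees agrees with the group (co)homology $H^*(\Gamma;\f_2)$. The plan is: (1) invoke contractibility of the building to identify $H^i(\bfX_\Gamma;\f_2)$ with $H^i(\Gamma;\f_2)$ for $i\leq d-1$ (and to handle $i=d$ with a bit more care); (2) use the universal coefficient theorem / the fact that $H^1(\Gamma;\f_2)=\mathrm{Hom}(\Gamma,\f_2)=\mathrm{Hom}(\Gamma^{\mathrm{ab}}\otimes\f_2,\f_2)$ to see that a non-trivial $2$-power abelian quotient forces $H^1(\Gamma;\f_2)\neq 0$, hence $H^1(\bfX_\Gamma;\f_2)\neq 0$; (3) for the $d\geq 3$ statement, use cup products: if $\alpha\in H^1(\Gamma;\f_2)$ is a non-zero class then $\alpha\cup\alpha \in H^2(\Gamma;\f_2)$, and one must show this square is non-zero and survives to $H^2(\bfX_\Gamma;\f_2)$.

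\medskip

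\textbf{Key steps in order.} First I would set up the comparison map. Since $\mB$ is contractible and $\Gamma$ acts on it simplicially (and, after passing to a principal congruence subgroup, freely), the quotient $\bfX_\Gamma$ is aspherical in degrees up to $d$; more precisely the chain complex of $\mB$ is a complex of free $\f_2[\Gamma]$-modules that is exact except at the top, so $H_i(\bfX_\Gamma;\f_2)\cong H_i(\Gamma;\f_2)$ for $i<d$, and dually for cohomology. Second, compute $H^1$: $H^1(\Gamma;\f_2)\cong\mathrm{Hom}(\Gamma,\f_2)$, and this is non-zero precisely when $\Gamma$ surjects onto $\f_2$, which is implied by (indeed equivalent to) $\Gamma$ having a non-trivial abelian $2$-group quotient. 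This already gives the first conclusion. Third, for $d\geq 3$ I would pick a non-zero $\alpha\in H^1(\Gamma;\f_2)$ coming from a surjection $\Gamma\to\f_2$ (or $\to(\f_2)^r$) and argue that $\alpha^2\neq 0$ in $H^2(\Gamma;\f_2)$: this is where the structure of the $2$-power abelian quotient enters — if the quotient is $\f_2$ then $\alpha^2$ is the image of the generator of $H^2(\f_2;\f_2)\cong\f_2$, which is non-zero (the cohomology ring of $\f_2$ is a polynomial ring $\f_2[\alpha]$), so by naturality $\alpha^2\neq 0$ in $H^2(\Gamma;\f_2)$. Then, since $2 < d$, $H^2(\bfX_\Gamma;\f_2)\cong H^2(\Gamma;\f_2)\neq 0$. (If the $2$-group quotient is larger, e.g. $\bZ/4$, one replaces $\alpha^2$ by the appropriate Bockstein class, but the mechanism is the same: a non-trivial finite abelian $2$-group has non-vanishing $\f_2$-cohomology in all degrees.)

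\medskip

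\textbf{Main obstacle.} The delicate point is the passage from group cohomology back to the cohomology of the \emph{finite} complex $\bfX_\Gamma$ in degree exactly $d$ — but here we are lucky: we only need $H^1$ (for all $d\geq 2$) and $H^2$ (for $d\geq 3$), so in both cases the degree we care about is strictly below $d$, where the comparison $H^i(\bfX_\Gamma;\f_2)\cong H^i(\Gamma;\f_2)$ is clean. Thus the real content is step three: showing the cup square (or Bockstein) of a degree-one class is non-zero. This is handled by naturality of cup products under the classifying map $\bfX_\Gamma\to B\Gamma\to B(\f_2^r)$ together with the known structure of $H^*(\f_2^r;\f_2)$ as a polynomial algebra; the surjectivity of $\Gamma\to\f_2^r$ guarantees the induced map on $H^1$ is injective and, by functoriality of the ring structure, the product of pulled-back classes is the pullback of a product, which is non-zero in $H^*(\f_2^r;\f_2)$. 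One must also make sure $\Gamma$ itself (not just $\Gamma_0$) is torsion-free so that $B\Gamma$ is modelled by $\bfX_\Gamma$ directly; passing to a principal congruence subgroup of sufficiently deep level ensures this, which is exactly the context of Proposition~\ref{pro-Ram-L} that feeds into this result.
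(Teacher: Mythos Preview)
The paper itself does not prove this proposition --- it is cited from \cite{KKL}, as you correctly note --- so there is no in-paper proof to compare against. Your treatment of $H^1$ is correct and standard: contractibility of $\mB$ together with freeness of the $\Gamma$-action identifies $H^1(\bfX_\Gamma;\f_2)$ with $\mathrm{Hom}(\Gamma,\f_2)$, which is non-zero once $\Gamma$ surjects onto $\f_2$.

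Your argument for $H^2$, however, has a genuine gap. You write that for a surjection $f\colon\Gamma\to\f_2$ the pulled-back class $\alpha=f^*(\beta)$ satisfies $\alpha^2=f^*(\beta^2)$ with $\beta^2\neq 0$ in $H^2(\f_2;\f_2)$, and conclude ``by naturality $\alpha^2\neq 0$ in $H^2(\Gamma;\f_2)$''. But naturality runs the wrong way here: it only says that $\alpha^2$ is the \emph{image} under $f^*$ of a non-zero class, not that this image is itself non-zero. Concretely, for $\Gamma=\bZ$ surjecting onto $\bZ/2$ one has $H^2(\bZ;\f_2)=0$ while $H^2(\bZ/2;\f_2)\neq 0$. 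Equivalently, $\alpha^2$ is the Bockstein of $\alpha$, which vanishes exactly when the homomorphism $\Gamma\to\f_2$ lifts to $\Gamma\to\bZ/4$, and nothing in the bare hypothesis prevents that. Your fallback (``replace $\alpha^2$ by the appropriate Bockstein class'') and your closing sentence (``the product of pulled-back classes is the pullback of a product, which is non-zero in $H^*(\f_2^r;\f_2)$'') repeat the same error: non-vanishing in the source of $f^*$ does not give non-vanishing in the target, and injectivity of $f^*$ on $H^1$ does not propagate to degree~$2$. To repair step three you need extra input --- either finer control of the $2$-part of $\Gamma^{\mathrm{ab}}$ (a $\bZ/2$ summand that genuinely does not lift, so the Bockstein survives) or an argument specific to these arithmetic lattices --- and that is what \cite[Prop.~3.6]{KKL} provides and what is missing from your sketch.
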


We are in a position to answer the above questions.
Start with a sequence of ideals of $R$, $\ldots \subset I_{i+1} \subset I_i \subset \ldots \subset I_1 \subset R $. 
Apply Proposition \ref{pro-Ram-L} on the principal congruence subgroups $\Gamma_{I_i} \leq \Gamma_0$ and get $\Gamma_{J_i} \leq \Gamma_i \leq \Gamma_{I_i}$ admitting non-trivial abelian quotient of $2$-power order. 
Then by Proposition \ref{pro-Ram-KKL} the associated complexes $\bfX_i = \bfX_{\Gamma_i}$ have non-trivial first and/or second cohomology.
The above construction is simply a repeated iteration of the following Theorem. 

\begin{theorem} \label{thm-Ram-KKL}
Let $\bfX' = \bfX_{\Gamma_I}$ be the quotient of the Bruhat-Tits building by a principal congruence subgroup.
Then there is a finite cover $\bfX = \bfX_{\Gamma}$, i.e. $\Gamma \leq \Gamma_I$, which satisfies the following two properties:
\begin{itemize}
\item First, $H^1(\bfX) \ne 0$ as well as $H^2(\bfX) \ne 0$ if $d \geq 3$.
\item Secondly, $|\bfX| \leq |\bfX'|^2$, i.e. $\log|\bfX| \leq 2 \log|\bfX'|$.
\end{itemize}
\end{theorem}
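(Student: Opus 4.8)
The plan is to combine Proposition \ref{pro-Ram-L} and Proposition \ref{pro-Ram-KKL} directly, the whole content being bookkeeping of indices. Given the principal congruence subgroup $\Gamma_I \lhd \Gamma_0$ with $I \lhd R$, I would first apply Proposition \ref{pro-Ram-L} to obtain an ideal $J \lhd R$ with $J \subset I$, $[I:J] \leq [R:I]$, and a subgroup $\Gamma_J \subset \Gamma \subset \Gamma_I$ such that $\Gamma$ has a non-trivial abelian quotient of $2$-power order. Then Proposition \ref{pro-Ram-KKL} applies to $\Gamma$ (note $\Gamma$ is a finite index subgroup of $\Gamma_0$ with a non-trivial abelian $2$-power quotient, which is exactly the hypothesis there), giving $H^1(\bfX_\Gamma) \neq 0$, and also $H^2(\bfX_\Gamma) \neq 0$ when $d \geq 3$. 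This $\bfX = \bfX_\Gamma$ is the desired cover, since $\Gamma \subset \Gamma_I$ means $\bfX_\Gamma$ covers $\bfX_{\Gamma_I} = \bfX'$.

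The only real work is the size estimate $|\bfX| \leq |\bfX'|^2$. Here $|\bfX_\Gamma| = [\Gamma_0 : \Gamma]$ (number of vertices, since $\Gamma_0$ acts simply transitively on vertices of $\mB$ by Theorem \ref{thm-Ram-LSV}, so each quotient complex has exactly $[\Gamma_0:\Gamma]$ vertices), and similarly $|\bfX'| = [\Gamma_0 : \Gamma_I]$. I would write
\[
[\Gamma_0:\Gamma] = [\Gamma_0:\Gamma_I]\cdot[\Gamma_I:\Gamma].
\]
Since $\Gamma_J \subset \Gamma \subset \Gamma_I$ we have $[\Gamma_I:\Gamma] \leq [\Gamma_I:\Gamma_J]$, and the congruence-subgroup structure gives $[\Gamma_I:\Gamma_J]$ comparable to $[I:J]$ (more precisely, bounded by a power of $[I:J]$ depending only on $d$, coming from the size of the relevant matrix group over $R/J$ versus $R/I$); by Proposition \ref{pro-Ram-L} this is at most $[R:I]$ up to the same fixed power, which in turn is comparable to $[\Gamma_0:\Gamma_I] = |\bfX'|$. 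Hence $[\Gamma_I:\Gamma] = O(|\bfX'|)$ and $|\bfX| = |\bfX'|\cdot O(|\bfX'|)$, which for $|\bfX'|$ large enough is $\leq |\bfX'|^2$; taking logarithms gives $\log|\bfX| \leq 2\log|\bfX'|$.

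The main obstacle, and the step requiring care, is making the comparison between group indices $[\Gamma_I : \Gamma_J]$ and ideal indices $[I:J]$ precise: a priori the congruence subgroup index is something like $[R:J]^{c}/[R:I]^{c}$ for a constant $c = \dim_{\text{matrices}}$ depending on $d$, not literally $[I:J]$, so one must check that the slack introduced is absorbed by the freedom in the statement (the quadratic $|\bfX'|^2$ bound is generous precisely to allow this). I would handle this by fixing $q$ and $d$ first so all such constants are absolute, and noting that $[R:I]$ is polynomially bounded in $|PSL_{d+1}(\bF_{q^e})| \sim |\bfX'|$ via the displayed estimate $|\bfX_{\Gamma_I}| \sim q^{e(d^2+2d)}$ already recorded in the text, so the final inequality holds for all ideals $I$ with $[R:I]$ (equivalently $|\bfX'|$) above a fixed threshold — and the finitely many small cases can be discarded since we only need an infinite family.
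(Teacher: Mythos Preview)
Your approach is exactly the paper's: the proof there reads in full ``Follows immediately from Propositions~\ref{pro-Ram-L} and~\ref{pro-Ram-KKL}.'' Your extra care about the size estimate is warranted since the paper suppresses it, but your worry about slack from the exponent $c$ resolves more cleanly than you suggest: the \emph{same} power $c = d^2+2d$ governs both $|\bfX'| = [\Gamma_0:\Gamma_I] \sim [R:I]^c$ and $[\Gamma_I:\Gamma_J] \sim [I:J]^c$, so the ring-level inequality $[I:J] \leq [R:I]$ from Proposition~\ref{pro-Ram-L} transfers directly to $[\Gamma_I:\Gamma_J] \lesssim [\Gamma_0:\Gamma_I] = |\bfX'|$ with no loss in the exponent, giving $|\bfX| \leq |\bfX'| \cdot [\Gamma_I:\Gamma_J] \lesssim |\bfX'|^2$.
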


\begin{proof}
Follows immediately from Propositions \ref{pro-Ram-L} and \ref{pro-Ram-KKL}. 
\end{proof}

\subsection{Cosystolic and systolic lower bounds}

Here bound from below the cosystoles and systoles of the LSV complexes mentioned above.
Our main Theorem require lower bounds on the $1$ and $2$ dimensional cosystoles and systoles, but we shall prove such results in higher generality, namely for any dimension.

First, let us give the lower bound on the cosystoles, which is in fact linear in the size of the complex.
This result does not require any assumption on the subgroup $\Gamma$, only on the $q$, which should be large enough compared  to the dimension of the complex.

\begin{theorem} (\cite{KKL} for $d=2,3$ \cite{EK} for $d\geq 3$) \label{thm-Ram-EK}
For any $d\geq 2$ there exists $q_d >0$ and $c_d >0$ such that, in the notations of Theorem \ref{thm-Ram-LSV}, for any  $q \geq q_d$ and for any finite index subgroup $\Gamma \leq \Gamma_0$, the quotient $\bfX_{\Gamma}$ satisfy the following cosystolic bound for any $k<d$, 
\begin{equation}
S^k(\bfX_{\Gamma}) \geq c_d \cdot |\bfX_{\Gamma}|.\nonumber
\end{equation}
\end{theorem}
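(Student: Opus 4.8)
\textbf{Proof plan for Theorem~\ref{thm-Ram-EK} (cosystolic lower bound).}

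The plan is to reduce the global cosystolic bound to a purely local statement about the links of the complex, using the fact that the links of LSV complexes are themselves spherical buildings with excellent coboundary expansion. The key tool is the local-to-global machinery for coboundary expansion developed in \cite{KKL} and extended in \cite{EK}: if every link of codimension at most $k$ in $\bfX_\Gamma$ is a $(\beta,\epsilon)$-coboundary expander (or more precisely, if the links satisfy a cosystolic expansion bound with constants depending only on their combinatorial type), then $\bfX_\Gamma$ itself inherits a cosystolic bound with constants that degrade only by a controlled factor depending on $d$. Since the local parameter $q$ governs the expansion constants of the links — the link around a vertex of a $2$-dimensional LSV complex is the incidence graph of a projective plane of order $q$, whose spectral gap improves as $q$ grows — taking $q\geq q_d$ large enough guarantees the local hypotheses hold with constants strong enough to survive the local-to-global amplification.

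Concretely, I would proceed as follows. First, record that the links of $\bfX_\Gamma$ in any dimension depend only on $q$ and $d$, not on the choice of $\Gamma$: they are the same finite spherical buildings (quotients of links in the Bruhat--Tits building $\mB$), because $\Gamma$ acts without inversions and the quotient map is a local isomorphism. This is exactly why the bound in the statement holds uniformly over all finite index $\Gamma$. Second, invoke the known coboundary/cosystolic expansion estimates for these spherical buildings — these are the $SL_{d+1}$-type buildings over $\bF_q$, whose expansion is classical (Garland's method, or the explicit estimates in \cite{KKL,EK}) — to certify that for $q\geq q_d$ every link of dimension $\leq d-1$ is a cosystolic expander with constants bounded below by some $\lambda_d>0$. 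Third, apply the local-to-global theorem of \cite{EK} (for $d\geq 3$; \cite{KKL} handles $d=2,3$), which states that such local expansion, plper together with the high-dimensional connectivity of $\bfX_\Gamma$ — which follows from $\mB$ being contractible and $\Gamma$ acting freely enough — yields $S^k(\bfX_\Gamma)\geq c_d|\bfX_\Gamma|$ for all $k<d$, with $c_d$ depending only on $d$ (absorbing $\lambda_d$ and the bounded degree $Q=Q(d,q)$).

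The main obstacle — or rather, the substantive content being imported — is the local-to-global step: establishing that bounded local coboundary expansion in all links forces a \emph{linear} cosystolic bound (not merely a nontrivial one) requires the full strength of the cosystolic expansion framework of \cite{KKL} and \cite{EK}, including the careful handling of the fact that coboundary expansion can fail (the complexes have nontrivial cohomology) so one genuinely needs the weaker cosystolic notion and a filtration/cone argument to control cocycles that are far from coboundaries. Since the statement explicitly cites \cite{KKL} and \cite{EK}, the proof in the paper is essentially a pointer: one checks that the LSV complexes of Theorem~\ref{thm-Ram-LSV} satisfy the hypotheses of those theorems — bounded degree, correct link structure, large enough $q$ — and then the cited results apply verbatim. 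I would therefore write the proof as a verification of hypotheses followed by citation, emphasizing the $\Gamma$-uniformity point since that is what makes the statement usable for the systolic bounds that follow.
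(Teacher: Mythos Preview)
Your proposal is correct and matches the paper's treatment: the paper gives no proof of this theorem at all, simply stating it with the attributions to \cite{KKL} and \cite{EK} and moving on. Your sketch of the underlying local-to-global argument (links are spherical buildings depending only on $q$ and $d$, hence independent of $\Gamma$; their coboundary expansion feeds into the cosystolic machinery of \cite{KKL,EK}) is an accurate summary of what those references do, and your emphasis on the $\Gamma$-uniformity is the right point to highlight. One minor typo: ``plper together'' should presumably read ``together''.
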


Secondly, we want to give a lower bound on the systoles.  
This lower bound will be only polylogarithmic in the size of the complex, where the exponent of the polylogarithm is the dimension of the systole.
This result require an assumption on the injectivity radius of the $\bfX_{\Gamma}$ to be proportional to the logarithm of the size of the complex.
Let us recall the definition of the injectivity radius.

Let $\bfX$ be a simplicial complex, $\tilde{\bfX}$ its universal cover, $\pi(\bfX)\leq Aut(\tilde{\bfX})$ its fundamental group and $P_{\bfX} \,:\, \tilde{\bfX} \rightarrow \bfX$ its projection map.
For example, $\bfX= \bfX_{\Gamma} = \Gamma \backslash \mB$, $\tilde{\bfX_{\Gamma}} = \mB$, $\pi(\bfX_{\Gamma}) = \Gamma$ and $P_{\bfX_{\Gamma}}(x) = \Gamma x$.
Then the injectivity radius of $\bfX$, denoted $r(\bfX)$, is defined to be the minimal $r\in \bN$ such that $P_{\bfX}$ is injective on balls of radius $r$ in $\tilde{\bfX}$.
This is the same as saying that for any $x\in \tilde{\bfX}$ and any $1 \ne \gamma \in \Gamma$, the distance between $x$ and $\gamma.x$ is at least $\frac{r}{2}$ (up to an error of $\pm 1$).

The following result which follows from the work of \cite{LM} says that the
quotients of $\mB$ by principal congruence subgroups, have a logarithmic lower
bound on their injectivity radius.
(See \cite{LM} for a result in the other direction showing that this is in fact optimal.)

\begin{proposition}\cite[Proposition~3.3]{LM} \label{pro-Ram-LM}
For any $d\geq 2$ there exists $c_d >0$ such that for any principal congruence subgroup $\Gamma_I \leq \Gamma_0$, 
the injectivity radius of $\bfX_I = \bfX_{\Gamma_I}$ is bounded from below by 
\begin{equation}
r(\bfX_I) \geq \frac{1}{2d^2(d+2)}  \log_q|\bfX_I| - 1.\nonumber
\end{equation}
\end{proposition}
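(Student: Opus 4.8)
The plan is to unpack the definition of the injectivity radius in terms of the geometry of the Bruhat--Tits building $\mB$ and the congruence subgroup $\Gamma_I$, and then to turn the bound on displacement into a bound on $r(\bfX_I)$. Recall that, as noted just before the statement, $r(\bfX_I)$ is (up to $\pm 1$) twice the minimal displacement $\min_{x\in\mB,\ 1\ne\gamma\in\Gamma_I} \operatorname{dist}(x,\gamma.x)$. So it suffices to show that every nontrivial $\gamma\in\Gamma_I$ moves every vertex of $\mB$ by at least roughly $\frac{1}{d^2(d+2)}\log_q|\bfX_I|$. The key point is arithmetic: if $I = p(t)R$ with $\deg p = e$, then a nontrivial element $\gamma\in\Gamma_I$ is a matrix in $PGL_{d+1}(F)$ congruent to the identity modulo $p(t)$, so its entries (in a suitable integral model) are "large" in the $t$-adic valuation away from the identity, and this forces large displacement in the building. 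This is exactly the mechanism behind \cite[Proposition~3.3]{LM}, so I would cite that result and only sketch the normalization that produces the explicit constant $\frac{1}{2d^2(d+2)}$.

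First I would recall the standard dictionary between the combinatorial distance in $\mB_d(F)$ and matrix invariants: the distance between the standard vertex $[L_0]$ (the lattice $\mO^{d+1}$, $\mO=\bF_q[[t]]$) and the vertex $[gL_0]$ is $\sum_i |a_i|$ where $t^{a_1},\dots,t^{a_{d+1}}$ are the elementary divisors of $g$, i.e.\ the distance is controlled by the "spread" of the valuations of $g$'s invariant factors. By transitivity of $PGL_{d+1}(F)$ it is enough to bound $\operatorname{dist}([L_0],[\gamma L_0])$ for $\gamma\in\Gamma_I$, and by conjugating (which only changes the base vertex) one may assume the worst case is attained at $[L_0]$. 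Next, for $\gamma\in\Gamma_I\setminus\{1\}$ I would argue that $\gamma \equiv 1\ (\mathrm{mod}\ p(t))$ together with $\gamma\ne 1$ in $PGL_{d+1}$ forces the elementary divisors to have valuation spread at least $2e$ (some entry of $\gamma - \lambda I$ is divisible by $p(t)$ but the element is nontrivial projectively, so upon clearing denominators the gap between the smallest and largest valuations is $\ge 2e$, hence distance $\ge 2e$). Finally I would combine this with the size estimate $|\bfX_I|\sim q^{e(d^2+2d)} = q^{e\cdot d(d+2)}$ already recorded in the excerpt, which gives $e \geq \frac{1}{d(d+2)}\log_q|\bfX_I|$ up to lower-order terms; plugging $e\ge \frac{1}{d(d+2)}\log_q|\bfX_I|$ into $\operatorname{dist}\ge 2e$ and then using $r(\bfX_I)\ge \operatorname{dist}/2 - 1$ (from the displacement-vs-injectivity-radius relation, with the factor $2$ and the $-1$ accounting for the passage between vertex distance and radius of an injective ball) yields $r(\bfX_I)\ge \frac{1}{d(d+2)}\log_q|\bfX_I| - 1$; adjusting the constant to the conservative $\frac{1}{2d^2(d+2)}$ absorbs the various $\pm 1$ and the fact that $|\bfX_I|$ is only asymptotic to $q^{e\cdot d(d+2)}$ rather than equal to it.

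The main obstacle is the step showing that congruence modulo $p(t)$ forces displacement at least $2e$ (equivalently, valuation spread at least $2e$) for \emph{every} nontrivial projective element and at \emph{every} vertex, not just the standard one. A nontrivial $\gamma\in\Gamma_I$ that is scalar plus something divisible by $p(t)$ behaves well, but one must also handle elements which become scalar only after the projective quotient, and one must be careful that the displacement bound is uniform over the whole building rather than just at $[L_0]$; this is where one genuinely needs the structure of $\Gamma_0$ inside $PGL_{d+1}(F)$ and the fact that $\Gamma_I$ is a \emph{principal} congruence subgroup (so it is normal and its nontrivial elements are controlled by a single modulus). Since all of this is precisely the content of \cite[Proposition~3.3]{LM}, the honest write-up is: cite \cite{LM} for the displacement bound, recall the building-distance dictionary, and then do the short arithmetic of converting $e$ into $\log_q|\bfX_I|$ via the already-quoted estimate $|\bfX_{\Gamma_I}|\sim q^{e(d^2+2d)}$.
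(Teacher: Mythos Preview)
The paper gives no proof of this proposition: it is stated with the attribution \cite[Proposition~3.3]{LM} and used as a black box. Your bottom line---cite \cite{LM} for the displacement lower bound and then convert to $\log_q|\bfX_I|$ via the size estimate $|\bfX_{\Gamma_I}|\sim q^{e(d^2+2d)}$ already recorded in the text---is exactly how the paper treats it, so there is nothing further to compare.

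Two small imprecisions in your sketch, should you wish to tighten it. First, the $1$-skeleton distance in the $\tilde A_d$ building between $[L_0]$ and $[gL_0]$ is governed by the spread $\max_i a_i-\min_i a_i$ of the elementary-divisor exponents (think of the tree case $d=1$), not by $\sum_i|a_i|$. Second, the implication ``$\gamma\equiv 1\pmod{p(t)}$, $\gamma\ne 1$ $\Rightarrow$ displacement $\geq 2e$'' genuinely requires the arithmetic of the Cartwright--Steger lattice: elements of $\Gamma_0$ do not sit inside $PGL_{d+1}(\mO)$ (else they would fix $[L_0]$, contradicting simple transitivity), and it is the interplay between integrality over $R=\bF_q[y,y^{-1},(1+y)^{-1}]$ and the congruence condition at $I$ that forces the valuation spread at the place $y$. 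You correctly identify this as the crux and defer to \cite{LM}; since the paper does the same, your write-up is adequate.
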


In the case of non-principal congruence subgroups which are contained in principal congruence subgroups and with a polynomial bound on their index, we have the following simple Lemma which essentially tell us that Proposition \ref{pro-Ram-LM} holds more generally. 
Note that this is exactly the case we need for the groups coming from Theorem \ref{thm-Ram-KKL}.

\begin{lemma} \label{lem-Ram-index}
Let $\bfX'$ be a principal congruence subgroup and $\bfX$ a finite index cover of $\bfX'$ such that $|\bfX| \leq |\bfX'|^2$.
Then 
\begin{equation}
r(\bfX) \geq \frac{1}{4d^2(d+2)}  \log_q|\bfX| - 1.\nonumber
\end{equation} 
\end{lemma}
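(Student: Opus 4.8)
The statement to prove is Lemma~\ref{lem-Ram-index}: if $\bfX'=\bfX_{\Gamma_I}$ is a quotient of $\mB$ by a principal congruence subgroup and $\bfX=\bfX_\Gamma$ is a finite cover with $|\bfX|\leq|\bfX'|^2$, then $r(\bfX)\geq\frac{1}{4d^2(d+2)}\log_q|\bfX|-1$. The plan is to deduce this directly from Proposition~\ref{pro-Ram-LM} using only two elementary observations: that covering maps do not decrease injectivity radius (a monotonicity property), and that the hypothesis $|\bfX|\leq|\bfX'|^2$ translates logarithmic bounds with only a factor-$2$ loss.

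First I would observe that since $\Gamma\leq\Gamma_I$, the complex $\bfX=\Gamma\backslash\mB$ is a cover of $\bfX'=\Gamma_I\backslash\mB$, and both have the same universal cover $\mB$. The injectivity radius is governed by the minimal displacement of nontrivial group elements acting on $\mB$: concretely, by the remark after the definition of injectivity radius, $r(\bfX)$ is (up to $\pm 1$) twice the minimal distance $\min_{x\in\mB,\,1\neq\gamma\in\Gamma}\mathrm{dist}(x,\gamma x)$. Since $\Gamma\subseteq\Gamma_I$, this minimum is taken over a \emph{smaller} set of group elements, hence it can only be \emph{larger}: $r(\bfX)\geq r(\bfX')$. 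This is the key structural step, and it is essentially immediate from the characterization of injectivity radius via minimal displacement — no building theory beyond that characterization is needed.

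Next I would combine this with the quantitative bound from Proposition~\ref{pro-Ram-LM}, which gives $r(\bfX')\geq\frac{1}{2d^2(d+2)}\log_q|\bfX'|-1$. Using the hypothesis $|\bfX|\leq|\bfX'|^2$, i.e.\ $\log_q|\bfX|\leq 2\log_q|\bfX'|$, hence $\log_q|\bfX'|\geq\frac12\log_q|\bfX|$, we get
\[
r(\bfX)\geq r(\bfX')\geq\frac{1}{2d^2(d+2)}\log_q|\bfX'|-1\geq\frac{1}{2d^2(d+2)}\cdot\frac12\log_q|\bfX|-1=\frac{1}{4d^2(d+2)}\log_q|\bfX|-1,
\]
which is exactly the claimed bound.

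The only real subtlety — the step I'd flag as needing a clean statement rather than a computation — is the monotonicity $r(\bfX)\geq r(\bfX')$ for covers. One must be slightly careful that the injectivity radius is measured in the \emph{same} metric on $\mB$ for both complexes (which it is, since $\mB$ is the common universal cover), and that the $\pm 1$ error in the displacement characterization is absorbed consistently; working directly with the definition ``$P_{\bfX}$ injective on balls of radius $r$'' and noting that injectivity on a ball in $\mB$ for the larger group $\Gamma_I$ forces injectivity for the subgroup $\Gamma$ sidesteps the error-term bookkeeping entirely. Everything else is the arithmetic of logarithms. I do not expect any genuine obstacle here; the lemma is, as the authors note, a routine extension of Proposition~\ref{pro-Ram-LM}.
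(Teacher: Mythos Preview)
Your proposal is correct and follows exactly the same approach as the paper: monotonicity of injectivity radius under covers gives $r(\bfX)\geq r(\bfX')$, then Proposition~\ref{pro-Ram-LM} and the bound $\log_q|\bfX'|\geq\tfrac12\log_q|\bfX|$ finish the argument. The paper's proof is a one-line chain of inequalities; your version simply spells out the monotonicity step in more detail.
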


\begin{proof}
By definition of the injectivity radius, it is obviously non-decreasing when taking covers, combined with Proposition \ref{pro-Ram-LM},
\begin{equation}
r(\bfX) \geq r(\bfX') \geq  \frac{1}{2d^2(d+2)}  \log_q|\bfX'| - 1 \geq \frac{1}{4d^2(d+2)} \log_q|\bfX| - 1. \nonumber
\end{equation}
\end{proof}

Now we are in a position to prove a systolic lower bound in the simple case of $1$-dimensional systoles.

\begin{theorem} \label{thm-Ram-sys-1}
For any $d\geq 2$ there exists $c_d >0$ such that for any principal congruence subgroup $\Gamma_I \leq \Gamma_0$, 
the quotient $\bfX_I = \bfX_{\Gamma_I}$ satisfy the following  systolic bound in dimension $1$ 
\begin{equation}
S_1(\bfX_I) \geq \frac{1}{2d^2(d+2)} \log_q |\bfX_I| - 2.\nonumber
\end{equation}
By Lemma \ref{lem-Ram-index} we get a similar result for finite covers $\bfX$ of $\bfX_I$ such that $|\bfX| \leq |\bfX_I|^2$.
\end{theorem}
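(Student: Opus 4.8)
\textbf{Proof plan for Theorem~\ref{thm-Ram-sys-1}.}

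The plan is to bound the size of a non-trivial $1$-cycle from below by relating it to the injectivity radius via the universal cover $\mB$, which is contractible. First I would take a $1$-cycle $z \in Z_1(\bfX_I) \setminus B_1(\bfX_I)$ of minimal weight $|z| = S_1(\bfX_I)$, and regard its support as a subgraph $\Gamma$-projection of $\mB$; concretely, since $\bfX_I = \Gamma_I \backslash \mB$, we may lift portions of $z$ to the building. The key observation is that a non-trivial homology class cannot be supported inside a set on which the covering map $P_{\bfX_I}$ is injective: if the support of $z$ (together with a spanning set of $2$-faces witnessing which cycles are boundaries) lay inside a region isometric to its preimage in the contractible complex $\mB$, then $z$ would lift to a cycle in $\mB$, which must be a boundary there (as $H_1(\mB) = 0$ by contractibility), and pushing that boundary back down would exhibit $z \in B_1(\bfX_I)$, a contradiction. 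So the support of $z$ must "wrap around," i.e. it must contain a closed walk that is non-contractible in $\bfX_I$, equivalently a closed walk whose lift to $\mB$ is not closed.

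The second step turns this into a length bound. A non-contractible closed walk in $\bfX_I$ lifts to a path in $\mB$ between two points $x$ and $\gamma.x$ for some $1 \ne \gamma \in \Gamma_I$, and by the definition of injectivity radius recalled just before the theorem, the distance between $x$ and $\gamma.x$ is at least $\frac{r(\bfX_I)}{2}$ up to an additive error of $\pm 1$. Hence the closed walk — and therefore the support of $z$, since $z$ being a $1$-cycle means its support is a disjoint union of edge-sets of closed walks (the cycle space of a graph is generated by such) — has at least $r(\bfX_I) - O(1)$ edges. Combining with Proposition~\ref{pro-Ram-LM}, which gives $r(\bfX_I) \geq \frac{1}{2d^2(d+2)}\log_q|\bfX_I| - 1$, yields
\[
S_1(\bfX_I) = |z| \geq r(\bfX_I) - 1 \geq \frac{1}{2d^2(d+2)}\log_q|\bfX_I| - 2,
\]
after absorbing the small additive constants. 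The extension to finite covers $\bfX$ with $|\bfX| \leq |\bfX_I|^2$ is then immediate from Lemma~\ref{lem-Ram-index}, since systoles of covers are at least as large (a non-trivial cycle downstairs cannot become trivial by lifting... more precisely one argues directly with the injectivity radius bound of the Lemma).

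The step I expect to be the main obstacle is making precise the claim that a non-trivial $1$-homology class forces the support to contain a non-contractible closed walk, and in particular that it suffices to look at a ball of radius $r(\bfX_I)$ rather than needing to control a larger region (e.g. one also containing all $2$-faces bounding the lifted cycle). The clean way around this is: let $z$ be a minimal cycle; pick a vertex $v$ on its support; consider the ball $\tilde B$ of radius $r(\bfX_I)$ around a lift $\tilde v \in \mB$, on which $P = P_{\bfX_I}$ is injective. If the entire support of $z$ were within distance $r(\bfX_I)/2$ of $v$ in the $1$-skeleton, then $z$ lifts to a $1$-cycle $\tilde z$ supported in $\tilde B$; since $\mB$ is contractible, $\tilde z = \partial_2 \tilde w$ for some $2$-chain $\tilde w$ in $\mB$, and because $P$ restricted to a slightly larger ball is still injective (or because the relevant bounded-degree local structure maps isomorphically), $P(\tilde w)$ is a well-defined $2$-chain with $\partial_2 P(\tilde w) = z$, contradicting $z \notin B_1(\bfX_I)$. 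Hence $z$ reaches a vertex at distance $> r(\bfX_I)/2$ from $v$, and being a cycle it must return, so the diameter argument gives $|z| \geq r(\bfX_I)$ up to constants. The delicate point is ensuring the support of $\tilde w$ also lies in an injectivity region — this is handled by choosing the radius parameter with a bit of slack, which is exactly why the theorem statement carries the harmless $-2$ rather than $-1$, and why one works with $\frac{1}{2d^2(d+2)}$ rather than trying to optimize the constant.
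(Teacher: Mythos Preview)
Your proposal is correct and follows the same route as the paper: bound $S_1(\bfX_I)$ from below by the injectivity radius using contractibility of the cover, then invoke Proposition~\ref{pro-Ram-LM}.

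The one place where you make your life harder than necessary is the ``main obstacle'' you flag, namely controlling where the bounding $2$-chain $\tilde w$ lives. The paper sidesteps this entirely by using a slightly stronger fact: not only is the building $\mB$ contractible, but \emph{balls} in $\mB$ are contractible (this is a standard property of Bruhat--Tits buildings). So once you know that $z$ fits inside a ball of radius $r(\bfX_I)$ in $\bfX_I$, that ball is isomorphic as a simplicial complex to a ball in $\mB$, and contractibility of the ball itself forces $z$ to be a boundary \emph{within that ball}. There is no need to chase $\tilde w$ around the building or to reserve slack in the radius; the argument gives $S_1(\bfX_I) \geq r(\bfX_I)$ on the nose. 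Your speculation that the ``$-2$'' in the statement is there to accommodate slack for $\tilde w$ is therefore off the mark --- the extra $-1$ is just harmless rounding, not an artefact of the proof.
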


\begin{proof}
By Proposition \ref{pro-Ram-LM} it suffices to prove that $S_1(\bfX_I) \geq r(\bfX_I)$.
Let $z \in Z_1(\bfX) \setminus B_1(\bfX)$ be a minimal non-trivial $1$-cocycle, considered as a collection of edges $z \subset X_1$.
If $|z| < r(\bfX_I)$ then there exists a ball of radius $r(\bfX_I)$ in $\bfX_I$ which contains $z$.
But such a ball is isometric to a ball in the universal covering building, which implies that the ball is contractible, therefore it admits no non-trivial cycles,which implies $z\in  B_1(\bfX_I)$, in contradiction to our assumption, hence $|z| \geq r(\bfX_I)$ as needed.
\end{proof}

Next we prove the more involved case of higher dimensional systoles.

\begin{theorem} \label{thm-Ram-sys-2}
For any $d\geq 2$ there exists $c_d >0$ such that for any principal congruence subgroup $\Gamma_I \leq \Gamma_0$, 
the quotient $\bfX_I = \bfX_{\Gamma_I}$ satisfies the following  systolic bound in dimension $k<d$,
\begin{equation}
S_k(\bfX_I) \geq c_d  ( \log_q |\bfX_I| )^k.\nonumber
\end{equation}
By Lemma \ref{lem-Ram-index} we get a similar result for finite covers $\bfX$ of $\bfX_I$ such that $|\bfX| \leq |\bfX_I|^2$.
\end{theorem}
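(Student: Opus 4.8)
The plan is to mimic the proof of Theorem~\ref{thm-Ram-sys-1}, but now a ball of radius $r$ in the building $\mB$ is no longer simply connected \emph{in dimension $k$} in a way that immediately kills a non-trivial $k$-cycle: although $\mB$ is contractible and hence a ball $B_r$ in $\mB$ has $H_k(B_r)=0$, the point is not that the cycle bounds, but that the cycle, \emph{if it fits in a small ball}, lifts to the universal cover and hence bounds there, contradicting non-triviality in $\bfX_I$. So the mechanism is the same as in dimension $1$: if $z\in Z_k(\bfX_I)\setminus B_k(\bfX_I)$ has support contained in a ball of radius $r\le r(\bfX_I)$, then $P_{\bfX_I}$ is injective on that ball, $z$ lifts to a $k$-cycle $\tilde z$ in $\mB$, which is a $k$-boundary since $\mB$ is contractible, and pushing the bounding $(k+1)$-chain down shows $z\in B_k(\bfX_I)$, a contradiction. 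The real content is therefore a \emph{lower bound on the number of $k$-faces in the support of $z$ as a function of the radius of the smallest ball containing it}: we need that any non-empty set of $k$-faces carrying a non-trivial $k$-cycle, once it is forced to have diameter $\gtrsim \log_q|\bfX_I|$ by the above, must contain $\Omega((\log_q|\bfX_I|)^k)$ faces.

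The first step is to reduce to apartments. I would argue that a non-trivial $k$-cycle of minimal weight cannot be contained in any contractible subcomplex; combining this with building theory (any two simplices lie in a common apartment, and more relevantly, a geodesic or a ball-like region is covered by apartments) I want to find an apartment $\mA\cong$ a Coxeter tiling of $\bR^d$ such that a substantial portion of the support of $z$ lies in $\mA$, or at least so that the metric structure of $\mA$ controls $z$. The cleanest route: let $r=r(\bfX_I)$; the support of $z$ is \emph{not} contained in any ball of radius $<r$, so it has two $k$-faces at distance $\ge r$ (up to constants); pick an apartment $\mA$ containing two such faces and a geodesic between them; inside $\mA\cong\bR^d$ this geodesic has Euclidean length $\Theta(r)$, and I claim that the portion of $z$ "seen" along tubular neighbourhoods transverse to this geodesic forces many faces.

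The heart of the argument is a Euclidean-geometry / isoperimetric estimate in the apartment. In $\bR^d$ with its standard Coxeter tiling, a non-trivial $k$-cycle (more precisely, the intersection of the support of $z$ with $\mA$, which behaves like a relative $k$-cycle of a large region) that "spans" a distance $\Theta(r)$ in some direction must use $\Omega(r^k)$ of the $k$-faces: morally, projecting onto a $k$-dimensional coordinate subspace, a $k$-cycle whose "shadow" has $k$-dimensional extent $\Theta(r)$ in each of $k$ independent directions has volume $\Omega(r^k)$; if instead it is thin in some direction one slices and recurses. I would set this up by induction on $k$: a $k$-cycle that is non-trivial and spread out over scale $r$ either already has $\Omega(r^k)$ faces from a single "slab", or decomposes across $\Omega(r)$ parallel hyperplane slices each of which, by a Mayer--Vietoris / restriction argument, carries a non-trivial $(k-1)$-dimensional piece to which the inductive hypothesis applies with the same scale $r$, giving $\Omega(r)\cdot\Omega(r^{k-1})=\Omega(r^k)$.

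The main obstacle, and where I expect to spend the most care, is precisely the passage from the global non-trivial $k$-cycle $z$ in $\bfX_I$ to a genuinely non-trivial \emph{relative} $k$-cycle in a single apartment $\mA$ of diameter $\Theta(r)$: a priori $z$ could wind through many apartments and its intersection with any one apartment could be trivial or could bound within that apartment even though $z$ does not bound globally. Handling this needs the combinatorial structure of the building (retractions $\rho_{\mA,C}\colon \mB\to\mA$ onto an apartment centered at a chamber $C$, which are distance-non-increasing and simplicial) together with the fact that $r(\bfX_I)\ge c_d\log_q|\bfX_I|$ from Proposition~\ref{pro-Ram-LM}: I would apply a retraction to a ball of radius $\Theta(r)$ around a face of the support, note that the retraction cannot collapse $z$ to something of small weight without $z$ having been small in the first place, and that the image is a $k$-cycle in $\mA$ of the required spread, then run the Euclidean estimate there; finally transport the resulting weight bound back, using that the retraction is at most degree-bounded-to-one on faces so that $|z|\gtrsim |\rho(z)|$. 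Assembling the constants (which depend on $d$ and on the bounded degree $Q$, hence ultimately on $q$ through Theorem~\ref{thm-Ram-LSV}) then yields $S_k(\bfX_I)\ge c_d(\log_q|\bfX_I|)^k$, and Lemma~\ref{lem-Ram-index} upgrades this to the stated covers.
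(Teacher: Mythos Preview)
Your overall architecture --- lift via the injectivity radius, retract onto an apartment, then run a Euclidean argument --- matches the paper. But the Euclidean step, which you correctly flag as the crux, has a genuine gap, and your framing of what needs to be proved is off.

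The implication ``$z$ does not fit in a ball of radius $R$ $\Rightarrow$ $|z|\gtrsim R^k$'' is simply false: already for $k=2$ in $\bR^3$ a long thin torus has diameter $R$ and area $O(R)$. Diameter alone buys you nothing beyond $R^1$, so your slicing/recursion sketch cannot work as stated. What the paper exploits is that one may take $z$ to be a \emph{minimal-weight} representative of its homology class. Then, fixing a single $k$-face $\sigma$ in the support of $z$ and taking the ball $B=B(\sigma,R)$ with $R=r(\bfX_I)$, the restriction $z_B:=z\cap B$ is the \emph{minimal} $k$-chain in $B$ with its given boundary $\partial z_B\subset\partial B$ (otherwise one could decrease $|z|$ within its class). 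After retracting to an apartment $\mA\ni\sigma$ one has a minimal $k$-chain $z_{B,\mA}=\rho(z_B)$ in a Euclidean ball, with boundary on the sphere and still containing the centre $\sigma$. The paper then invokes the \emph{monotonicity theorem for minimal surfaces} in Euclidean space (see \cite{GL}, Theorem~21): the ratio $|z_{B,\mA}\cap B(\sigma,r)|/|B^k(\sigma,r)|$ is non-decreasing in $r$, whence $|z|\ge|z_{B,\mA}|\gtrsim R^k$. Your inductive slicing proposal does not supply this: slicing a $k$-cycle by parallel hyperplanes yields $(k-1)$-cycles, but nothing forces $\Omega(R)$ of them to be non-trivial, and the projection-to-a-$k$-plane idea likewise fails without minimality (or some filling inequality you have not stated). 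So either invoke monotonicity, or replace your sketch by a genuine isoperimetric argument for \emph{minimal} chains in the Coxeter tiling; the former is much shorter.

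One minor correction: your claim that ``the retraction is at most degree-bounded-to-one'' is false --- on a ball of radius $R$ in $\mB$ the retraction $\rho$ is roughly $q^{\Theta(R)}$-to-one. The inequality $|\rho(z_B)|\le|z_B|$ you actually need is immediate from $\rho$ being a simplicial chain map over $\f_2$.
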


\begin{proof}
First note that by Proposition \ref{pro-Ram-LM}, the injectivity radius of $\bfX=\bfX_I$ is $R=r(\bfX) \geq c'_d \cdot \log |\bfX|$. 
Let $z$ be a minimal non-trivial $k$-cocycle of $X$, and let $\sigma$ be some $k$-face contained in the support of $z$. 
Then the ball $B=B(\sigma,R) \subset \bfX$ of radius $R$ around $\sigma$ looks like a ball in the covering Bruhat-Tits building, and the intersection $z_B = z \cap B$ of the minimal non-trivial cocycle with this ball is a minimal non-trivial cocycle relative to the boundary of the ball, i.e. $\partial(z_B) \subset \partial B$.
Note that by assumption that $z$ is a minimal non-trivial cocycle, $z_B$ is the minimal $k$-chain inside $B$ with boundary $\partial(z_B) $ (if there is a smaller one $c$, we get that $z' = z - z_B +c$ contradicts $z$ being a minimal non-trivial cocycle).

Next we use the fact that if $\mA$ is an apartment containing $\sigma$ inside the building $\mB$, then there is a simplicial retraction map $\rho$ from $\mB$ to $\mA$, which preserves the distance to $\sigma$ (this actually determines this map uniquely) and furthermore this map does not increase distances \cite{AB}.
Consider the ball $B$ as a subset of $\mB$, and let $B^{\mA}$ be the intersection of $B$ with the apartment $\mA$, and note that $\mA$ is a tessellation of the Euclidean $d$-dimensional space with geometric $d$-simplexes, hence $B_{\mA}$ is a Euclidean ball of dimension $d$, hence it has approximately $R^d$ maximal faces.
Similarly, if $B^k_{\mA} = B^k_{\mA}(\sigma,R) \subset B_{\mA}$ is a $k$-dimensional Euclidean ball that passes through the center $\sigma$ of $B$, then $B^k_{\mA}$ has volume approximately $R^k$.

Now, denote $z_{B,\mA} = \rho(z_B) \subset B_{\mA}$. Note that $\partial(z_a) \subset \partial B_{\mA}$ and moreover $z_{B,\mA} $ is still the minimal $k$-chain inside $B_{\mA}$ with boundary $\partial(z_{B,\mA}) $ by the same argument from before. We are left with proving that the size of $z_{B,\mA}$ (which bounds from below the size of $z_{B}$ which bounds from below the size of $z$) is bounded from below by the size of $B^k_{\mA}$, this will gives us the claim.

The above lower bound follows from the monotonicity Theorem for minimal surfaces in a Euclidean space, see for instance \cite[Theorem~21]{GL}, which says that the ratio 
\begin{equation}
\frac{|z_{B,\mA} \bigcap B^k_{\mA}(\sigma,r)|}{|B^k_{\mA}(\sigma,r)|}\nonumber
\end{equation}
is non-decreasing for $0<r\leq R$. Since $\sigma \in z$, hence the ratio for $r=1$ is a non-zero constant $c''$, then we get that 
\begin{equation}
|z_{B,\mA} \bigcap B^k_{\mA}(\sigma,R)| \geq c'' \cdot | B^k_{\mA}(\sigma,R)| \sim R^k,\nonumber
\end{equation}
as needed, which completes the proof.
\end{proof}

\subsection{Proof of Theorems \ref{thm-Ram-main-1} and \ref{thm-Ram-main-2} }

Now we combine all the results in this section to prove Theorems \ref{thm-Ram-main-1} and \ref{thm-Ram-main-2}.
Their proofs proceed as follows:
First, all the complexes we shall consider are of the form of Theorems \ref{thm-Ram-LSV}, i.e. quotients of the Bruhat-Tits building by finite index subgroups of the Cartwright-Steger group $\Gamma_0$, and to be even more precise the subgroups will be congruence subgroups.
Second, we use Theorem \ref{thm-Ram-KKL} to construct a sequence of quotients of the Bruhat-Tits building all of which have non-trivial first or second cohomology.
Third, by Theorem \ref{thm-Ram-EK} we get a linear lower bound on the cosystoles.
Finally, by Theorem \ref{thm-Ram-sys-1} we get a logarithmic lower bound on the $1$-systoles, and by Theorem \ref{thm-Ram-sys-2} we get a quadratic logarithmic lower bound on the $2$-systoles. 

\section{Decoding product complexes} \label{sec-dec-prod}
We now focus on decoding the tensor product of a $2$-dimensional simplicial
complex $\bfX$
with a $1$-dimensional complex $\bfY$.
We shall write $V=X_0$ for the set of vertices of $\bfX$, $E=X_1$ for its edge set,
$T=X_2$ for its triangle set. Similarly, we write $A=Y_1$ and $B=Y_0$.
We can think of $(V,E,T)$ and $(A,B)$ as incidence structures, in particular
$(A,B)$ is defined by a bipartite graph between $A$ and $B$. We recall that what
we require of $(A,B)$ is that~: 
\begin{itemize}
\item the coboundary map $\delta_{B\to A}~:
\f_2^{B}\to\f_2^{A}$ has zero kernel, in other words the $|B|\times |A|$
incidence matrix $H_{AB}$ of the bipartite graph $(A,B)$ has rank $|B|$.
\item the matrix $H_{AB}$ is the parity-check matrix of a classical LDPC code
$C_{AB}$ of minimum distance $d_{AB}\geq c|A|$ for some constant $c$.
\end{itemize}

We will also require that the classical code $C_{AB}$ comes with a decoding
algorithm that is guaranteed to correct all errors of weight up to a fraction of
its minimum distance. The expander codes of \cite{SS} are known to achieve this. 

The quantum code $\cQ=\cQ(\X)$ associated to the product complex 
of Definition~\ref{def:product} has the factor graph
representation depicted on Figure~\ref{fig:product}.

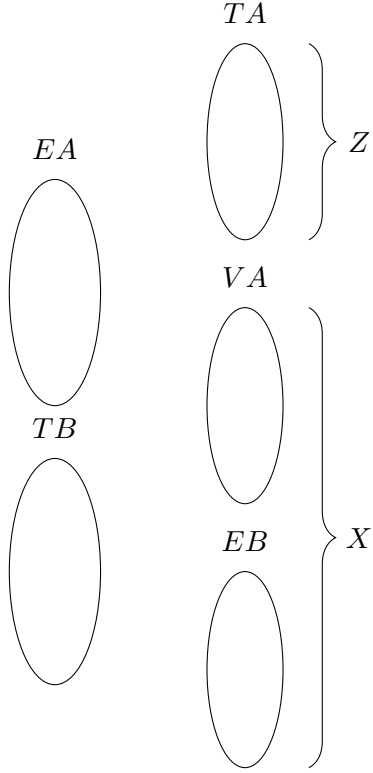
\begin{figure}[h]
\begin{center}
\begin{tikzpicture}
\draw (0,0) ellipse (0.6cm and 1.5cm);
\node (EA) at (0,1.9) {$EA$};
\draw (0,-3.7) ellipse (0.6cm and 1.5cm);
\node (TB) at (0,-1.8) {$TB$};
\draw (2.5,2) ellipse (0.5cm and 1.3cm);
\node (TA) at (2.5,3.7) {$TA$};
\draw [decorate,decoration={brace,amplitude=10pt,mirror,raise=4pt},yshift=0pt]
(3.2,0.7) -- (3.2,3.3) node [black,midway,xshift=0.8cm] {$Z$};
\draw (2.5,-1.5) ellipse (0.5cm and 1.3cm);
\node (VA) at (2.5,0.2) {$VA$};
\draw (2.5,-5) ellipse (0.5cm and 1.3cm);
\node (EB) at (2.5,-3.3) {$EB$};
\draw [decorate,decoration={brace,amplitude=10pt,mirror,raise=4pt},yshift=0pt]
(3.2,-6.3) -- (3.2,-0.2) node [black,midway,xshift=0.8cm] {$X$};
\end{tikzpicture}
\end{center}
\caption{The factor graph structure of the quantum code $\cQ$ associated to the
homological product of $(V,E,T)$ with $(A,B)$. The code length is
$N=|E||A|+|T||B|$.}
\label{fig:product}
\end{figure}

The quantum code now has coordinate (variable) set $\N= (E\times A) \cup (T\times B)$ that we abbreviate to $\N = EA \cup TB$. We have two syndrome functions
\begin{align*}
\sigma_Z~: \f_2^{\N} &\rightarrow \f_2^{TA}\\
\sigma_X~: \f_2^{\N} &\rightarrow \f_2^{{V\! A}\, \cup EB} = \f_2^{V\! A}\oplus\f_2^{EB}.
\end{align*}

Following the conventions of Section~\ref{sssec:quantum} we view $\sigma_X$ as a
boundary map and $\sigma_Z$ as a coboundary map. This point of view is helpful
since these maps 
inherit properties of the coboundary and boundary maps of the
simplicial complex $\bfX$.

We adopt the following convention:  we refer to the $1$-cycles of the
$2$-complex $\bfX$,
i.e. the cycles of the underlying graph $(V,E)$, simply as {\em cycles.} 
We shall call the elements
of $C_X=\ker\sigma_X$ in the product complex $\X$ as {\em Cycles} (capital C). 
By {\em trivial Cycle} (or Boundary) we shall mean an element of $\Im{\sigma_Z^*}$.
Similarly, we will talk about cocycles in $\bfX$ ($1$-cocycles) and coCycles in the
product complex $\X$, i.e. elements of $C_Z=\ker\sigma_Z$. To identify easily the $1$-boundary
maps $\partial_1$ in their respectives complexes $\bfX$ and $\bfY$ we will write
$\partial_{E\to V}$ and $\partial_{A\to B}$. Finally we will typically denote a
vector (in $\f_2^A$, $\f_2^{EA}$, etc.) by bold letters, but also will regularly
identify vectors with their supports to lighten notation. For example an element $a\in A$ 
will regularly also denote the vector of $\f_2^A$ whose support is $\{a\}$.
Hopefully this abuse will not introduce confusion.

For the quantum code $\cQ(\X)$ we examine separately the cases of decoding
$X$-errors and $Z$-errors since the situation is quite asymetrical.
In both cases the goal is to correct a constant fraction of the minimum distance
of $\cQ(\X)$.

\subsection{Decoding $X$-errors}
\begin{theorem}
\label{thm:Zerrors}
Suppose the classical LDPC code $C_{AB}$ comes with a polynomial-time decoding algorithm that
corrects any pattern of less than $\alpha|A|$ errors. Then
there is a polynomial time algorithm that given $\sigma_X(\x)$ for
$\x\in\f_2^\N$ of weight smaller than $\alpha |A|S_1(X)/2$, returns $\x + \u$
where $\u\in\Im(\sigma_Z^*)$. 
\end{theorem}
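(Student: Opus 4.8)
The plan is to exploit the product structure of $\X$ and the fact that the $X$-syndrome splits as $\sigma_X(\x) = (\s_{\VA},\s_{EB}) \in \f_2^{\VA}\oplus\f_2^{EB}$. Write $\x = \x_{EA} + \x_{TB}$ according to the decomposition $\N = EA \cup TB$, and view $\x_{EA}$ as an $E\times A$ array and $\x_{TB}$ as a $T\times B$ array. First I would read off, from Definition~\ref{def:product}, that the $EB$-part of the syndrome is $\s_{EB} = (\Id_E \otimes \partial_{A\to B})(\x_{EA}) + (\partial_{T\to E}\otimes \Id_B)(\x_{TB})$ and the $\VA$-part is $\s_{\VA} = (\partial_{E\to V}\otimes \Id_A)(\x_{EA})$; in particular, column $a\in A$ of $\x_{EA}$ is a vector $\x_{EA}(\cdot,a)\in\f_2^E$ whose $\bfX$-syndrome $\partial_{E\to V}\x_{EA}(\cdot,a)$ is column $a$ of $\s_{\VA}$. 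So each column of $\s_{\VA}$ is a boundary in the graph $(V,E)$, i.e.\ lies in the cycle-code syndrome space, and we may run the polynomial-time complete decoder for the cycle code of $(V,E)$ (Section~\ref{sssec:classical}, \cite{NH}) on each of the $|A|$ columns to obtain, column by column, a minimum-weight $\x'_{EA}(\cdot,a)\in\f_2^E$ with the same $\bfX$-syndrome as $\x_{EA}(\cdot,a)$. Since $|\x|< \alpha|A|S_1(\bfX)/2 \le |A|S_1(\bfX)/2$, at most $|A|/2$ columns of $\x_{EA}$ can have weight $\ge S_1(\bfX)$; for every other column $a$, the true error column $\x_{EA}(\cdot,a)$ has weight $<S_1(\bfX)$, hence is the unique minimum-weight preimage of its syndrome modulo boundaries of weight $<S_1(\bfX)$ — wait, more carefully: two vectors with the same $\bfX$-boundary differ by a cycle, and a cycle of weight $<S_1(\bfX)$ is a $2$-boundary of $\bfX$, so on these ``light'' columns the complete decoder returns $\x'_{EA}(\cdot,a) = \x_{EA}(\cdot,a) + c_a$ with $c_a\in\Im\partial_2^{\bfX}$.

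Next I would use the classical code $C_{AB}$ to correct the ``heavy'' columns. For each edge $e\in E$, look at row $e$ of the corrected array $\x'_{EA}$ together with row $e$ of $\x_{TB}$: collecting the contributions to $\s_{EB}$ indexed by $\{e\}\times B$, we get that $(\Id_E\otimes\partial_{A\to B})(\x'_{EA})$ restricted to row $e$ equals $\partial_{A\to B}$ applied to the vector $\x'_{EA}(e,\cdot)\in\f_2^A$, and comparing with the known $\s_{EB}$ and with $(\partial_{T\to E}\otimes\Id_B)(\x_{TB})$ gives the $C_{AB}$-syndrome of the error vector $\x_{EA}(e,\cdot)+\x'_{EA}(e,\cdot) \in \f_2^A$ — but this difference is supported exactly on the set of ``heavy'' columns' rows, plus it may be nonzero only where a column was heavy. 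The number of edges $e$ for which row $e$ of $\x_{EA}$ differs from row $e$ of $\x'_{EA}$ is at most the total weight contributed by heavy columns; since there are fewer than $2|\x|/S_1(\bfX) < \alpha|A|$ heavy columns and each heavy column, being minimum-weight, is matched by a decoder output column, the symmetric difference per row is at most the number of heavy columns $<\alpha|A|$. So for each $e$ the $C_{AB}$-decoder, fed the appropriate syndrome, corrects a pattern of $<\alpha|A|$ errors and recovers which entries of row $e$ were wrongly decoded; doing this for all $e$ produces the fully corrected array $\x''_{EA}$ with $\x''_{EA} = \x_{EA} + (\text{array all of whose columns are in }\Im\partial_2^{\bfX})$.

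Finally I would assemble the output and verify it is correct up to a trivial Cycle. Set $\x''_{TB} = \x_{TB}$ (or adjust it minimally so that the $EB$-syndrome is matched exactly — this is a linear-algebra step using that $H_{AB}$ has full row rank $|B|$, which is exactly where the hypothesis $H_2(\bfY)=0$ is used), and let $\u = \x - \x''$ where $\x'' = \x''_{EA}+\x''_{TB}$. By construction $\sigma_X(\x'')=\sigma_X(\x)$, so $\u\in\ker\sigma_X = C_X$; and $\u$, viewed as an array, has each $EA$-column in $\Im\partial_2^{\bfX}$ and vanishing $TB$-part, which is precisely the form of an element of $\Im\sigma_Z^*$ (compare the proof of Theorem~\ref{thm:product}, where $\Im\partial_2^{\X}$ is characterised as putting $2$-boundaries of $\bfX$ on the columns of the $X_1\times A$-array). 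Hence $\x'' = \x+\u$ with $\u\in\Im(\sigma_Z^*)$, as required, and every step (column-wise cycle-code decoding, row-wise $C_{AB}$-decoding, one rank-$|B|$ linear solve) runs in polynomial time.

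\textbf{Main obstacle.} The delicate point is the bookkeeping that guarantees each invocation of the $C_{AB}$-decoder genuinely faces fewer than $\alpha|A|$ errors: one must track how weight migrates from the heavy columns of $\x_{EA}$ into the rows, argue that the cycle-code decoder's output differs from the truth only by $2$-boundaries of $\bfX$ (so no spurious mass is created on light columns), and combine this with the weight budget $|\x|<\alpha|A|S_1(\bfX)/2$ to bound the per-row error. The factor $1/2$ in the hypothesis is there to absorb the ambiguity on heavy columns, and making this quantitative — rather than the high-level reduction — is where the real work lies.
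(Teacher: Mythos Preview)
Your two–stage plan (cycle-decode the columns, then $C_{AB}$-decode the rows) matches the paper's at a high level, but the second stage and the final assembly, as you describe them, contain genuine gaps.

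\textbf{Gap 1: the $TB$-part pollutes the row decoding.} In your row step you write ``comparing with the known $\s_{EB}$ and with $(\partial_{T\to E}\otimes\Id_B)(\x_{TB})$ gives the $C_{AB}$-syndrome\ldots''. But $\x_{TB}$ is part of the \emph{unknown} error vector. Concretely, the computable quantity $\s_{EB}-(\Id_E\otimes\partial_{A\to B})(\x'_{EA})$ equals $(\Id_E\otimes\partial_{A\to B})(\x_{EA}-\x'_{EA})+(\partial_{T\to E}\otimes\Id_B)(\x_{TB})$, so row $e$ of it is $\partial_{A\to B}\big((\x_{EA}-\x'_{EA})(e,\cdot)\big)+\sum_{t\ni e}\x_{TB}(t,\cdot)$. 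The second term is unknown and in general nonzero for most $e$ touching a triangle in the support of $\x_{TB}$; the $C_{AB}$-decoder fed this quantity does not return the row error you want. Your sentence ``Set $\x''_{TB}=\x_{TB}$'' has the same problem: $\x_{TB}$ is not available.

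\textbf{Gap 2: wrong characterisation of trivial Cycles.} You claim that ``each $EA$-column in $\Im\partial_2^{\bfX}$ and vanishing $TB$-part'' is precisely the form of an element of $\Im\sigma_Z^*$. It is not. An element of $\Im\partial_2^{\X}$ with vanishing $TB$-part comes from $\v\in\f_2^{TA}$ with $(\Id_T\otimes\partial_{A\to B})(\v)=0$, i.e.\ each row of $\v$ in $C_{AB}$; its $EA$-part therefore lies in $B_1(\bfX)\otimes C_{AB}$, which is strictly smaller than $B_1(\bfX)\otimes\f_2^A$. So even if your step 2 could be made to work, the resulting $\u=\x-\x''$ would only be known to have columns in $B_1(\bfX)$, and that is not enough to force $\u\in\Im\sigma_Z^*$.

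\textbf{How the paper closes both gaps.} After cycle-decoding (your $\x'_{EA}$, the paper's $\y$), the paper does not work row-wise in the edge basis $E$. Instead it solves a linear system to find an arbitrary $\z'\in Z_1(\bfX)\otimes\f_2^A$ with $\sigma_X(\z')=\sigma_X(\x+\y)$; this is where the unknown $TB$-contribution gets absorbed. One then has $\z'-\z_{\mathrm{EA}}\in Z_1(\bfX)\otimes C_{AB}$, where $\z_{\mathrm{EA}}$ is an $EA$-representation of $\x+\y$. Now choose a cycle basis $\Z=\Z_0\cup\Z_1$ of $Z_1(\bfX)$ with $\Z_0$ a basis of $B_1(\bfX)$, view $\z'$ as a $|\Z|\times|A|$ array, and apply the $C_{AB}$-decoder \emph{only to the $\Z_1$-rows}. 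The heavy-column count shows each $\Z_1$-row of $\z_{\mathrm{EA}}$ has weight $<\alpha|A|$, so the decoder strips the $C_{AB}$-codeword on each such row; the remaining discrepancy with $\z_{\mathrm{EA}}$ then lies in $B_1(\bfX)\otimes C_{AB}$, which is exactly the set of $EA$-representations of trivial Cycles. The passage to the cycle basis is the missing idea: it simultaneously kills the $TB$-pollution (which lands in $B_1(\bfX)\otimes\f_2^B$, hence in the $\Z_0$-rows) and produces a residual of the right shape for the final equivalence.
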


Recall from Section~\ref{sssec:quantum}, that the algorithm of
Theorem~\ref{thm:Zerrors} returns precisely a solution to the decoding problem.

We now describe the decoding strategy.

\paragraph{EA representation.} Let $\x\in\f_2^{\N}$ be a arbitrary chain. We
claim that there is a trivial Cycle $\v\in\Im{\sigma_X^*}$ such that $\x+\v$
has all its non-zero coordinates in $EA$. This is because the map
$\partial_{A\to B}$ is surjective, meaning that for every $b\in B$ there is
a set $A_b\subset A$, such that $b=\partial_{A\to B}(A_b)$. 
So for
every coordinate
$tb\in TB$ that is in the support of $\x$ we can add (this is not an algorithmic
procedure, just an existence result)
the $\sigma_Z^*$-image of the set
$t\times A_b$. 
We shall call such a sum $\x+\v$ an {\em EA-representation} of $\x$ 
(it is not unique). Decoding from $\sigma_X(\x)$ will consist of looking for 
an EA-representation of $\x$.

\paragraph{First decoding step: decoding from the VA part of 
$\sigma_X(\x)$.}
We focus on the VA-component of the $X$-syndrome and notice that it is the disjoint union,
for  $a\in A$, of the syndromes of all the $Ea$-components of $\x$. 
In other words, if we write:
\[
\x = \sum_{a\in A}\x_a\otimes a + \x_{TB}
\]
where $\x_a\in\f_2^E$ and $\x_{TB}$ has its support inside $TB$,
then:
\[
\sigma_X(\x)_{\restVA} = \sum_{a\in A}\partial_{E\to V}(\x_a)\otimes a.
\]
The boundaries $\partial_{E\to V}(\x_a)\otimes a$ are in $Va$ and disjoint, and
the first decoding step consists simply of decoding
from every $V_a$-component of the $X$-syndrome
$\partial_{E\to V}(\x_a)$ to obtain a candidate for $\x_a$. 
This decoding procedure occurs, as just mentioned, inside the $(V,E)$ graph,
so we may apply the polynomial-time complete decoding procedure mentioned in
Section~\ref{sssec:classical}. This returns the smallest weight vector $\x_a'$
to $\x_a$ such that $\x_a+\x_a'$ is a cycle. Whenever $\x_a$ has smaller weight than
half the $1$-systole $S_1(\bfX)/2$ we have, since $|\x_a'|\leq |\x_a|$, 
that $\x_a+\x_a'$ must be a trivial cycle.

For the purpose of clarity, we first describe the rest of the decoding procedure
in a simple case which will help to follow the general situation.
\paragraph{Case when $(A,B)$ is a path.}
Let us suppose the graph $(A,B)$ describes the edge-vertex incidence structure
of a path, as in \eqref{eq:path}. The associated classical code $C_{AB}$ is the
repetition code, i.e. the code of dimension $1$ generated by the all-one vector.

\begin{equation}
\label{eq:path}
\begin{tikzpicture}
\node (a1) at (0,0) {$a_1$};
\node (a2) at (2,0) {$a_2$};
\node (a3) at (4,0) {$a_3$};
\node (an-1) at (7,0) {$a_{m-1}$};
\node (an) at (9,0) {$a_{m}$};
\draw (a1) -- node[above] {$b_1$} (a2) -- node[above] {$b_2$} (a3);
\draw[dashed] (a3) -- (an-1);
\draw (an-1) -- node[above] {$b_{m-1}$} (an);
\end{tikzpicture}
\end{equation}

We have in this case $d_X(\cQ)=mS_1(\bfX)$ where $m=|A|$. 
If we use the simple majority decoder for the repetition code, we can correct
any pattern of errors of weight $<|A|/2$ and the hypothesis on the
weight of the error vector $\x$ in Theorem~\ref{thm:Zerrors} translates into
$|\x|<mS_1(\bfX)/4$.
Under this hypothesis, we have that the number of $a\in A$ such that $\x_a$ is closer
to a non-trivial cycle than to $0$, is less than $|A|/2$. 

\paragraph{Situation after the first decoding step.} The first decoding step yields
a vector $\y=\sum_{a\in A}\y_a\otimes a$ such that
\[
\sigma_X(\y))_{\restVA} = \sigma_X(\x)_{\restVA}.
\]
The vector $\x+\y$ is therefore such that each of its $a$-components $\x_a+\y_a$,
for all $a\in A$,
is a cycle, and from the discussion just above we have that 
a strict minority of them are non-trivial.

\paragraph{Second (and final) decoding step.} 
The decoder computes $\sigma_X(\x)+\sigma_X(\y)=\sigma_X(\x+\y)$ and tries to recover
an EA-representation of $\z=\x+\y$. Without loss of generality we suppose that
$\z$ is equal to one of its EA-representations. Switching from $\z$ to one of its
EA-representations changes its weight but does not change the nature of its $a$-components
$\z_a$ which remain either trivial cycles or non-trivial cycles: in particular the
fact that a minority of $a$-components of $\z$ are non-trivial is unchanged and this is
the only feature used in the coming decoding argument.

We now have to deal with an $X$-syndrome whose VA-component is zero, and we are
left with an EB-component from which to decode.
We have
\[
\s = \sigma_X(\z) = \sigma_X(\z)_{\restEB} = \sum_{a\in A}\sigma_X(\z_a\otimes a). 
\]
We may decompose $\s$ into $b$-components, $b\in B$,
\[
\s = \sum_{b\in B} \s_b\otimes b
\]
and using the path structure \eqref{eq:path}
of $A-B$ we have, for $i=1,\ldots m-1$,
\[
\s_i = \s_{b_i} = \z_{a_i} + \z_{a_{i+1}}.
\]
Rewrite $\z_i=\z_{a_i}$ to lighten notation.
Given $\s$ and starting from $a_1$, the decoder may therefore construct the vector 
$\z'\in\f_2^{EA}$, $\z'=(\z_1',\ldots ,\z_n')$, $\z_i'\in \f_2^E$,
setting
\begin{eqnarray*}
\z_1' &= &0\\
\z_2' &= &\z_1+\z_2\\
\z_3' &= &(\z_2+\z_3) + \z_2'\\
&\vdots&\\
\z_i' &=& (\z_{i-1}+\z_i) + \z_{i-1}'\\
&\vdots&
\end{eqnarray*}
so that $\sigma_X(\z')=\sigma_X(\z)$. We see that we have
\[
\z'=(\z_1,\z_2,\ldots,\z_n) + (\z_1,\z_1,\ldots ,\z_1).
\]
Similarly, the decoder can construct the alternative candidate vectors 
for $\z'$,
\begin{equation}
\label{eq:zi}
\z'=(\z_1,\z_2,\ldots,\z_n) + (\z_i,\z_i,\ldots ,\z_i)
\end{equation}
for all values $i=1,\ldots ,n$.

We see that $\z'$ is equal to $\z$ up to addition of a trivial Cycle if and only if
$\z_i$ is a trivial cycle in the complex $\bfX$. When $\z_i$ is non-trivial,
then every trivial component of $\z$ becomes non-trivial in $\z'$. Therefore, the decoder
may differentiate between the two cases $\z_i$ trivial/non-trivial in \eqref{eq:zi} 
by computing, for every component of $\z'$, whether it is a trivial cycle or not.
Note that this test is obviously polynomial-time since it just involves testing
whether a vector $z_i$ belongs to a well-identified vector space or not and is
achieved with elementary linear algebra.
When it finds a majority of trivial components, it knows it is in the case "$\z_i$ trivial",
and outputs $\z'$. We have that $\y+\z'$ is equal to the original error vector $\x$
up to addition of a trivial Cycle.
This concludes the decoding algorithm in the case when $(A,B)$ is a path.

\paragraph{Case of general bipartite graphs $(A,B)$.}

As before, we try to recover an EA-representation of $\z=\x+\y$. Without loss
of generality we assume $\z$ is one of those EA-representations. The chain
$\z$ therefore has a syndrome $\sigma_X(\z)$ with zero VA-component. 

We first recover an arbitrary chain $\z'$ from the syndrome $\sigma_X(\z)$ by
picking any solution to the linear system. We obtain therefore $\z'$ such that
$\sigma_X(\z')=\sigma_X(\z)$. Now we define the subcode $\C$ of $C_X$ with zero
TB-component, i.e. the code in $\f_2^{EA}$ made up of
those vectors with zero $\sigma_X$ syndrome. This is exactly the tensor code
\[
\C = \ker\partial_{E\rightarrow V}\otimes\ker\partial_{A\rightarrow
B} = \ker\partial_{E\rightarrow V}\otimes C_{AB} =Z_1(\bfX)\otimes Z_1(\bfY)
\]
so that we can write $\z'=\z+\c$, with $\c\in\C$. We also have that both $\z$ and $\z'$ live
naturally in the tensor product space
\[
\ker\partial_{E\rightarrow V}\otimes\f_2^{A}=Z_1(\bfX)\otimes C_1(\bfY)
\]
(as opposed to the initial $\f_2^E\otimes\f_2^A$). We can therefore use a basis
$\Z$ of cycles of $\ker\partial_{E\rightarrow V}$ and express $\z'$ in the basis
of elementary tensors $z\otimes a$, $z\in \Z$, $a\in A$.
Finally, we also have that the subcode $\C'$ of $\C$
\begin{equation}
\label{eq:C'}
\Im\partial_{T\rightarrow E}\otimes\ker\partial_{A\rightarrow B}=B_1(\bfX)\otimes
Z_1(\bfY)
\end{equation}
is exactly the set of EA-representations of $\sigma_Z^*(\f_2^{TA})$.
Therefore, we only need to recover $\z$ up to an element of $\C'$.

So we use a cycle basis $\Z$ of the form $\Z=\Z_0\cup \Z_1$ where $\Z_0$ is
a basis of the boundary space $\Im\partial_{T\rightarrow E}$. We now identify
the chain $\z'$ as an element of $\f_2^\Z\otimes\f_2^A$~:
in concrete terms, this means we have identified $\z'$
with a $|\Z|\times |A|$ array that we obtain by elementary linear algebra.
This array is partitioned into the union of a $|\Z_0|\times |A|$ subarray
and a  $|\Z_1|\times |A|$ subarray corresponding to the spaces 
$\f_2^{\Z_0}\otimes\f_2^A$ and $\f_2^{\Z_1}\otimes\f_2^A$.
Now suppose first that at the first decoding step, the cycle code decoder
that decodes every $Ea$ component has made no error, meaning it recovers for
every $a$
the original $Ea$-component $\x_a$ of the error up to a trivial cycle.
This translates into $\z=\x+\y$ being entirely inside
$\f_2^{\Z_0}\otimes\f_2^A$, and having a zero component inside the $|\Z_1|\times |A|$ 
subarray. In this case the $\f_2^{\Z_1}\otimes\f_2^A$ component of $\z'$,
viewed as a $|\Z_1|\times |A|$ array, has rows that are all codewords of
$C_{AB}$, and to obtain $\z$ up to an element of $\C'$, we simply need to remove
these codewords and put the $\f_2^{\Z_1}\otimes\f_2^A$ component at zero.
Of course, we can't expect that there will be no errors during the first
decoding step: but our hypothesis  on the weight of the error vector, namely
$|\x|< \alpha |A|S_1(\bfX)/2$, implies that the cycle code decoder will
 add a non-trivial cycle to $\x_a$, for less than $\alpha |A|$ values of
$a$. This translates into the number of non-zero columns of $\z$ in its 
$|\Z_1|\times |A|$ subarray component being less than $\alpha |A|$. This means in
particular that every one of the rows of the subarray has weight less than $\alpha |A|$.
Now on each of these rows $\z'$ is equal to $\z$ plus a codeword of $C_{AB}$, that
we need to remove to recover $\z$ from $\z'$ up to a vector of $\C'$.
Identifying and removing this codeword is always possible by applying the
decoding procedure for $C_{AB}$ that corrects up to $\alpha|A|$ errors.
Once we have $\z$ up to a vector of $\C'$ we add it to $\y$ to obtain an
equivalent version of the original error vector $\x$ and we are done.

\subsection{Decoding $Z$-errors}\label{sssec:Xerrors}
Let us say that the $2$-dimensional complex $(V,E,T)$ {\em corrects $w$ $Z$-errors} if
there is a polynomial-time algorithm that:
given the $2$-coboundary $\delta_{E\to T}(\error)$ of a cochain $\error\in \f_2^E$ of
Hamming weight at most $w$, outputs $\error + \c$ where $\c\in\Im\delta_{V\to
E}$ is a $1$-coboundary. Note that this means exactly that the quantum code
associated to the $2$-complex in the sense of Section~\ref{sssec:quantum}
corrects $w$ $Z$-errors, hence the terminology.

Turning once more to the quantum code $\cQ$  associated to the product of complexes
$\bfX$ and $\bfY$ associated to $(V,E,T)$ and $(A,B)$ we have the result:

\begin{theorem}
\label{thm:Xerrors}
Suppose the $2$-dimensional complex $(V,E,T)$ corrects $w$ $Z$-errors. Then
there exists a polynomial-time algorithm that given $\sigma_Z(\x)$ for
$\x\in\f_2^{\N}$ of weight at most $w$ returns $\x+\u$ where
$\u\in\Im\sigma_X^*$. 
\end{theorem}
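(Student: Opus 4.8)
The plan is to split the error $\x=\x_{EA}+\x_{TB}\in\f_2^{EA}\oplus\f_2^{TB}$ along the two blocks of $\N$, reducing the $TB$-block to a (mostly) classical computation and the $EA$-block to $|A|$ independent runs of the given component $Z$-decoder. Recall from Definition~\ref{def:product} that $\sigma_Z$ acts as $\delta_{E\to T}\otimes\Id_A$ on $\f_2^{EA}$ and as $\Id_T\otimes\delta_{B\to A}$ on $\f_2^{TB}$. Writing the syndrome $\s=\sigma_Z(\x)\in\f_2^{TA}$ as a family of columns $\s=\sum_{a\in A}\s_a\otimes a$ with $\s_a\in\f_2^T$, and correspondingly $\x_{EA}=\sum_a\x_a\otimes a$, $\x_{TB}=\sum_b\x^{(b)}\otimes b$, this reads $\s_a=\delta_{E\to T}(\x_a)+\sum_{b\sim a}\x^{(b)}$ for every $a\in A$.

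First I would recover the $TB$-block. The point is that $\Id_T\otimes\delta_{B\to A}$ is injective, because the hypothesis that $(A,B)$ has no redundant rows is precisely that $\delta_{B\to A}$ is injective; meanwhile the $\x_{EA}$-contribution to $\s$ has every column in $B^2(\bfX)=\Im\delta_{E\to T}$. Hence reading $\s$ modulo $B^2(\bfX)\otimes\f_2^A$ annihilates the $\x_{EA}$-part and, by injectivity of $\delta_{B\to A}$, determines the class $[\x^{(b)}]\in H^2(\bfX)$ of every column $\x^{(b)}$. Using that $|\x_{TB}|\le w$ -- so that each class $[\x^{(b)}]$ has a representative of weight $\le w$ and all but at most $w$ of them vanish -- together with the $Z$-decoding strength of $(V,E,T)$, I would extract a low-weight cochain $\error'_{TB}=\sum_b\error'^{(b)}\otimes b$ with $\error'^{(b)}\equiv\x^{(b)}\pmod{B^2(\bfX)}$ for all $b$. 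This $\error'_{TB}$ differs from $\x_{TB}$ by an element of $B^2(\bfX)\otimes\f_2^B$, which is exactly the $TB$-component of some vector of $\Im\sigma_X^*$ (whose $EA$-component correspondingly shifts inside $\f_2^E\otimes\Im\delta_{B\to A}$), so this is the only ambiguity in the $TB$-block and it is a permitted one.

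Now subtract: set $\s'=\s-\sigma_Z(\error'_{TB})$. By the matching of cohomology classes every column $\s'_a$ lies in $B^2(\bfX)$, and since $\error'_{TB}$ has low weight, each $\s'_a$ is the $Z$-syndrome of an $\f_2^E$-cochain whose weight is inside the component decoder's correction range. Feeding each of the $|A|$ columns to the component $Z$-decoder returns $\error'_a$ with $\delta_{E\to T}(\error'_a)=\s'_a$, and I would output $\error'=\sum_a\error'_a\otimes a+\error'_{TB}$. By construction $\sigma_Z(\error')=\s=\sigma_Z(\x)$, so $\error'+\x\in\ker\sigma_Z=Z^1(\X)$; tracking weights (each $\error'_a$ and $\error'_{TB}$ are of low weight) one gets $|\error'+\x|<S^1(\X)$, which equals $S^1(\bfX)$ by Theorem~\ref{thm:product}, whence $\error'+\x$ is a $1$-coboundary of $\X$, i.e. lies in $\Im\sigma_X^*$ -- which is exactly what decoding requires. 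Each step is linear algebra over $\f_2$ or one of $|A|$ calls to the polynomial-time component decoder, so the whole procedure runs in polynomial time.

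The hard part is the extraction of $\error'_{TB}$: one knows each $\x^{(b)}$ only modulo $B^2(\bfX)$ yet must produce a representative of each class $[\x^{(b)}]\in H^2(\bfX)$ of weight small enough that the residual column syndromes $\s'_a$ stay within the component decoder's reach -- this is where the injectivity of $\delta_{B\to A}$, the bound $|\x_{TB}|\le w$, and the (coboundary-expansion--type) decoding strength of $(V,E,T)$ have to be played off against one another. Everything after that step is routine bookkeeping with the tensor structure, closely parallel to the treatment of $X$-errors above.
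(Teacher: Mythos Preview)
Your proposal has a genuine gap at exactly the place you flag as ``the hard part'': the extraction of $\error'_{TB}$. Reading the syndrome modulo $B^2(\bfX)\otimes\f_2^A$ does determine each class $[\x^{(b)}]\in H^2(\bfX)$, but you give no algorithm for producing representatives $\error'^{(b)}$ such that the residual columns $\s'_a=\delta_{E\to T}\bigl(\x_a+\sum_{b\sim a}\z^{(b)}\bigr)$ (where $\delta_{E\to T}(\z^{(b)})=\x^{(b)}+\error'^{(b)}$) have preimages of weight $\le w$. The component $Z$-decoder solves a different problem---given an element of $B^2(\bfX)$ promised to be $\delta_{E\to T}(\error)$ with $|\error|\le w$, recover $\error$ modulo $B^1(\bfX)$---and does not produce low-weight representatives of $H^2(\bfX)$-classes; there is no evident way to leverage it for that task. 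Without control on $|\z^{(b)}|$, the column decodes may simply fail.

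The paper sidesteps this entirely with a cleaner device. Since $\delta_{B\to A}$ is injective, one chooses $A'\subset A$ with $|A'|=|B|$ such that the restriction $\delta_{B\to A'}$ is bijective. Then (Lemma~\ref{lem:reduced}) every $\x$ is equivalent modulo $\Im\sigma_X^*$ to a \emph{reduced} cochain $\x'$ with $\x'_{EA'}=0$, and crucially each remaining column satisfies $|\x'_a|\le|\x_{EA}|\le w$. For such a reduced $\x'$ the $TA'$-part of the syndrome equals $(\Id_T\otimes\delta_{B\to A'})(\x'_{TB})$ exactly---the $EA$-contribution vanishes since $\x'_{EA'}=0$---so one recovers $\x'_{TB}$ on the nose by inverting a fixed linear bijection, with no quotienting by $B^2(\bfX)$ and no search for representatives. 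After subtracting, each column is literally $\delta_{E\to T}(\x'_a)$ with $|\x'_a|\le w$, hence directly decodable.

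A secondary issue: your final weight argument (``$|\error'+\x|<S^1(\X)$'') also does not go through. The component decoder returns $\error'_a$ equivalent to the true preimage modulo $B^1(\bfX)$ but with no bound on its Hamming weight, so $|\error'|$ may be of order $|A|\cdot|E|$, far exceeding $S^1(\bfX)$ in the intended regime $|A|\sim|E|/\log|E|$. This argument is in any case unnecessary: once the column decodes succeed, using $\sigma_X^*(e\otimes b)=e\otimes\delta_{B\to A}(b)+\delta_{E\to T}(e)\otimes b$ one checks directly that $\error'+\x\in\Im\sigma_X^*$---which is how the paper concludes.
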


In other words Theorem~\ref{thm:Xerrors} says that if the quantum code
associated to the component $2$-complex $(V,E,T)$ can correct $w$
$Z$-errors, then so can the product quantum code $\cQ(\X)$.

Again, it is natural to look at the decomposition of the error vector:
\[
\x = \x_{EA} + \x_{TB}
\]
with $\x_{EA}=\sum_{a\in A}\x_a\otimes a$. If we suppose that the error vector
$\x$ is entirely supported in $EA$, then we have 
\[
\sigma_Z(\x) = \sum_{a\in A}\sigma_Z(\x)_{\restTa}
\]
with
\[
\sigma_Z(\x)_{\restTa}=\delta_{E\to T}(\x_a)\otimes a
\]
Decoding would then consist of
recovering in parallel $\x_a$ or an equivalent cochain from
$\delta(\x_a)$ for every $a$. Obviously if $|\x|\leq t$ then
$|\x_a|\leq w$ for every $a$ and we can apply the decoding algorithm for the
$2$-dimensional simplicial complex $\bfX$.

However, when $\x_{TB}\neq 0$, this straightforward approach
breaks down because every $\sigma_Z(\x)_{\restTa}$ need not be
a copy of a $2$-coboundary anymore. To bypass this problem we look for a special
equivalent form of $\x$.

Recall that
the map $\delta_{B\to A}~: \f_2^B\rightarrow \f_2^A$ has zero kernel. This implies that
there exists $A'\subset A$, $|A'|=|B|$, such that the restricted linear map
\[
\delta_{B\to A'}~: \f_2^{B}\rightarrow \f_2^{A'},
\]
defined by restricting the support of every vector of $\Im\delta$ to $A'$,
is one-to-one. Define $A''=A\setminus A'$. Recall that two $Z$-error vectors are
said to be equivalent if they differ by a vector of $\sigma_X^*$.
\begin{lemma}
\label{lem:reduced}
{\bf (Reduced cochain).}
Let $\x\in\f_2^{\N}$. There exists an equivalent vector
$\x'$ such that the EA-component of $\x'$ is entirely supported by
$EA''$, in other words $\x_a'=0$ for every $a\in A'$.
Furthermore, we have that the weight of
every $Ea$-component of $\x'$, for $a\in A''$, is upper bounded as:
$|\x'_{Ea}|\leq |\x_{EA}|$. In particular, every $Ea$-component of $\x'$ is 
upper bounded by the total weight of $\x$.
\end{lemma}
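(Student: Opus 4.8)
The plan is to build $\x'$ from $\x$ by adding a single element of $\Im\sigma_X^*$ whose ``source'' lies in the $EB$ block, exploiting that $\delta_{B\to A'}\colon\f_2^B\to\f_2^{A'}$ is a bijection. Write $\x=\x_{EA}+\x_{TB}$ with $\x_{EA}=\sum_{a\in A}\x_a\otimes a$, and view $\x_{EA}$ as an $E\times A$ array whose $e$-th row is the vector $\x_e\in\f_2^A$ with $(\x_e)_a=(\x_a)_e$. Taking transposes in Definition~\ref{def:product}, the restriction of $\sigma_X^*$ to $\f_2^{EB}$ sends $\u_{EB}$ to $(\Id_E\otimes\delta_{B\to A})(\u_{EB})\in\f_2^{EA}$ together with $(\delta_{E\to T}\otimes\Id_B)(\u_{EB})\in\f_2^{TB}$, and if $\u_{EB}=\sum_e e\otimes\u_e$ then the $e$-th row of $(\Id_E\otimes\delta_{B\to A})(\u_{EB})$ is $\delta_{B\to A}(\u_e)$. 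So the construction is essentially forced: for each $e\in E$ set $\u_e:=(\delta_{B\to A'})^{-1}(\x_e|_{A'})\in\f_2^B$, let $\u_{EB}:=\sum_{e\in E}e\otimes\u_e$, and put $\x':=\x+\sigma_X^*(0,\u_{EB})$, which is equivalent to $\x$ by definition of equivalence.

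With this choice the first assertion is immediate: the $e$-th row of $\x'_{EA}$ is $\x_e+\delta_{B\to A}(\u_e)$, and by the choice of $\u_e$ this agrees with $\x_e$ on every coordinate of $A'$, hence vanishes there; so $\x'_a=0$ for all $a\in A'$, i.e.\ $\x'_{EA}$ is supported on $EA''$. The $TB$-component of $\x'$ also changes, but the statement imposes nothing there, so this causes no trouble.

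The one step needing care is the weight bound, where a careless count produces a spurious factor $2$. Fix $a\in A''$. A coordinate $(e,a)$ lies in the support of $\x'_{Ea}$ only if $(\x_a)_e=1$ (at most $|\x_a|$ values of $e$), or if $(\delta_{B\to A}(\u_e))_a=1$; the latter forces $\u_e\neq0$, hence $\x_e|_{A'}\neq0$ (because $(\delta_{B\to A'})^{-1}$ is linear, so sends $0$ to $0$), i.e.\ $(\x_{a'})_e=1$ for some $a'\in A'$, and this holds for at most $\bigl|\bigcup_{a'\in A'}\operatorname{supp}(\x_{a'})\bigr|\le\sum_{a'\in A'}|\x_{a'}|$ values of $e$. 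Adding the two contributions and using that $a$ itself lies in $A''$,
\[
|\x'_{Ea}|\ \le\ |\x_a|+\sum_{a'\in A'}|\x_{a'}|\ \le\ \sum_{a''\in A''}|\x_{a''}|+\sum_{a'\in A'}|\x_{a'}|\ =\ |\x_{EA}|,
\]
and $|\x_{EA}|\le|\x|$ since $\x_{EA}$ and $\x_{TB}$ have disjoint supports, which gives the ``in particular'' clause.

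I do not anticipate a genuine obstacle here: once one observes that $\delta_{B\to A'}$ is invertible, the vector $\u_{EB}$ that cancels the $A'$-columns is determined, and verifying the support property is a one-liner. The only mildly delicate point is the bookkeeping in the weight estimate — recognising that each extra $1$ produced in a column $a\in A''$ sits in a row $e$ of $\x$ that already met $A'$, so the honest bound is $|\x_a|+\sum_{a'\in A'}|\x_{a'}|$ rather than $2|\x_{EA}|$, and that this is $\le|\x_{EA}|$ precisely because $a\in A''$ makes $|\x_a|$ one of the $A''$-summands.
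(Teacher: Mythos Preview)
Your proof is correct and essentially identical to the paper's. You build the correcting element of $\f_2^{EB}$ row-by-row, setting $\u_e=(\delta_{B\to A'})^{-1}(\x_e|_{A'})$, whereas the paper builds it column-by-column, choosing for each $a'\in A'$ a set $B_{a'}\subset B$ with $\delta_{B\to A'}(B_{a'})=\{a'\}$ and taking $\sum_{a'\in A'}\x_{a'}\otimes B_{a'}$; by linearity of $(\delta_{B\to A'})^{-1}$ these are the same vector, and your weight argument (an $e$ appears in $\x'_a$ only if it was already in some $\x_{a'}$ with $a'\in A'\cup\{a\}$) is exactly the paper's observation, just written out more carefully.
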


\begin{proof}
For every $a\in A'$, there exists a subset $B_a\subset B$, such that
$\delta_{B\rightarrow A'}(B_a)=\{a\}$.
Given the decomposition of $\x$
\[
\x = \sum_{a\in A}\x_a\otimes a + \x_{TB}
\]
we construct $\x'$ as:
\[
\x'=\x + \sigma_X^*\left(\sum_{a\in A'}\sum_{b\in B_a}\x_a\otimes b\right).
\]
which clearly deletes all $EA'$ coordinates of $\x$.
We also see that for any $a\in A''$, an $e$ coordinate is added to the support
of $x_a$, only if there is at least one $a\in A'$ such that $x_a$ contains $e$
in its support. This implies that the weight of $\x_a'$ cannot exceed the total
weight of $\x$.
\end{proof}

When switching from $\x$ to its reduced form, we may obtain
a cochain with larger weight but the $EA$ component of the
reduced chain has weight at most $w$.
Since the weight of
the $EA$ component will turn out to be the only relevant one
for the decoding argument, we may therefore assume that the
error vector $\x$ is already in the reduced form given by
Lemma~\ref{lem:reduced}.

\paragraph{The decoding algorithm}
We first recover the $TB$ component $\x_{TB}$ of $\x=\x_{EA}+\x_{TB}$.
The syndrome map $\sigma_X$, when applied to the $TB$ component, is
one-to-one when restricting its image to $TA'$, since it is equal to
$Id\otimes\delta_{B\to A'}$ and we have chosen $A'$ such that $\delta_{B\to A'}$
is one-to-one. Therefore we can deduce the $TB$ component $\x_{TB}$ of $\x$ from
the $TA'$ component of the $\sigma_Z(\x)$. To find the $EA$ component of $\x$, 
we only need to decode from $\sigma_Z(\x)+\sigma_Z(\x_{TB})$.
This puts us back in the situation when the $Z$-error vector has
no $TB$-component, and we can just apply the $w$-error-correcting algorithm
to every $Ta$ component of the syndrome. This concludes the proof of
Theorem~\ref{thm:Xerrors}.

\section{Coboundary decoding of a $2$-dimensional Ramanujan complex} \label{sec-dec-Ram}
\subsection{The decoding algorithm}\label{ssec:coboundary-decoding}
The decoding algorithm of Section~\ref{sssec:Xerrors} rested upon the existence
of a decoding algorithm for the quantum code associated to the
$2$-dimensional simplicial complex $\bfX=(V,E,T)$. In this section we show the
existence of such an algorithm for $2$-dimensional Ramanujan complexes. For now
we just need to remember that the local properties of a Ramanujan complex are
described by a local parameter $q$ which is a prime power and describes the
local degrees. The edge degree is $q+1$, meaning that every edge is incident to
$q+1$ triangles, and the vertex to edge decreed is $Q=2q^2+2q+2$, meaning that
every vertex has $Q$ neighbours.
We recall that we need an algorithm that takes as input the coboundary
$\delta_1(\error)\in\f_2^T$ for some vector $\error\in\f_2^E$, with the
an upper bound on the error weight $|\error|\leq t$, and such that the algorithm
outputs an equivalent error vector $\error + \c$ where $\c\in\Im\delta_0$.

It will be convenient to think of $\error$ as a $1$-cochain of the complex,
equivalently a subset of the edge set $E$.

The algorithm that we will exhibit will be {\em local.} 
\paragraph{Description of the decoding algorithm:}
Suppose $\delta_1(\error)\neq 0$ otherwise there is nothing to do (output $0$).

Step 0: the algorithm sets $\ff_0=\delta_1(\error)$.

Step $k>0$:
the algorithm looks for a vertex 
$v_k$ and a cochain $\y_k$, entirely inside the edge-neighbourhood
$\delta_0(v_k)$
of $v_k$, such that 
\[
|\ff_{k-1}+\delta_1(\y_k)| < |\ff_{k-1}|.
\]
Upon finding such a $v_k$, it sets $\ff_k=\ff_{k-1}+\delta^1(\y_k)$.
If $\ff_k=0$, the algorithm stops and outputs
\[
\error' = \y_1+\y_2+\cdots +\y_k.
\]
If $\ff_k\neq 0$ it proceeds to step $k+1$.

In words, the algorithm looks for a vertex that admits a small set $\y$ of incident
edges such that the coboundary $\delta_1(\y)$, when added to $\delta_1(\error)$,
yields a smaller weight then $\delta_1(\error)$. It then repeats the operation
with $\delta_1(\error)+\delta_1(\y)=\delta_1(\error +\y)$ and iterates until it
reaches a zero coboundary.

For any $1$-cochain $\error$, let us denote by $\error_v$ the cochain supported
by the edges of $\error$ that are incident to $v$.

To show that the algorithm always converges and gives a right answer, we shall
prove the following theorem:
\begin{theorem}
\label{thm:local}
There exists constants $\gamma>0$ and $q_0$, such that whenever $q>q_0$, for every cochain $\error\in
C^1(X)=\f_2^E$ with weight $|\error|\leq\gamma |E|$ and with
$|\delta_1(\error)|>0$, there exists a vertex $v\in
V$, such that $|\delta_1(\error +\error_v)|<|\delta_1(\error)|$.
\end{theorem}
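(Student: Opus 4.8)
The plan is to localise: reduce the statement to a question inside each vertex link $L(v)$ — which for these complexes is the point–line incidence graph of $PG(2,q)$, an excellent expander — and then run a counting argument. First I would rewrite the object to be controlled. Encode $\error_v$ (the edges of $\error$ through $v$) as a vertex subset $S_v\subseteq V(L(v))$, and write $h_v\subseteq E(L(v))$ for the set of link edges $\{u,w\}$ (equivalently, triangles $\{v,u,w\}$) with $uw\in\error$. Letting $\delta^{L(v)}$ be the cut (coboundary) operator of the graph $L(v)$, one checks that $\delta_1(\error)$ restricted to the triangles through $v$ equals $\delta^{L(v)}(S_v)+h_v$, that $\delta_1(\error+\error_v)$ restricted there equals $h_v$, and that nothing changes on triangles avoiding $v$; consequently
\[
|\delta_1(\error)|-|\delta_1(\error+\error_v)|
=|\delta^{L(v)}(S_v)+h_v|-|h_v|
=|\delta^{L(v)}(S_v)|-2\,|\delta^{L(v)}(S_v)\cap h_v|
=n_1(v)-n_2(v),
\]
where $n_1(v)$ is the number of triangles through $v$ meeting $\error$ in exactly one edge (which is then an edge at $v$) and $n_2(v)$ the number of triangles through $v$ meeting $\error$ in exactly two edges whose common vertex is $\neq v$. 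So it suffices to find a vertex with $n_1(v)>n_2(v)$.

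Second, I would clear the easy case by averaging: a triangle meeting $\error$ in exactly one edge is counted by $n_1$ at both endpoints of that edge, and a triangle meeting $\error$ in exactly two edges is counted by $n_2$ at its two ``leaf'' vertices, so $\sum_v n_1(v)=2N_1$ and $\sum_v n_2(v)=2N_2$, where $N_i$ is the number of triangles meeting $\error$ in exactly $i$ edges. If $N_1>N_2$ we are done, so assume $N_1\le N_2$ and argue by contradiction that $n_1(v)\le n_2(v)$ for every $v$. I would also use the identities $N_1+2N_2+3N_3=(q+1)|\error|$ (each edge lies in $q{+}1$ triangles), $|\delta_1(\error)|=N_1+N_3\ge1$, and $\sum_v|\delta^{L(v)}(S_v)|=2(N_1+N_2)$ (each triangle meeting $\error$ in one or two edges contributes $2$ to the sum, others $0$).

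Third, I would bring in the two expansion properties of the links: (a) the spectral gap ($\lambda=\sqrt q$) yields the Cheeger edge bound $|\delta^{L(v)}(S)|\ge\tfrac12(q{+}1-\sqrt q)\min(|S|,N_v-|S|)$, which combined with $\sum_v|\delta^{L(v)}(S_v)|=2(N_1+N_2)$, $\sum_v|S_v|=2|\error|$ and $N_1\le N_2$ makes $N_2=\Omega(q|\error|)$ as long as the $S_v$ are not concentrated; (b) since $L(v)$ is $C_4$-free (two points lie on a unique line), expander mixing gives $e_{L(v)}(S)\le\tfrac1{4q}|S|^2+O(\sqrt q\,|S|)$ for the number of edges spanned inside $S$, and since every triangle counted by $N_2$ appears as such an edge inside $S_v$ at its apex, $N_2\le\sum_v e_{L(v)}(S_v)\le\tfrac1{4q}\sum_v|S_v|^2+O(\sqrt q\,|\error|)$. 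For $q$ larger than an absolute constant $q_0$ the lower-order terms are controlled, and comparing the two estimates forces $\sum_v|S_v|^2=\Omega(q^2|\error|)$; since $\sum_v|S_v|=2|\error|$ this can only happen if $\error$ is concentrated, i.e.\ some vertex $v^*$ has $|S_{v^*}|=\Omega(q^2)$ and thus carries a constant fraction of its incident edges (the bound $|\error|\le\gamma|E|$ is what keeps the number of such saturated vertices a small fraction of $V$).

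\textbf{The step I expect to be the main obstacle} is therefore the concentrated regime, where the averaging of the second paragraph gives nothing and one must argue directly inside the stars of the saturated vertices. Here the projective-plane incidence structure of $L(v^*)$ has to be used more finely than through the spectral gap alone — controlling simultaneously the sizes of $S_{v^*}$, of $h_{v^*}$ (which lives inside the $\approx q^3$-edge set $E(L(v^*))$) and of their interaction with the cut $\delta^{L(v^*)}(S_{v^*})$ — in order to still exhibit a vertex with $n_1>n_2$. I anticipate this needs a careful, possibly iterative, case analysis, and it is the reason the theorem requires $q>q_0$ (with $\gamma$ a small absolute constant); by contrast the link reformulation and the averaging/expansion reductions above are routine once the bookkeeping identities are in place.
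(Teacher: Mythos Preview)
Your localisation and bookkeeping (first two paragraphs) are correct and match the paper exactly: the paper also reduces to showing $N_1>N_2$ (in its notation $t_1>t_2$) via precisely your averaging identity $\sum_v n_1(v)=2N_1$, $\sum_v n_2(v)=2N_2$. Your link-expansion estimates in the third paragraph are likewise in the right spirit and close to the paper's Lemmas bounding $2t_1+2t_2$ from below and $2t_2$ from above via the Cheeger inequality in $L(v)$.

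The genuine gap is your treatment of the concentrated regime. You anticipate that it will require ``a careful, possibly iterative, case analysis'' using finer projective-plane structure inside individual links. This is not how the argument closes, and it is not clear a purely local argument can succeed. The missing ingredient is \emph{global} expansion: the $1$-skeleton $(V,E)$ of the Ramanujan complex is itself an expander, with second adjacency eigenvalue at most $6q$ while the degree is $Q\approx 2q^2$. The paper uses this as follows. First normalise so that $\error$ is locally minimal (replacing $\error$ by $\error+\delta_0(v)$ does not change $\delta_1(\error)$), which caps $|\error_v|\le Q/2$. Then split the touched vertices into \emph{thin} ($|\error_v|<(1-\epsilon)Q/2$) and \emph{thick}, and write $r,s$ for $\sum_v|\error_v|$ over each class. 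Link expansion alone yields $t_1>t_2$ provided roughly $r(3\epsilon-1)>s$; the obstruction is exactly that $s$ might dominate. But the set $S$ of thick vertices has $|S|=O(|\error|/Q)=O(\gamma|V|)$, and the expander mixing lemma applied to the \emph{global} graph $(V,E)$ bounds the number of edges of $\error$ with both endpoints thick by $O(\gamma)\cdot|\error|$. This forces $s\le(1+O(\gamma))|\error|$ and hence $r\ge(1-O(\gamma))|\error|$; choosing $\epsilon=3/4$ and $\gamma<1/192$ closes the inequality.

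So the hypothesis $|\error|\le\gamma|E|$ is used precisely at this global step, and nowhere else. Your observation that it ``keeps the number of such saturated vertices a small fraction of $V$'' is the right instinct, but the payoff comes from feeding that smallness into global mixing to control $s$, not from further analysis inside the star of a single saturated vertex. (The $C_4$-freeness of the link that you invoke plays no role; only the spectral gap of $L(v)$ and of $(V,E)$ are used.)
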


Proving that the decoding algorithm converges involves a second ingredient. 
Let us say that a cochain $\error$ is {\em minimal}, if it is of smallest weight
in its class modulo $B^1(\bfX)$. Let us say that it is {\em locally minimal},
if for any $v\in V$, $|\error + \delta_0(v)|\geq |\error|$. 
Since every edge is incident to $(q+1)$ triangles,  
the weight of the coboundary $\delta_1(\error)$ 
of any cochain $\error$ is at most $(q+1)|\error|$.

\begin{proposition}\label{prop:1/3}
For $\gamma$ as in Theorem~\ref{thm:local}, whenever a locally minimal cochain
$\error$ has weight $|\error|\leq \gamma |E|$, then
$|\delta_1(\error)|\geq\frac 13(q+1)|\error|$.
\end{proposition}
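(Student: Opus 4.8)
The proposition is the quantitative heart of the decoding-convergence argument, and the natural way to attack it is to bound below the size of the coboundary of a locally minimal cochain $\error$ by a constant fraction of $(q+1)|\error|$, where $(q+1)|\error|$ is the trivial upper bound on $|\delta_1(\error)|$. The losses from this upper bound come from two sources: triangles $t$ that are incident to two (or three) edges of $\error$, so that the two contributions cancel in $\delta_1(\error)$; and the general sparsity/expansion structure of the complex. Since $\error$ is locally minimal, around each vertex $v$ the local cochain $\error_v$ cannot be ``shrunk'' by adding $\delta_0(v)$, and the key point is that in a Ramanujan complex the link $L(v)$ is the incidence graph of a projective plane of order $q$ — a very good expander graph. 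Local minimality of $\error$ translates, in each link $L(v)$, into a statement about a vertex subset (the edges of $\error$ through $v$) whose edge-boundary in $L(v)$ is large; equivalently, the set of ``cancelling pairs'' (triangles through $v$ incident to two $\error$-edges) is controlled.

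**Key steps, in order.**
First I would set up the local counting: for each vertex $v$, let $d_v = |\error_v|$ be the number of $\error$-edges through $v$, so that $\sum_v d_v = 2|\error|$, and let $p_v$ be the number of triangles through $v$ that are incident to exactly two edges of $\error$ (both necessarily through $v$). A triangle incident to two edges of $\error$ is counted in $p_v$ for the common vertex $v$ of those two edges; a triangle incident to three edges of $\error$ is counted three times (once per vertex). Expanding $|\delta_1(\error)|$ by inclusion–exclusion over triangles, one gets $|\delta_1(\error)| \geq \sum_v d_v (q+1)/\text{(deg normalisation)} - 2\sum_v p_v + (\text{corrections})$; more cleanly, $|\delta_1(\error)| \geq (q+1)|\error| - 2\sum_v p_v$ up to lower-order terms coming from triangles with all three edges in $\error$, which I will argue are negligible (or absorbable) using sparsity $|\error| \leq \gamma|E|$. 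So the whole problem reduces to showing $\sum_v p_v \leq \tfrac{1}{3}(q+1)|\error|$, i.e. that locally minimal cochains cannot have too many cancelling pairs.

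Second, I would prove the per-vertex bound: for each $v$, local minimality ($|\error + \delta_0(v)| \geq |\error|$) says that in the link graph $L(v)$ — a $(q+1)$-regular bipartite graph on $2(q^2+q+1)$ vertices — flipping the vertex-set $S_v$ corresponding to $\error_v$ does not reduce the ``edge count'' in the appropriate sense. Concretely, $p_v$ is the number of edges of $L(v)$ inside $S_v$, and the number of edges of $L(v)$ touching $S_v$ at all is related to $|\delta_1(\error)_{\text{at }v}|$. Local minimality forces $S_v$ to be small relative to $q$ (roughly $|S_v| = d_v \lesssim$ something like $q$), and then the expander mixing lemma for $L(v)$ — whose nontrivial eigenvalues are $O(\sqrt q)$, the Ramanujan bound for the projective-plane incidence graph — gives $p_v = e(S_v) \leq \tfrac{d_v^2}{2|L(v)|} \cdot (q+1) + O(\sqrt q\, d_v) = O(d_v^2/q) + O(\sqrt q\, d_v)$. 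Summing over $v$ and using $\sum_v d_v = 2|\error|$ together with $\max_v d_v \leq Q = 2q^2+2q+2$ (trivially) but more usefully $d_v$ small by local minimality, I would show $\sum_v p_v \leq \tfrac13 (q+1)|\error|$ provided $q$ exceeds an absolute constant $q_0$ and $\gamma$ is chosen small enough. The constant $\gamma$ enters to guarantee that globally most vertices $v$ have $d_v$ genuinely small, so that the quadratic term $\sum_v d_v^2/q$ does not blow up; this is where one needs a ``bounded number of heavy vertices'' argument, likely: if too many vertices had large $d_v$, then $|\error|$ would exceed $\gamma|E|$.

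**Main obstacle.**
The delicate point is the interplay between local minimality and the global weight bound in controlling $\sum_v d_v^2$. Local minimality alone bounds $d_v$ for each $v$ (via the link-expander argument, a set that is too large in an expander has large edge-boundary, contradicting minimality), but turning this into a clean linear bound $\sum_v p_v \leq \tfrac13(q+1)|\error|$ requires carefully tracking how the per-vertex quadratic savings aggregate. I expect the argument to hinge on showing that for a locally minimal cochain each $d_v$ is at most (a constant multiple of) $q$ — so that $p_v = O(d_v^2/q) = O(d_v \cdot \text{const})$ — and then choosing the constants so the coefficient beats $\tfrac13$. Getting the constant exactly $\tfrac13$ (rather than merely ``some positive fraction'') is the kind of bookkeeping that Theorem~\ref{thm:local} presumably already sets up, so I would lean on the structure established there, in particular on the same vertex $v$ whose existence Theorem~\ref{thm:local} guarantees: if no single vertex flip reduces $|\delta_1(\error)|$, every vertex is ``balanced'' in its link, and the expander mixing lemma applied uniformly yields the bound. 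The cleanest route is probably: assume for contradiction $|\delta_1(\error)| < \tfrac13(q+1)|\error|$, deduce $\sum_v p_v > \tfrac13(q+1)|\error|$, hence by averaging some vertex $v$ has $p_v$ large relative to $d_v$, i.e. $S_v$ is dense in $L(v)$; then expansion of $L(v)$ shows flipping $S_v$ strictly decreases weight, contradicting local minimality.
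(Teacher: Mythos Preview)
Your overall setup is on the right track---the link graphs $L(v)$ are indeed the incidence graphs of projective planes with nontrivial eigenvalues $\pm\sqrt q$, and the expander mixing lemma applied in each link is the right local tool. But there is a genuine gap at the step where you claim that local minimality bounds $d_v=|\error_v|$ to order $q$. It does not. Local minimality only says $|\error+\delta_0(v)|\geq|\error|$, i.e.\ $d_v\leq Q/2=q^2+q+1$; nothing about the link forces $d_v\lesssim q$. If you plug the worst case $d_v\approx Q/2$ into the mixing-lemma bound $e(S_v)\leq\frac12\bigl(q+1-\frac{|\overline{S_v}|}{Q}\lambda_1\bigr)d_v$, you get $p_v\lesssim\frac{q+1}{4}d_v$, and summing gives $\sum_v p_v\lesssim\frac{q+1}{2}\cdot 2|\error|$, which after the subtraction yields $|\delta_1(\error)|\geq 0$---no information. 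So local minimality plus link expansion alone cannot produce the constant $\tfrac13$, or indeed any positive constant.

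What the paper does, and what your sketch is missing, is a \emph{global} step: split vertices into ``thin'' ($d_v<(1-\epsilon)Q/2$) and ``thick'' ($d_v\geq(1-\epsilon)Q/2$), use link expansion with a strictly better constant on thin vertices, and then invoke the expansion of the $1$-skeleton $(V,E)$ itself---a $Q$-regular graph with second eigenvalue at most $6q$---to show that under $|\error|\leq\gamma|E|$ the thick vertices carry only a small fraction of the edge mass (few edges of $\error$ lie between thick vertices). This is where the weight hypothesis $|\error|\leq\gamma|E|$ actually does work, and it is what makes the constants close; optimizing $\epsilon$ (the paper takes $\epsilon=3/4$) then yields $t_1\geq\tfrac13(q+1)|\error|$ and hence $|\delta_1(\error)|\geq t_1\geq\tfrac13(q+1)|\error|$. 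Your final contradiction paragraph also conflates two different things: ``flipping $S_v$'' in the sense of adding $\delta_0(v)$ changes $|\error|$ by $Q-2d_v$, which depends only on $d_v$ and has nothing to do with how dense $S_v$ is inside $L(v)$, so density of $S_v$ cannot contradict local minimality in the way you suggest.
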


A form of Proposition~\ref{prop:1/3} (with a looser constant) is in \cite{KKL}. Theorem~\ref{thm:local}
and Proposition~\ref{prop:1/3} will be a consequence of Theorem~\ref{thm:gamma}
below.

Theorem~\ref{thm:local} and Proposition~\ref{prop:1/3} imply:

\begin{theorem}
\label{thm:localdecoding}
For $\gamma$ as in Theorem~\ref{thm:local}, assuming the Ramanujan complex
is sufficiently large, namely $|E|\geq \frac{(q+1)Q}{4\gamma}$, we have that
any error vector $\error\in\f_2^E$ of weight $|\error|\leq \frac 13\gamma
|E|$ is always correctly decoded by the decoding algorithm.
\end{theorem}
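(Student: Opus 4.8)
The plan is to show that the decoding algorithm, starting from $\ff_0 = \delta_1(\error)$, terminates with an equivalent cochain, by tracking the invariant ``the current syndrome is a coboundary of a low-weight cochain.'' First I would set up the invariant: after step $k$, we have $\ff_k = \delta_1(\error + \y_1 + \cdots + \y_k)$, so $\ff_k$ is always the $1$-coboundary of the cochain $\error^{(k)} := \error + \y_1 + \cdots + \y_k$. The algorithm only ever accepts a move that strictly decreases $|\ff_k|$, so the number of iterations is at most $|\ff_0| \leq (q+1)|\error| \leq \frac13 (q+1)Q|E| \cdot \gamma$ --- in particular it is finite --- which gives termination \emph{provided} a valid move always exists until $\ff_k = 0$. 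Hence the crux is: as long as $\ff_k \neq 0$, does Theorem~\ref{thm:local} apply to $\error^{(k)}$?

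To invoke Theorem~\ref{thm:local} we need $|\error^{(k)}| \leq \gamma|E|$. This is where Proposition~\ref{prop:1/3} enters, but it only speaks about \emph{locally minimal} cochains, so the argument must be run on a locally minimal representative. The clean way is: before each step, replace $\error^{(k)}$ by a locally minimal cochain $\tilde\error^{(k)}$ in the same class modulo $B^1(\bfX)$ --- note this does not change $\ff_k = \delta_1(\tilde\error^{(k)})$, and the algorithm's moves $\y_k$ are themselves supported in an edge-neighbourhood, so conceptually we may assume the running cochain is always locally minimal (or argue directly on the weight of the syndrome). Then, \emph{if} $|\tilde\error^{(k)}| \leq \gamma|E|$, Proposition~\ref{prop:1/3} gives $|\ff_k| = |\delta_1(\tilde\error^{(k)})| \geq \frac13(q+1)|\tilde\error^{(k)}|$, i.e.
\[
|\tilde\error^{(k)}| \leq \frac{3|\ff_k|}{q+1} \leq \frac{3|\ff_0|}{q+1} \leq 3|\error| \leq \gamma|E|,
\]
using $|\error| \leq \frac13\gamma|E|$ and $|\ff_0| \leq (q+1)|\error|$. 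This is a self-improving bound: the weight hypothesis $|\tilde\error^{(k)}| \leq \gamma|E|$ needed to apply Proposition~\ref{prop:1/3} is re-established (in fact with room to spare, $3|\error| < \gamma|E|$) from the conclusion. A short induction on $k$, seeded by $|\tilde\error^{(0)}| \leq |\error| \leq \frac13\gamma|E| \leq \gamma|E|$, then shows $|\tilde\error^{(k)}| \leq \gamma|E|$ for every $k$, so Theorem~\ref{thm:local} produces a vertex $v$ with $|\delta_1(\tilde\error^{(k)} + \tilde\error^{(k)}_v)| < |\ff_k|$ whenever $\ff_k \neq 0$. Thus a valid move exists at every step until the syndrome vanishes, and the algorithm halts with $\ff_K = 0$, meaning $\error' = \y_1 + \cdots + \y_K$ satisfies $\delta_1(\error + \error') = 0$, i.e. $\error + \error' \in Z^1(\bfX)$; because $\error + \error'$ arose from the locally-minimal-weight-controlled process it lies in $B^1(\bfX) = \Im\delta_0$, which is exactly the required equivalent error vector.

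The role of the size hypothesis $|E| \geq \frac{(q+1)Q}{4\gamma}$ is a bookkeeping point I would address at the end: it guarantees that the weight-$1$-cochain moves $\y_k$ considered by the algorithm (supported in a neighbourhood of size at most $Q$ edges) are genuinely ``small'' relative to $|E|$ and that the accumulated cochain $\error'$ stays within the regime where ``locally minimal and low weight'' forces membership in $B^1$ rather than in a nontrivial cohomology class --- concretely, one checks the total weight of $\error + \error'$ stays below the cosystolic-type threshold coming from Theorem~\ref{thm:local}/Proposition~\ref{prop:1/3}, using the iteration count bound and $Q = 2q^2+2q+2$. I expect the main obstacle to be precisely this last bookkeeping: making sure that after replacing by locally minimal representatives at each step, the final output $\error'$ differs from $\error$ by a genuine coboundary and not merely by a cocycle, and that all the weight bounds chain together with the stated constant $\frac13\gamma$ without circularity. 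Once the invariant ``$|\tilde\error^{(k)}| \leq \gamma|E|$'' is locked in by the self-improving estimate above, the rest is routine.
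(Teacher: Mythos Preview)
Your plan uses the right ingredients (Theorem~\ref{thm:local} and Proposition~\ref{prop:1/3}) and the passage to a locally minimal representative is the correct analytical device, but the induction as stated has a genuine gap. Your ``self-improving bound'' shows that \emph{if} $|\tilde\error^{(k)}|\leq\gamma|E|$ then in fact $|\tilde\error^{(k)}|\leq 3|\error|$; however this says nothing about $\tilde\error^{(k+1)}$. To get from step $k$ to step $k+1$ you need a preimage of $\ff_{k+1}$ of weight $\leq\gamma|E|$ from which to reduce to local minimality, and the natural candidate $\tilde\error^{(k)}+\y_{k+1}$ has weight $\leq 3|\error|+|\y_{k+1}|$. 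Without any control on $|\y_{k+1}|$ this is useless, and even with the (unstated) observation that one may take $|\y_{k+1}|\leq Q/2$ (since $\y_{k+1}$ and $\y_{k+1}+\delta_0(v_{k+1})$ have the same coboundary), the bound $3|\error|+Q/2\leq\gamma|E|$ fails exactly at the boundary $|\error|=\frac13\gamma|E|$. So the induction does not close by itself; this is not merely ``bookkeeping'' but the crux of the argument, and it is precisely where the size hypothesis $|E|\geq\frac{(q+1)Q}{4\gamma}$ must do real work.

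The paper resolves this with a two-phase argument rather than a one-step induction. It first observes $|\y_k|\leq Q/2$, then uses the crude bound $|\error_k|\leq|\error|+kQ/2$ for the first $k_0=(q+1)Q/6$ steps (the size hypothesis ensures $k_0\leq\frac23\gamma|E|$, so one stays in the regime where Theorem~\ref{thm:local} applies). After $k_0$ steps the syndrome has dropped by at least $k_0$, which via Proposition~\ref{prop:1/3} forces the locally minimal representative to have weight $\leq\gamma|E|-Q/2$; one further step then lands back at weight $\leq\gamma|E|$, and the argument iterates. Your final concern---that $\error+\error'$ lies in $B^1$ rather than merely $Z^1$---is handled exactly as you suspect: once the invariant is in place, the terminal cochain has a locally minimal equivalent of weight $\leq\gamma|E|$ and zero coboundary, hence weight zero by Proposition~\ref{prop:1/3}.
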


\begin{proof}
Set $\error_0=\error$ and for $k\geq 1$,  $\error_k=\error + \y_1+\cdots \y_k$
where $\y_k$ is as specified by the decoding algorithm at the $k$th step.
We remark that we can suppose $|\y_k|\leq Q/2$ for every $k$,
since $\y_k$ is defined inside the neighbourhood of a vertex $v_k$,
and $\y_k$ and $\y_k+\delta_0(v_k)$ (its complement in the neighbourhood of
$v_k$) have the same $\delta_1$ coboundary. Therefore, for the first
$k_0=(q+1)Q/6\leq \frac 23\gamma|E|$ steps of the algorithm, we are guaranteed to
have $|\error_k|\leq \gamma |E|$.
At the end of these $k_0$
steps, if the algorithm has not terminated, and since the coboundary
of $\error_k$ decreases by at least $1$ at every step, 
we have $|\delta_1(\error_{k_0})|\leq|\delta_1(\error)|-(q+1)Q/6$.
Since $|\delta_1(\error)|\leq (q+1)|\error|\leq(q+1)\frac 13\gamma |E|$, we get
\[
|\delta_1(\error_{k_0})|\leq \frac 13(q+1)(\gamma |E|-Q/2).
\]
Since we know that $|\error_{k_0}|\leq \gamma |E|$,
Proposition~\ref{prop:1/3} now implies that $\error_{k_0}$ is
equivalent to a minimal cochain of weight not more than
$\gamma |E|-Q/2$. Since $\y_{k_0+1}$ has weight at most $Q/2$
we get that $\error_{k_0+1}$ is equivalent to a cochain of weight
not more than $\gamma |E|$, and applying again the same argument,
we have that it is also equivalent to a cochain of weight at most
$\gamma |E|-Q/2$. Iterating, we get that $\error_k$ is always equivalent to a
cochain of weight at most $\gamma |E|$. In particular Theorem~\ref{thm:local}
always applies, and the algorithm must terminate 
with some cochain
$\error'=\y_1+\cdots \y_k$ that has the same coboundary as $\error$.
Since a minimal cochain equivalent to $\error +\error'=\error_k$ must 
have weight not more than $\gamma |E|$, this minimal cochain
must be zero by Proposition~\ref{prop:1/3}. So $\error'$ is a correct solution to the decoding problem.
\end{proof}

\paragraph{Estimation of the constant $\gamma$.} Theorem~\ref{thm:gamma} below
will show that $\gamma$ in Theorems~\ref{thm:local} and \ref{thm:localdecoding}
can be taken to be arbitrarily close to $1/192$. We recall from
\cite{KKL} that the $1$-cosystole of the Ramanujan complex can,
for large $q$, be bounded from below by a quantity arbitrarily close to $|E|/4$.
We have $4\gamma/3=1/144$,
in other words, the decoding algorithm is shown to decode errors of weight up to
a $1/144$ fraction of the designed coboundary distance.

\subsection{Analysis and proof of Theorem~\ref{thm:local}}
We first proceed to translate the statement of Theorem~\ref{thm:local} into 
what it means in terms of the triangles of the complex that are incident to
edges of $\error$. Without loss of
generality, we may suppose that $\error$ is locally minimal. (In fact we also could suppose it to be minimal, but only local minimality will be needed).

Let $T_1$ (resp. $T_2,T_3$) denote the set of triangles that have exactly one edge
(resp. two edges, three edges) in $\error$. For a vertex $v$ define $T_1(v, good)$ to be the
set of triangles of $T_1$ incident to $v$ and containing an edge of $\error$ incident
to $v$ and $T_1(v, neutral)$ to be the set of triangles of $T_1$ incident to $v$ and
containing no edge of $\alpha$ incident to $v$. Let $T_2(v,bad)$ be the set of triangles of $T_2$ incident to $v$ containing exactly
one edge of $\alpha$ incident to $v$, and let $T_2(v,neutral)$ be the set of
triangles of $T_2$ incident to $v$, containing two edges of $\alpha$ incident to
$v$. 

We remark that $\delta_1(\error)$ is the $2$-cochain consisting of the union of
the triangles of $T_1$ and $T_3$. We also remark that when add
$\delta_1(\error_v)$ to $\delta_1(\error)$, the set $T_1(v, good)$ disappears
from $\delta_1(\error)$ and the set $T_2(v, bad)$ is added to
$\delta_1(\error)$. The other triangles incident to $v$ do not intervene in the
operation. This is illustrated in Figures~\ref{fig:T1} and~\ref{fig:T2} where
the edges of $\error$ are in blue.

Summarising, we have
$|\delta_1(\error)+\delta_1(\error_v)|<|\delta_1(\error)|$ if and only if
\[
|T_1(v,good)|>|T_2(v,bad)|.
\]
Now, summing over all $v\in V$ we observe that:
\begin{equation}
\label{eq:T1}
\sum_{v\in V}|T_1(v,good)| = 2|T_1|
\end{equation}
and
\begin{equation}
\label{eq:T2}
\sum_{v\in V}|T_2(v,bad)| = 2|T_2|.
\end{equation}
Therefore, whenever $|T_1|>|T_2|$ there must exist a vertex such that
$|T_1(v,good)|>|T_2(v,bad)|$. So to prove Theorem~\ref{thm:local}, we only need
to prove that for any sufficiently small cochain $\error$, we have
$|T_1|>|T_2|$.

From now on we set $t_i=|T_i|$ for $i=1,2,3$.

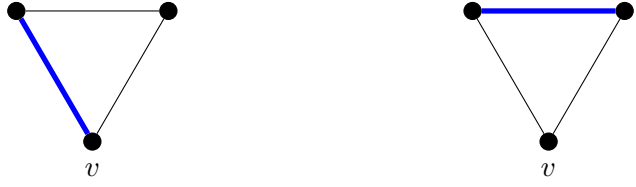
\begin{figure}
\begin{center}
\begin{tikzpicture}
\tikzstyle{every node}=[fill, circle, inner sep=0.08cm, draw]
\node[label=below:{$v$}] (a) at (0,0) {};
\node (b) at (60:2cm) {};
\node (c) at (120:2cm) {};

\draw (a) -- (b) -- (c);
\draw[color=blue, line width=0.7mm] (a) -- (c); 

\node[label=below:{$v$}] (a') at (6,0) {};
\node[xshift=6cm] (b') at (60:2cm) {};
\node[xshift=6cm] (c') at (120:2cm) {};

\draw[color=blue, line width=0.7mm]  (b') -- (c') {};
\draw (b') -- (a') -- (c'); 

\end{tikzpicture}
\end{center}
\caption{the left triangle is in $T_1(v,good)$: it {\em is} in the coboundary of
$\error$, and when $\error_v$ is flipped, it {\em disappears} from the coboundary.
The right triangle is in $T_1(v,neutral)$: it {\em is} in the coboundary of
$\error$
and when $\error_v$ is flipped it {\em stays} in the coboundary of $\error$.}
\label{fig:T1}
\end{figure}

\begin{figure}
\begin{center}
\begin{tikzpicture}
\tikzstyle{every node}=[fill, circle, inner sep=0.08cm, draw]
\node[label=below:{$v$}] (a) at (0,0) {};
\node (b) at (60:2cm) {};
\node (c) at (120:2cm) {};

\draw (a) -- (b);
\draw[color=blue, line width=0.7mm] (a) -- (c) -- (b);

\node[label=below:{$v$}] (a') at (6,0) {};
\node[xshift=6cm] (b') at (60:2cm) {};
\node[xshift=6cm] (c') at (120:2cm) {};

\draw (b') -- (c');
\draw[color=blue, line width=0.7mm] (b') -- (a') -- (c');
\end{tikzpicture}
\end{center}
\caption{the left triangle is in $T_2(v,bad)$: it {\em is not} in the coboundary of
$\alpha$, and when $\error_v$ is flipped, it {\em is} in the coboundary.
The right triangle is in $T_2(v,neutral)$: it {\em is not} in the coboundary of
$\error$
and when $\error_v$ is flipped it {\em stays out} of the coboundary of $\error$.}
\label{fig:T2}
\end{figure}
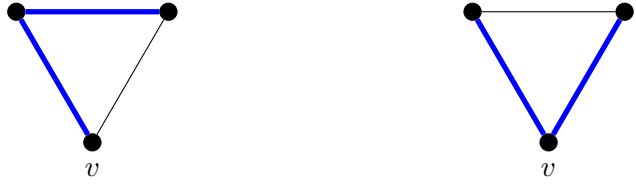

We recall the relevant properties of the $2$-dimensional Ramanujan complex
$\bfX=(V,E,T)$. Its underlying graph $(V,E)$ is a $Q$-regular graph with
$Q=2(q^2+q+1)$ for $q$ a prime power. It is an expander graph with the second
largest eigenvalue of its adjacency matrix at most $6q$ (\cite[Prop.~2.1.]{LSV1}). Furthermore,
the link $L(v)$ of any vertex $v$ is isomorphic to the vertex-edge incidence
graph of a projective plane of order $q$, which is well-known to be
a $(q+1)$-regular graph with eigenvalues $\pm (q+1)$ and $\pm\sqrt{q}$.
Recall that
for any vertex $v\in V$, the {\em link} of $v$ is defined as the graph $L(v)$ over the
$Q$ neighbours of $v$, with any two neighbours $u,w$ of $v$ being connected in
$L(v)$ whenever $u,v,w$ is a triangle of $T$.

We recall the classical relation between expansion and spectra of graphs.
Let $G=(V,E)$ be a finite connected graph, $A$ its adjacency matrix and $\Delta$
its Laplacian, i.e., $\Delta:L^2(X) \rightarrow L^2(X)$ defined by
$\Delta(f)(v)=deg(v)f(v)-\sum_{y \thicksim v}f(y)$ where the sum is over the
neighbours of $v$ and $\thicksim$ stands for adjacency in $G$. If $G$ is
$k$-regular then $\Delta=kI-A$. For $W_1,W_2\subset V$, let $E(W_1,W_2)$ denote
the set of edges for vertices of $W_1$ to vertices of $W_2$, and let $\bar{W}$
denote the complement of $W$ in $V$.
We have the following result 
that goes back to Alon and Milman (see e.g. \cite{HLW}):

\begin{proposition}~\label{prop-cheeger} Let $\lambda=\lambda_1(X)$ be the smallest positive eigenvalue of $\Delta$.
\begin{enumerate}
\item For every subset $W \subseteq V$,
$$ |E(W,\bar{W})| \geq \frac{|W||\bar{W}|}{|V|} \lambda_1(X),$$ 

\item If $X$ is $k$-regular then $E(W):=E(W,W)$ satisfies:
$$ 
E(W)=\frac{1}{2}(k|W|-E(W,\bar{W})) \leq \frac{1}{2}(k-\frac{\bar{W}}{|V|}\lambda_1(X))|W|.
$$
\end{enumerate}
\end{proposition}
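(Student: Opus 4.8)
The plan is to derive both parts from the Rayleigh-quotient characterization of the Laplacian spectral gap, evaluated on the indicator vector $\mathbf{1}_W$ of $W$. First I would record the elementary identity, valid for any $f:V\to\bR$,
\[
\langle\Delta f,f\rangle=\sum_{\{u,v\}\in E}(f(u)-f(v))^2,
\]
which follows by writing $\langle\Delta f,f\rangle=\sum_v f(v)\bigl(\deg(v)f(v)-\sum_{y\sim v}f(y)\bigr)$ and collecting the contributions of each edge (the degree term gives $\sum_{\{u,v\}}(f(u)^2+f(v)^2)$ and the adjacency term gives $\sum_{\{u,v\}}2f(u)f(v)$). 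Taking $f=\mathbf{1}_W$, an edge contributes $1$ to this sum precisely when exactly one endpoint lies in $W$, so $\langle\Delta f,f\rangle=|E(W,\bar W)|$.

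Next I would expand $f=\mathbf{1}_W$ in an orthonormal eigenbasis $\phi_0,\phi_1,\dots,\phi_{|V|-1}$ of $\Delta$ with eigenvalues $0=\mu_0<\mu_1\leq\mu_2\leq\cdots$; since $G$ is connected, $0$ is a simple eigenvalue with $\phi_0=|V|^{-1/2}\mathbf{1}$, and $\mu_1=\lambda_1(X)$. Writing $f=\sum_i c_i\phi_i$, one has $c_0=\langle f,\phi_0\rangle=|W|/\sqrt{|V|}$ and $\|f\|^2=|W|=\sum_i c_i^2$, hence $\sum_{i\geq1}c_i^2=|W|-|W|^2/|V|=|W||\bar W|/|V|$. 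Then
\[
|E(W,\bar W)|=\langle\Delta f,f\rangle=\sum_{i\geq1}\mu_i c_i^2\ \geq\ \mu_1\sum_{i\geq1}c_i^2=\lambda_1(X)\,\frac{|W||\bar W|}{|V|},
\]
which is part~(1).

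For part~(2), I would note that when $X$ is $k$-regular, summing $\deg(v)=k$ over $v\in W$ counts each edge inside $W$ twice and each edge of $E(W,\bar W)$ once, so $k|W|=2E(W)+|E(W,\bar W)|$, i.e. $E(W)=\tfrac12\bigl(k|W|-|E(W,\bar W)|\bigr)$ — the displayed equality. Substituting the lower bound from part~(1) in the form $|E(W,\bar W)|\geq\tfrac{|\bar W|}{|V|}\lambda_1(X)\,|W|$ gives $E(W)\leq\tfrac12\bigl(k-\tfrac{|\bar W|}{|V|}\lambda_1(X)\bigr)|W|$, as claimed. I do not expect a genuine obstacle here; the only points needing care are the connectedness hypothesis (so that $\mu_1=\lambda_1(X)>0$ really is the spectral gap, otherwise the bound is false) and the factor-of-two bookkeeping in the edge count of part~(2).
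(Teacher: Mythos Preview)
Your proof is correct and is the standard Rayleigh-quotient argument. Note that the paper does not actually prove this proposition: it is stated as a classical fact ``that goes back to Alon and Milman'' with a reference to the survey \cite{HLW}, so there is nothing to compare against beyond observing that your argument is exactly the one a reader following that reference would find.
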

 
Some of the lemmas below were used in \cite{KKL}. We include them for the sake of
completeness.

\begin{lemma}~\label{lemma-triangles-counting-in-dim-two}
\begin{enumerate}
\item~\label{item-one-lemma-triangles-counting-in-dim-two}
$t_1+2t_2+3t_3=(q+1)|\error|$.
\item~\label{item-three-lemma-triangles-counting-in-dim-two} $\sum_{v \in
V}|E_{L(v)}(\error_v, \overline{\error_v})|=2t_1+2t_2$.
\end{enumerate}
Here we have identified
 $\error_v$, which is the set of edges in $\error$ touching $v$, with the set of
their endpoints in the link $L(v)$.
$E_{L(v)}(\error_v, \overline{\error_v})$ denotes therefore the set of edges from
$\error_v$ to $\overline{\error_v}$ in $L(v)$.
\end{lemma}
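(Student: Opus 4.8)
The plan is to prove both identities by double counting incidences between the edges of $\error$ and the triangles of the complex.

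For part~(1) I would count pairs $(e,\tau)$ with $e$ an edge of $\error$ contained in a triangle $\tau\in T$. Fixing $e$ first, each edge of the $2$-dimensional Ramanujan complex lies in exactly $q+1$ triangles, so the number of such pairs is $(q+1)|\error|$. Fixing $\tau$ first, a triangle contributes the number of its edges lying in $\error$, namely $0$, $1$, $2$ or $3$ according to whether $\tau$ has no edge in $\error$, or $\tau\in T_1$, $T_2$, $T_3$ respectively; summing gives $t_1+2t_2+3t_3$. Equating the two counts proves~(1).

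For part~(2) I would fix $v\in V$ and use the local description: the vertices of $L(v)$ are the neighbours of $v$, with $\error_v$ identified with $\{u : uv\in\error\}$, and $\{u,w\}$ is an edge of $L(v)$ precisely when $(u,v,w)$ is a triangle. Hence an edge of $L(v)$ joining $\error_v$ to $\overline{\error_v}$ corresponds bijectively to a triangle $\tau=(u,v,w)$ with $uv\in\error$ and $vw\notin\error$, i.e. a triangle incident to $v$ exactly one of whose two edges at $v$ lies in $\error$. I then split on whether $uw\in\error$: if not, $\tau$ has a single edge in $\error$ and that edge touches $v$, so $\tau\in T_1(v,good)$; if so, $\tau$ has exactly two edges in $\error$ with exactly one of them at $v$, so $\tau\in T_2(v,bad)$. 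Conversely every triangle of $T_1(v,good)\cup T_2(v,bad)$ arises from exactly one such link-edge, so $|E_{L(v)}(\error_v,\overline{\error_v})| = |T_1(v,good)| + |T_2(v,bad)|$. Summing over $v\in V$ and invoking \eqref{eq:T1} and \eqref{eq:T2} yields $2t_1+2t_2$.

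Both parts are elementary double counting; the only step needing care is the local classification in part~(2). One must check that link-edges across the cut $(\error_v,\overline{\error_v})$ are exactly the triangles in $T_1(v,good)\cup T_2(v,bad)$ — in particular that triangles in $T_3$, and triangles meeting $v$ in two $\error$-edges (the ``neutral'' types), contribute nothing — so that the global count lands on $2t_1+2t_2$ with no stray $t_3$ term. An equivalent route to the global identity, bypassing \eqref{eq:T1}--\eqref{eq:T2}, is to count directly: a $T_1$-triangle contributes the two endpoints of its unique $\error$-edge, while a $T_2$-triangle contributes the two endpoints other than the apex where its two $\error$-edges meet.
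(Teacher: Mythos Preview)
Your proof is correct and follows essentially the same approach as the paper. The paper's proof is terser---it simply observes that $E_{L(v)}(\error_v,\overline{\error_v})$ counts the triangles in $T_1(v,good)\cup T_2(v,bad)$ and then invokes \eqref{eq:T1} and \eqref{eq:T2}---while you spell out the case analysis and bijection carefully; but the underlying argument is identical.
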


\begin{proof} For point \ref{item-one-lemma-triangles-counting-in-dim-two}. 
we recall that every edge lies on $q+1$ triangles and a triangle which
contributes to $t_i$ contains $i$ edges from $\error$. 

For point \ref{item-three-lemma-triangles-counting-in-dim-two}. we observe that
$E_{L(v)}(\error_v, \overline{\error_v})$ counts the triangles of $T_1(v, good)$
and of $T_2(v,bad)$ (see figures~\ref{fig:T1} and \ref{fig:T2}) and apply
\eqref{eq:T1} and \eqref{eq:T2}.
\end{proof}

Fix now $\epsilon$, $0<\epsilon <1$ to be determined later and define:

\begin{definition} A vertex $v$ incident to an edge of $\error$ 
is called {\em thin} with respect to
$\error$ if $|\error_v| < (1-\epsilon)\frac{Q}{2}$ and {\em thick} otherwise
(note that by our local minimality assumption, $|\error_v| \leq \frac{Q}{2}$ for
every $v$). Denote by $R$ the set of thin vertices and by $S$ the set of thick
vertices.
\end{definition}
Let $r=\sum_{v \in R}|\error_v|$ and $s=\sum_{v \in S}|\error_v|$. As every edge
in $\error$ contributes $2$ to $r+s$ we get the following:

\begin{lemma}~\label{lemma-$r+s$}
$r+s = 2|\error|.$
\end{lemma}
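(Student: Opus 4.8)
The plan is a one-line double-counting argument on the edge--vertex incidences of $\error$, exactly as hinted in the sentence preceding the statement; I would simply spell it out carefully.

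First I would record that any vertex $v\in V$ not incident to an edge of $\error$ satisfies $\error_v=0$, and hence contributes nothing to a sum of the form $\sum_v|\error_v|$. Since the set $R$ of thin vertices and the set $S$ of thick vertices together comprise exactly the vertices incident to an edge of $\error$ (both notions being defined only for such vertices), this lets me rewrite
\[
r+s=\sum_{v\in R}|\error_v|+\sum_{v\in S}|\error_v|=\sum_{v\in V}|\error_v|.
\]

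Next I would evaluate the right-hand side by exchanging the order of summation: by definition $|\error_v|$ counts the edges of $\error$ incident to $v$, so
\[
\sum_{v\in V}|\error_v|=\sum_{e\in\error}\bigl|\{v\in V:\ v\in e\}\bigr|,
\]
and because $\bfX$ is simplicial every edge $e$ has exactly two endpoints, whence the right-hand side equals $2|\error|$. Combining the two displays gives $r+s=2|\error|$, which is the claim.

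There is no genuine obstacle here; the only point requiring a moment's care is that $R$ and $S$ are defined among vertices incident to $\error$ only, so one must explicitly remark that all remaining vertices contribute zero before extending the sum to all of $V$.
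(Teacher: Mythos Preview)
Your proof is correct and is precisely the double-counting argument the paper uses; the paper just states it in the single sentence ``every edge in $\error$ contributes $2$ to $r+s$'' preceding the lemma, while you have spelled out the details (in particular the remark that vertices outside $R\cup S$ contribute zero).
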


\begin{lemma}~\label{lemma-edges-exiting-alpha-v}
\begin{enumerate}
\item~\label{item-one-lemma-edges-exiting-alpha-v} For every $v \in V$, 
$|E_{L(v)}(\error_v, \overline{\error_v})| \geq \frac{1}{2}(q+1-\sqrt{q})|\error_v|$.
\item~\label{item-two-lemma-edges-exiting-alpha-v} If $v$ is thin, then
$|E_{L(v)}(\error_v, \overline{\error_v})| \geq
\frac{1+\epsilon}{2}(q+1-\sqrt{q})|\error_v|$.
\end{enumerate}
\end{lemma}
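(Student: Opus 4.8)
The whole lemma is an application of the spectral edge-expansion bound of Proposition~\ref{prop-cheeger} inside the link $L(v)$. Recall that $L(v)$ is the point–line incidence graph of a projective plane of order $q$: it is a $(q+1)$-regular bipartite graph on $Q=2(q^2+q+1)$ vertices, it is connected, and its adjacency eigenvalues are $\pm(q+1)$ and $\pm\sqrt q$. Hence its Laplacian $\Delta=(q+1)I-A$ has eigenvalues $0$ (simple, by connectedness), $(q+1)-\sqrt q$, $(q+1)+\sqrt q$ and $2(q+1)$, so the smallest positive Laplacian eigenvalue of $L(v)$ is exactly
\[
\lambda_1(L(v)) = (q+1)-\sqrt q .
\]

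Now view $\error_v$ as a subset $W$ of the vertex set of $L(v)$; then $|W|=|\error_v|$ and $|\overline{W}|=Q-|\error_v|$. Applying Proposition~\ref{prop-cheeger}(1) to the graph $L(v)$ gives
\[
|E_{L(v)}(\error_v,\overline{\error_v})| \;\geq\; \frac{|\error_v|\,(Q-|\error_v|)}{Q}\,(q+1-\sqrt q).
\]
This is the single estimate from which both parts follow; it remains only to bound the factor $(Q-|\error_v|)/Q$ from below in the two regimes.

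For part \ref{item-one-lemma-edges-exiting-alpha-v}, recall that by the local minimality assumption we have $|\error_v|\leq Q/2$ for every $v$, hence $(Q-|\error_v|)/Q\geq 1/2$, which yields $|E_{L(v)}(\error_v,\overline{\error_v})|\geq \tfrac12(q+1-\sqrt q)|\error_v|$. For part \ref{item-two-lemma-edges-exiting-alpha-v}, if $v$ is thin then $|\error_v|<(1-\epsilon)Q/2$, so
\[
\frac{Q-|\error_v|}{Q} \;=\; 1-\frac{|\error_v|}{Q} \;>\; 1-\frac{1-\epsilon}{2} \;=\; \frac{1+\epsilon}{2},
\]
and plugging this in gives the claimed bound $|E_{L(v)}(\error_v,\overline{\error_v})|\geq \tfrac{1+\epsilon}{2}(q+1-\sqrt q)|\error_v|$.

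There is essentially no obstacle here: the only point requiring a little care is identifying the relevant Laplacian eigenvalue, namely checking that the small eigenvalue comes from the $+\sqrt q$ adjacency eigenvalue (the $-(q+1)$ eigenvalue of the bipartite graph contributes the \emph{largest} Laplacian eigenvalue $2(q+1)$ and is irrelevant), and invoking connectedness of the incidence graph so that $0$ is simple. Everything else is a one-line substitution into Proposition~\ref{prop-cheeger}.
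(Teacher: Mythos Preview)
Your proof is correct and follows exactly the paper's approach: apply Proposition~\ref{prop-cheeger}(1) in the link $L(v)$ with $\lambda_1(L(v))=(q+1)-\sqrt q$, then use the bounds $|\error_v|\leq Q/2$ (local minimality) and $|\error_v|<(1-\epsilon)Q/2$ (thinness) to control the factor $(Q-|\error_v|)/Q$. The paper's proof is terser but identical in substance; your added remarks on why $\lambda_1$ equals $(q+1)-\sqrt q$ rather than something coming from the $-(q+1)$ eigenvalue are a helpful clarification.
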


\begin{proof} As previously recalled, the link
$L(v)$ is a $(q+1)$-regular graph whose eigenvalues are $\pm(q+1)$ and $\pm \sqrt{q}$. Hence,
$\lambda_1(L(v)) = (q+1)-\sqrt{q}$. Part~\ref{item-one-lemma-edges-exiting-alpha-v} now follows from Proposition~\ref{prop-cheeger}, and similarly
part~\ref{item-two-lemma-edges-exiting-alpha-v}.
\end{proof}

We can deduce

\begin{lemma}~\label{lemma-2t-1+2t-2}
$2t_1 + 2t_2 = \sum_{v\in V}E_{L(v)}(\error_v, \overline{\error_v}) \geq
\frac 12(q+1-\sqrt{q})(r+s) + \frac{\epsilon}{2} (q+1-\sqrt{q}) r $.
\end{lemma}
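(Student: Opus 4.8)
The plan is to derive this directly from two already-established facts: the counting identity of Lemma~\ref{lemma-triangles-counting-in-dim-two}\ref{item-three-lemma-triangles-counting-in-dim-two}, which gives $2t_1+2t_2=\sum_{v\in V}|E_{L(v)}(\error_v,\overline{\error_v})|$ and hence the equality in the statement for free, and the per-vertex edge-expansion bounds of Lemma~\ref{lemma-edges-exiting-alpha-v}. The only idea needed is to split the vertex sum according to the thin/thick dichotomy and apply the sharper bound on the thin part.

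First I would observe that a vertex $v$ not incident to any edge of $\error$ has $\error_v=\emptyset$ and contributes $0$ to the sum, so that $\sum_{v\in V}|E_{L(v)}(\error_v,\overline{\error_v})|$ is really a sum over the vertices incident to $\error$, which by definition are partitioned into the set $R$ of thin vertices and the set $S$ of thick vertices. On $S$ I apply Lemma~\ref{lemma-edges-exiting-alpha-v}\ref{item-one-lemma-edges-exiting-alpha-v}, $|E_{L(v)}(\error_v,\overline{\error_v})|\geq\frac12(q+1-\sqrt q)|\error_v|$, and on $R$ the stronger Lemma~\ref{lemma-edges-exiting-alpha-v}\ref{item-two-lemma-edges-exiting-alpha-v}, $|E_{L(v)}(\error_v,\overline{\error_v})|\geq\frac{1+\epsilon}2(q+1-\sqrt q)|\error_v|$. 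Summing separately over $R$ and over $S$ and recalling $r=\sum_{v\in R}|\error_v|$, $s=\sum_{v\in S}|\error_v|$ gives
\[
\sum_{v\in V}|E_{L(v)}(\error_v,\overline{\error_v})|\;\geq\;\tfrac{1+\epsilon}2(q+1-\sqrt q)\,r+\tfrac12(q+1-\sqrt q)\,s.
\]
Finally I rewrite $\tfrac{1+\epsilon}2(q+1-\sqrt q)r=\tfrac12(q+1-\sqrt q)r+\tfrac\epsilon2(q+1-\sqrt q)r$ and regroup the two $\tfrac12(q+1-\sqrt q)$ terms into $\tfrac12(q+1-\sqrt q)(r+s)$, which is exactly the claimed inequality.

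There is no genuine obstacle in this particular lemma: the substantive work has already been absorbed into the facts that the link $L(v)$ is a $(q+1)$-regular bipartite graph with second eigenvalue $\sqrt q$ (so $\lambda_1(L(v))=q+1-\sqrt q$) and into the Cheeger-type estimate of Proposition~\ref{prop-cheeger} underlying Lemma~\ref{lemma-edges-exiting-alpha-v}. The present statement is the bookkeeping step that packages those per-vertex estimates into a single global inequality, arranged so that the surplus term $\tfrac\epsilon2(q+1-\sqrt q)r$ is retained; this surplus is what will subsequently be played against the counting identity $t_1+2t_2+3t_3=(q+1)|\error|$ of Lemma~\ref{lemma-triangles-counting-in-dim-two}\ref{item-one-lemma-triangles-counting-in-dim-two} to force $t_1>t_2$ for sufficiently small $\error$, and thence prove Theorem~\ref{thm:local}.
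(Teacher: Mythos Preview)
Your proof is correct and follows essentially the same approach as the paper: invoke Lemma~\ref{lemma-triangles-counting-in-dim-two}\ref{item-three-lemma-triangles-counting-in-dim-two} for the equality, split the sum over $R$ and $S$, apply the two parts of Lemma~\ref{lemma-edges-exiting-alpha-v}, and regroup. Your explicit remark that vertices not incident to $\error$ contribute nothing is a harmless clarification not spelled out in the paper, but otherwise the argument is identical.
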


\begin{proof}
\begin{align*}
2t_1 + 2t_2 = \sum_{v \in V}E_{L(v)}(\error_v, \overline{\error_v})
&=    \sum_{v \in R}E_{Y_v}(\error_v, \overline{\error_v}) + \sum_{v \in
S}E_{L(v)}(\error_v, \overline{\error_v}) \\
&\geq  \frac{1+\epsilon}{2}(q+1-\sqrt{q})r + \frac{1}{2}(q+1-\sqrt{q})s \\
&=  \frac{1}{2}(q+1-\sqrt{q})(r+s)+\frac{\epsilon}{2} (q+1-\sqrt{q})r.
\end{align*}
In the first equation we have used
Lemma~\ref{lemma-triangles-counting-in-dim-two}, point
\ref{item-three-lemma-triangles-counting-in-dim-two}. 
The inequality follows from Lemma~\ref{lemma-edges-exiting-alpha-v}.
\end{proof}

\begin{lemma}~\label{lemma-2t-2}
We have:
\[
2t_2 \leq 2\sum_{v \in V}E_{L(v)}(\error_v,\error_v) \leq 
 (q+1)\left(\frac{s+r}{2}-\frac{\epsilon r}{2}\right) +
\frac{\sqrt{q}}{2}(s+r(1+\epsilon)).
\]
\end{lemma}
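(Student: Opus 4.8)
The plan is to prove the two inequalities separately, the left one by a triangle-counting identity and the right one by applying the Cheeger-type bound of Proposition~\ref{prop-cheeger} to each link.

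\textbf{The lower bound.} First I would identify what $\sum_{v\in V}E_{L(v)}(\error_v,\error_v)$ counts. An edge $\{u,w\}$ of $L(v)$ with both endpoints in $\error_v$ corresponds to a triangle $(u,v,w)\in T$ with $vu,vw\in\error$, hence to a triangle having at least two edges in $\error$. A triangle of $T_2$ has its two $\error$-edges meeting in exactly one vertex $v$, so it is counted exactly once (at that $v$); a triangle of $T_3$ is counted once at each of its three vertices. Thus $\sum_{v\in V}E_{L(v)}(\error_v,\error_v)=t_2+3t_3\geq t_2$, which gives $2t_2\leq 2\sum_{v\in V}E_{L(v)}(\error_v,\error_v)$.

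\textbf{The upper bound.} Each link $L(v)$ is a $(q+1)$-regular graph on $Q=2(q^2+q+1)$ vertices with smallest positive Laplacian eigenvalue $\lambda_1(L(v))=(q+1)-\sqrt q$. Applying Proposition~\ref{prop-cheeger}(2) with $W=\error_v$ (so $|\bar W|=Q-|\error_v|$) and simplifying the resulting expression, I get
\[
E_{L(v)}(\error_v,\error_v)\ \leq\ \frac{\sqrt q}{2}|\error_v|+\frac{(q+1)-\sqrt q}{2Q}|\error_v|^2 .
\]
Summing over $v$ and using $\sum_{v\in V}|\error_v|=r+s$ handles the linear term, so the only remaining task is to bound $\sum_{v\in V}|\error_v|^2$, and this is exactly where the thin/thick dichotomy is used: for $v\in R$ thin we have $|\error_v|\leq(1-\epsilon)\frac Q2$, hence $|\error_v|^2\leq(1-\epsilon)\frac Q2|\error_v|$, while for $v\in S$ thick, local minimality gives $|\error_v|\leq\frac Q2$, hence $|\error_v|^2\leq\frac Q2|\error_v|$. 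Therefore $\sum_{v\in V}|\error_v|^2\leq\frac Q2\big(s+(1-\epsilon)r\big)$; the factor $Q$ cancels the $Q$ in the denominator and, after multiplying by $2$, one obtains
\[
2\sum_{v\in V}E_{L(v)}(\error_v,\error_v)\ \leq\ \sqrt q\,(r+s)+\frac{(q+1)-\sqrt q}{2}\big(s+(1-\epsilon)r\big).
\]
A purely algebraic rearrangement — split $\sqrt q(r+s)$ and collect the coefficients of $(q+1)$ and of $\sqrt q$ — rewrites this right-hand side as $(q+1)\big(\frac{s+r}{2}-\frac{\epsilon r}{2}\big)+\frac{\sqrt q}{2}\big(s+r(1+\epsilon)\big)$, which is the asserted bound.

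\textbf{Expected main obstacle.} There is no conceptual difficulty once the two ingredients (the triangle identity and Proposition~\ref{prop-cheeger}) are in place; the only delicate point is the bookkeeping that makes the $Q$-factor disappear and puts the estimate in exactly the stated form — in particular tracking the extra $\tfrac{\epsilon r}{2}$ slack contributed by the thin vertices, since it is precisely this gain that will be played against the lower bound of Lemma~\ref{lemma-2t-1+2t-2} to force $t_1>t_2$.
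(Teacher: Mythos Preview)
Your proposal is correct and is essentially the same argument as the paper's: both use the triangle-counting identity $\sum_v E_{L(v)}(\error_v,\error_v)=t_2+3t_3$ for the left inequality, and both feed the thin/thick upper bounds $|\error_v|\leq (1-\epsilon)\tfrac Q2$ resp.\ $|\error_v|\leq\tfrac Q2$ into Proposition~\ref{prop-cheeger}(2) for the right one. The only cosmetic difference is that the paper substitutes the bound on $|\overline{\error_v}|/Q$ directly into Proposition~\ref{prop-cheeger}(2) to get a linear estimate in $|\error_v|$ before summing, whereas you first expand the Cheeger bound as $\tfrac{\sqrt q}{2}|\error_v|+\tfrac{\lambda_1}{2Q}|\error_v|^2$ and then bound $|\error_v|^2$; the two computations are algebraically identical.
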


\begin{proof}

We have
\[
2t_2\leq 2\sum_{v\in v}E_{L(v)}(\error_v,\error_v)
\]
so we can apply Proposition~\ref{prop-cheeger} point 3. and write
\begin{align*}
E_{L(v)}(\error_v,\error_v) &\leq \frac 12 \left(q+1-\frac 12\lambda_1\right)|\error_v|\\
&\leq \frac 12 \left(\frac{q+1}{2}+\frac{\sqrt{q}}{2}\right)|\error_v|
\end{align*}
for thick vertices and
\begin{align*}
E_{L(v)}(\error_v,\error_v) &\leq \frac 12 \left(q+1-\frac
{1+\epsilon}{2}\lambda_1\right)|\error_v|\\
&\leq \frac 12 \left(\frac{q+1}{2}+\frac{\sqrt{q}}{2}-\epsilon\lambda_1\right)|\error_v|
\end{align*}
for thin vertices, where we have used $\lambda_1=q+1-\sqrt{q}$.\\[2mm]
Summing, we get the result.
\end{proof}

Lemmas ~\ref{lemma-2t-1+2t-2} and \ref{lemma-2t-2} give us:
\begin{align*}
2t_1+2t_2 &\geq
(q+1)\left(\frac{s+r(1+\epsilon)}{2}\right) -(s+r(1+\epsilon))\frac{\sqrt{q}}{2}\\
2t_2 &\leq (q+1)\left(\frac{s+r(1-\epsilon)}{2}\right) +(s+r(1+\epsilon))\frac{\sqrt{q}}{2}.
\end{align*}
Substracting the second inequality to the first gives:

\begin{lemma}
\label{lem:2t1}
$2t_1\geq (q+1)r\epsilon -(s+r(1+\epsilon))\sqrt{q}.$
\end{lemma}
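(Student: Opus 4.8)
The plan is to combine the two estimates already in hand: Lemma~\ref{lemma-2t-1+2t-2} gives a lower bound on $2t_1+2t_2$, and Lemma~\ref{lemma-2t-2} gives an upper bound on $2t_2$, so subtracting the latter from the former isolates $2t_1$.

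First I would put both bounds in a common shape. Expanding $\tfrac12(q+1-\sqrt q)(r+s)+\tfrac\epsilon2(q+1-\sqrt q)r$ and separating the coefficient of $q+1$ from that of $\sqrt q$, the bound of Lemma~\ref{lemma-2t-1+2t-2} becomes
$$2t_1+2t_2\;\geq\;(q+1)\,\frac{s+r(1+\epsilon)}{2}\;-\;\frac{\sqrt q}{2}\bigl(s+r(1+\epsilon)\bigr),$$
while the bound of Lemma~\ref{lemma-2t-2} reads
$$2t_2\;\leq\;(q+1)\,\frac{s+r(1-\epsilon)}{2}\;+\;\frac{\sqrt q}{2}\bigl(s+r(1+\epsilon)\bigr).$$
Then $2t_1=(2t_1+2t_2)-2t_2$ is bounded below by the difference of the two right-hand sides. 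The $(q+1)$-terms combine to $(q+1)\bigl(\frac{s+r(1+\epsilon)}{2}-\frac{s+r(1-\epsilon)}{2}\bigr)=(q+1)r\epsilon$, since the $s$-contributions cancel and $\frac{r(1+\epsilon)-r(1-\epsilon)}{2}=r\epsilon$; and the two $\sqrt q$-terms, having the same sign after the subtraction, add up to $-\sqrt q\,(s+r(1+\epsilon))$. This gives exactly $2t_1\geq (q+1)r\epsilon-(s+r(1+\epsilon))\sqrt q$.

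There is essentially no obstacle here: the statement is a one-line algebraic consequence of the two preceding lemmas, and the only point requiring care is the sign bookkeeping, namely that the two $\sqrt q$ error terms reinforce rather than cancel when Lemma~\ref{lemma-2t-2} is subtracted from Lemma~\ref{lemma-2t-1+2t-2}. All the real content sits upstream, in Lemmas~\ref{lemma-2t-1+2t-2} and~\ref{lemma-2t-2}, where the spectral gap of the links $L(v)$ (second eigenvalue $\sqrt q$, so $\lambda_1=q+1-\sqrt q$), the thin/thick dichotomy, and the Cheeger-type estimates of Proposition~\ref{prop-cheeger} were invoked.
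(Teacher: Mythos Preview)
Your proof is correct and is essentially identical to the paper's own argument: the paper rewrites Lemmas~\ref{lemma-2t-1+2t-2} and~\ref{lemma-2t-2} in exactly the same common form you display and then simply says ``subtracting the second inequality from the first gives'' the lemma. Your sign bookkeeping is right, and your remark that all the content sits upstream in the spectral estimates for the links is accurate.
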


\begin{lemma}
\label{lem:t1>t2}
Given any fixed (independent of $q$) $\epsilon$, $\frac 13<\epsilon <1$,
the condition 
\[
r(3\epsilon -1)> s\left(1+O(\frac{1}{\sqrt{q}})\right)
\]
is sufficient to imply $t_1>t_2$. 
\end{lemma}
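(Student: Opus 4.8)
The plan is to reduce the inequality $t_1 > t_2$ to a lower bound on $t_1$ alone, and then to feed in the estimate of Lemma~\ref{lem:2t1}. First I would combine point~\ref{item-one-lemma-triangles-counting-in-dim-two} of Lemma~\ref{lemma-triangles-counting-in-dim-two}, namely $t_1 + 2t_2 + 3t_3 = (q+1)|\error|$, with the identity $r+s = 2|\error|$ of Lemma~\ref{lemma-$r+s$}. Dropping the nonnegative term $3t_3$ gives $2t_2 \le (q+1)|\error| - t_1$, so $t_1 > t_2$ will follow as soon as $3t_1 > (q+1)|\error|$, that is, as soon as
\[
6\,t_1 > (q+1)(r+s).
\]

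Next I would substitute three times the bound of Lemma~\ref{lem:2t1}, $2t_1 \ge (q+1)\epsilon r - \sqrt{q}\,\bigl(s + (1+\epsilon)r\bigr)$; it then suffices to guarantee
\[
3(q+1)\epsilon r - 3\sqrt{q}\,\bigl(s + (1+\epsilon)r\bigr) > (q+1)(r+s).
\]
Dividing by $q+1$ and writing $\eta := 3\sqrt{q}/(q+1) = O(1/\sqrt{q})$, this rearranges to the single linear inequality
\[
\bigl((3\epsilon - 1) - (1+\epsilon)\eta\bigr)\,r > (1+\eta)\,s .
\]
Since $\epsilon$ is a constant with $3\epsilon - 1 > 0$, for $q$ large enough (in terms of $\epsilon$) the coefficient of $r$ on the left is positive; dividing through by it and absorbing the factors $1+\eta$ and $\bigl(1 - (1+\epsilon)\eta/(3\epsilon - 1)\bigr)^{-1}$ — both of the form $1 + O(1/\sqrt{q})$ because $(1+\epsilon)/(3\epsilon - 1)$ is a fixed constant — turns the condition into exactly $r(3\epsilon - 1) > s\bigl(1 + O(1/\sqrt{q})\bigr)$, which is the stated hypothesis.

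The only substantive step is the first reduction; everything afterwards is routine algebra. The one place that needs care, and where the hypothesis that $\epsilon$ is a fixed number strictly larger than $1/3$ is genuinely used, is the absorption of the $r$-proportional correction $-(1+\epsilon)\eta\,r$ into the left-hand coefficient: this is legitimate precisely because $3\epsilon - 1$ is a positive constant bounded away from $0$, so that $(1+\epsilon)\eta/(3\epsilon - 1) \to 0$ and the induced multiplicative distortion is $1 + O(1/\sqrt{q})$ rather than something blowing up. This is also why $q$ must be taken large enough depending on $\epsilon$, which is consistent with the eventual $q > q_0$ hypothesis of Theorem~\ref{thm:local}.
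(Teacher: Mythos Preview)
Your argument is correct, and in fact it lands on exactly the same sufficient inequality as the paper, namely $(q+1)\bigl(r(3\epsilon-1)-s\bigr) > 3\sqrt{q}\bigl(s+r(1+\epsilon)\bigr)$. The route is slightly different, however: the paper bounds $t_2$ from above using the link-expansion estimate of Lemma~\ref{lemma-2t-2} and then compares directly with the lower bound on $t_1$ from Lemma~\ref{lem:2t1}; you instead bypass Lemma~\ref{lemma-2t-2} altogether by using the elementary incidence count $t_1+2t_2+3t_3=(q+1)|\error|$ to reduce $t_1>t_2$ to $3t_1>(q+1)|\error|$, and then feed in Lemma~\ref{lem:2t1} alone. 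Your shortcut works because the link-expansion bound on $2t_2$ is already implicitly baked into Lemma~\ref{lem:2t1} (which was obtained by subtracting Lemma~\ref{lemma-2t-2} from Lemma~\ref{lemma-2t-1+2t-2}), so invoking it a second time is redundant; the counting identity recovers exactly the same information. Your version is arguably cleaner for this lemma, though the paper's route makes the role of Lemma~\ref{lemma-2t-2} more visible for the later estimation of $\gamma$ in Theorem~\ref{thm:gamma}.
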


\begin{proof}
From Lemmas~\ref{lemma-2t-2} and \ref{lem:2t1} we have 
that $t_1>t_2$ is achieved whenever
\begin{align*}
(q+1)r\epsilon &> (q+1)\frac{s+r(1-\epsilon)}{2}+\frac 32
(s+r(1+\epsilon))\sqrt{q}\\
(q+1)r(3\epsilon -1)&> s(q+1) + 3(s+r(1+\epsilon))\sqrt{q}\\
r(3\epsilon -1)\left(1-O(\frac{1}{\sqrt{q}})\right)&>s\left(1+O(\frac{1}{\sqrt{q}})\right)\\
r(3\epsilon -1) &> s\left(1+O(\frac{1}{\sqrt{q}})\right)
\end{align*}
hence the result.
\end{proof}

Up to now we have used only the local structure of $\bfX$, namely the links. Now we
will use the global structure, the fact that its $1$-skeleton is almost a
Ramanujan graph and has second eigenvalue $\leq 6q$.

\begin{lemma}\label{lemma-bound-on-number-of-edges-within-thick-vertices} 
Suppose $|\error|\leq \gamma |E|$ for some constant $\gamma$. Then,
the total number of edges in $\error$ between the thick vertices relative to $\error$
 is bounded as:
$$
|E(S)| \leq |\error|\left(\frac{\gamma}{(1-\epsilon)^2-3\gamma} 
\right)(1+O(\frac 1q)).
$$
\end{lemma}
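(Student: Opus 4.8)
The plan is to bound $|S|$ by an edge‑counting argument over the thick vertices, to bound the number of edges of $\error$ spanned by $S$ using the expander mixing lemma for the $1$‑skeleton $(V,E)$, and then to close the estimate by an inequality that is self‑improving because $|E(S)|$ reappears on its right‑hand side. Throughout write $m=|\error|$ and $x=|E(S)|$, the number of edges of $\error$ both of whose endpoints are thick; note the trivial bound $x\le m$, and that $\epsilon$ and $\gamma$ are treated as fixed, so that the $O(1/q)$ terms below are understood for fixed $\epsilon,\gamma$.

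First I would bound $|S|$. By definition of thickness, $|\error_v|\ge(1-\epsilon)Q/2$ for every $v\in S$, so $(1-\epsilon)\tfrac Q2|S|\le s$, where $s:=\sum_{v\in S}|\error_v|$. Now $s$ counts each edge of $E(S)$ twice and each edge of $\error$ having exactly one thick endpoint once; since the edges of $\error$ incident to $S$ number at most $m$, this gives $s\le 2x+(m-x)=m+x$. Hence
\[
|S|\le\frac{2(m+x)}{(1-\epsilon)Q},\qquad\text{and, using }Q|V|=2|E|,\qquad\frac{|S|}{|V|}\le\frac{m+x}{(1-\epsilon)\,|E|}.
\]
Since also $s\le\sum_{v\in V}|\error_v|=2m$, one gets the cruder bound $|S|\le 4m/((1-\epsilon)Q)$, which I will use only to control a lower‑order term.

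Next I would apply the second part of Proposition~\ref{prop-cheeger} to the set $S$ inside $(V,E)$, which is $Q$‑regular with second adjacency eigenvalue at most $6q$, hence with smallest positive Laplacian eigenvalue at least $Q-6q$. Writing $e_G(S)$ for the number of edges of $(V,E)$ spanned by $S$, so that $x\le e_G(S)$, this yields
\[
e_G(S)\le\tfrac12\Bigl(6q+\tfrac{|S|}{|V|}Q\Bigr)|S|=3q\,|S|+\frac{Q|S|^2}{2|V|}.
\]
The term $3q|S|$ is $O(m/q)$, since $|S|\le 4m/((1-\epsilon)Q)$ and $q/Q=O(1/q)$. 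For the main term, substituting the two bounds of the previous step gives $\dfrac{Q|S|^2}{2|V|}=\dfrac{(Q|S|)\,|S|}{2|V|}\le\dfrac{(m+x)^2}{(1-\epsilon)^2|E|}$, whence $x\le\dfrac{(m+x)^2}{(1-\epsilon)^2|E|}+O(m/q)$.

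It remains to solve this inequality. Expanding $(m+x)^2\le m^2+3mx$ (using $x\le m$ to absorb $x^2\le mx$), dividing by $|E|$ and invoking the hypothesis $m\le\gamma|E|$, one obtains $x\bigl(1-\tfrac{3\gamma}{(1-\epsilon)^2}\bigr)\le\tfrac{\gamma m}{(1-\epsilon)^2}+O(m/q)$; provided $\gamma<(1-\epsilon)^2/3$ the coefficient of $x$ is positive, and isolating $x$ gives $|E(S)|\le|\error|\cdot\dfrac{\gamma}{(1-\epsilon)^2-3\gamma}\,(1+O(1/q))$. The one delicate point in the whole argument is the bookkeeping in the second step: the bound on $|S|/|V|$ must be phrased in terms of $m+x$ and not merely $m$, because it is exactly this reappearance of $x=|E(S)|$ on the right — together with the split $(m+x)^2\le m^2+3mx$ — that produces the denominator $(1-\epsilon)^2-3\gamma$ rather than, say, $(1-\epsilon)^2$; everything else is a routine use of the expander mixing lemma and of the estimate $6q/Q=O(1/q)$.
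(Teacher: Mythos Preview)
Your proof is correct and follows essentially the same route as the paper: bound $|S|$ via $s\le m+x$ and the thickness threshold, apply Proposition~\ref{prop-cheeger} to the $Q$-regular expander $(V,E)$ with $\lambda_1\ge Q-6q$, and then solve the resulting self-referential inequality for $x$. The only cosmetic difference is that the paper keeps one factor of $|S|$ intact and later bounds $|S|/|V|$ using the cruder estimate $|S|\le 4m/((1-\epsilon)Q)$, whereas you substitute the bound $|S|\le 2(m+x)/((1-\epsilon)Q)$ twice and then use $(m+x)^2\le m^2+3mx$; both manipulations lead to exactly the same final inequality $x(1-3\gamma/(1-\epsilon)^2)\le \gamma m/(1-\epsilon)^2+O(m/q)$.
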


\begin{proof}
Note that
\[
\sum_{v\in S}|\error_v| \leq |\error|+|E(S)|
\]
since the edges of $E(S)$ are counted twice in this sum.
Furthermore, by definition of $S$, 
$|\error_v|\geq (1-\epsilon)\frac{Q}{2}$ for
$v\in S$, so that $|S|(1-\epsilon)\frac Q2\leq |\error|+|E(S)|$ which we rewrite
as:
\begin{equation}
\label{eq:|S|}
|S|\leq \frac{2}{Q(1-\epsilon)}(|\error|+|E(S)|.
\end{equation}
Since the second largest eigenvalue of the adjacency matrix of $(V,E)$ is
bounded from above by $6q$, we have
$\lambda_1(V,E) \geq Q-6q=2{q^2}-4q+2$. Proposition~\ref{prop-cheeger} implies therefore:
\begin{eqnarray*}
|E(S)| & \leq &
\frac{1}{2}\left(Q-\frac{|\overline{S}|}{|V|}\lambda_1(V,E)\right)|S| \\
     & \leq & \frac{1}{2}\left(Q-\frac{|\overline{S}|}{|V|}(Q-6q)\right)|S| \\
     & =  & \frac{1}{2}\left(Q\left(1-\frac{|\overline{S}|}{|V|}\right)+6q
\frac{|\overline{S}|}{|V|}\right)|S|\\
     & \leq & \frac{1}{2}\left(Q \frac{|S|}{|V|}+6q\right)|S| \\
     & \leq &
\left(\frac{|\error|}{(1-\epsilon)|V|}+\frac{|E(S)|}{|V|(1-\epsilon)}+3q\right)|S|
\end{eqnarray*}
by applying \eqref{eq:|S|}. We rewrite this last inequality as
\[
|E(S)|\left(1-\frac{|S|}{|V|(1-\epsilon)}\right) \leq
\left(\frac{|\error|}{|V|(1-\epsilon)}+3q\right)|S|
\]
We now use the hypothesis $|\error|\leq\gamma|E|=\gamma|V|\frac Q2$
and again invoque \eqref{eq:|S|}
\begin{align}
|E(S)|\left(1-\frac{|S|}{|V|(1-\epsilon)}\right) &\leq
\left(\frac{\gamma}{(1-\epsilon)^2}+\frac{6q}{Q(1-\epsilon)}\right)(|\error|+|E(S)|)\nonumber \\
|E(S)|\left(1-\frac{\gamma}{(1-\epsilon)^2}-\frac{|S|}{|V|(1-\epsilon)}-\frac{3}{q(1-\epsilon)}\right)
&\leq\left(\frac{\gamma}{(1-\epsilon)^2}+\frac{3}{q(1-\epsilon)}\right)|\error|
\label{eq:E(S)}
\end{align}
From \eqref{eq:|S|} we have, since $|E(S)|\leq |\error|$,
\[
|S|\leq\frac{4}{Q(1-\epsilon)}|\error|\leq\frac{4}{Q(1-\epsilon)}\frac{\gamma
|V|Q}{2}=\frac{2\gamma |V|}{1-\epsilon}
\]
and $-\frac{2\gamma}{(1-\epsilon)^2}\leq -\frac{|S|}{|V|(1-\epsilon)}$, which
injected into \eqref{eq:E(S)} gives 
\[
|E(S)|\left(1-\frac{3\gamma}{(1-\epsilon)^2}-\frac{3}{q(1-\epsilon)}\right)
\leq \left(\frac{\gamma}{(1-\epsilon)^2}+\frac{3}{q(1-\epsilon)}\right)|\error|
\]
hence the result after rearranging. 
\end{proof}

We can finally state:
\begin{theorem}
\label{thm:gamma}
Given any $0<\gamma < 1/192$, there exists $q_0$ such that for any $q>q_0$, the
condition $|\error| \leq \gamma|\error|$ implies $t_1>t_2$. Furthermore, we have
$|\delta_1(\error)|\geq \frac
13(q+1)|\error|$.
\end{theorem}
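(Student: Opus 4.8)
This theorem is the capstone of the section: it only has to assemble the preceding lemmas, together with one last optimisation over the free parameter $\epsilon$. The plan is to bound $s$ from above in terms of $|\error|$ using the global expander estimate of Lemma~\ref{lemma-bound-on-number-of-edges-within-thick-vertices}, convert this into a lower bound on $r$ via the identity $r+s=2|\error|$, and then check that the resulting inequality is strong enough to meet the sufficient condition for $t_1>t_2$ supplied by Lemma~\ref{lem:t1>t2}. Once $t_1>t_2$ has been secured, the promised lower bound on $|\delta_1(\error)|$ will fall out of the triangle-counting identity of Lemma~\ref{lemma-triangles-counting-in-dim-two}.

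First I would observe that each edge of $\error$ contributes at most $2$ to $s=\sum_{v\in S}|\error_v|$, and contributes $2$ only when both of its endpoints are thick, so $s\leq |\error|+|E(S)|$ (this is precisely the opening inequality in the proof of Lemma~\ref{lemma-bound-on-number-of-edges-within-thick-vertices}). Plugging in that lemma yields, as long as $(1-\epsilon)^2>3\gamma$,
\[
s\leq |\error|\left(1+\frac{\gamma}{(1-\epsilon)^2-3\gamma}\right)\bigl(1+O(\tfrac1q)\bigr).
\]
Now substitute $r=2|\error|-s$ into the hypothesis $r(3\epsilon-1)>s\bigl(1+O(\tfrac1{\sqrt q})\bigr)$ of Lemma~\ref{lem:t1>t2}; after rearranging, it is enough to have
\[
1+\frac{\gamma}{(1-\epsilon)^2-3\gamma}<\frac{2(3\epsilon-1)}{3\epsilon}.
\]
The point of wanting this inequality \emph{strict} is that the vanishing $O(1/q)$ and $O(1/\sqrt q)$ corrections can then be absorbed by taking $q$ large enough.

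It remains to choose $\epsilon$. The inequality just displayed rearranges to $\gamma<\dfrac{(1-\epsilon)^2(3\epsilon-2)}{6(2\epsilon-1)}$, and over $\epsilon\in(\tfrac23,1)$ the right-hand side is maximised at the critical point of $\dfrac{3}{3\epsilon-2}-\dfrac{2}{1-\epsilon}-\dfrac{2}{2\epsilon-1}=0$, namely $\epsilon=\tfrac34$, where it equals exactly $\tfrac1{192}$; note that $(1-\epsilon)^2=\tfrac1{16}>3\gamma$ is then automatic. Hence, given any $\gamma<1/192$, I would fix $\epsilon=3/4$, take $q_0$ large enough to swallow the error terms, and conclude $t_1>t_2$ for all $q>q_0$. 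For the ``furthermore'' clause, recall that $\delta_1(\error)$ is the union of the triangles of $T_1$ and $T_3$, so $|\delta_1(\error)|=t_1+t_3$; by the first part of Lemma~\ref{lemma-triangles-counting-in-dim-two}, $(q+1)|\error|=t_1+2t_2+3t_3<3(t_1+t_3)$ once $t_2<t_1$, giving $|\delta_1(\error)|\geq\frac13(q+1)|\error|$.

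The only genuine obstacle here is the bookkeeping of the $O(1/q)$ and $O(1/\sqrt q)$ error terms coming from the various expander-mixing estimates: one has to verify that $\gamma<1/192$ is a strict enough bound that all of these vanishing corrections become harmless for $q\geq q_0$, which is exactly the reason the statement carries a strict bound and an unspecified threshold $q_0$. Everything else is the mechanical chaining of the lemmas of this section, plus the one-variable optimisation that singles out $\epsilon=3/4$.
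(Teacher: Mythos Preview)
Your argument for $t_1>t_2$ is essentially the paper's: bound $s\leq|\error|+|E(S)|$, feed in Lemma~\ref{lemma-bound-on-number-of-edges-within-thick-vertices}, substitute into the sufficient condition of Lemma~\ref{lem:t1>t2}, and optimise to find $\epsilon=3/4$ and the threshold $1/192$. The algebra and the handling of the $O(1/\sqrt q)$ slack are the same.

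For the ``furthermore'' clause you take a shorter route than the paper. You observe that $|\delta_1(\error)|=t_1+t_3$ and that the identity $(q+1)|\error|=t_1+2t_2+3t_3$ together with $t_2<t_1$ immediately give $(q+1)|\error|<3(t_1+t_3)$. The paper instead returns to Lemma~\ref{lem:2t1}, plugs in the lower bound on $r$ at $\epsilon=3/4$ to obtain
\[
t_1\geq \tfrac{3}{8}\,\tfrac{1-64\gamma}{1-48\gamma}\,(q+1)|\error|-O(\sqrt q)\,|\error|,
\]
verifies that $\tfrac{1-64\gamma}{1-48\gamma}>\tfrac{8}{9}$ exactly when $\gamma<1/192$, and concludes via $|\delta_1(\error)|\geq t_1$. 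Your deduction is cleaner and uses nothing beyond what was already established; the paper's route has the incidental merit of producing an explicit lower bound on $t_1$ alone, but that is not needed anywhere else.
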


\begin{proof}
We have
\[
s=\sum_{v\in S}|\error_v|\leq |\error|+|E(S)|
\]
since edges of $E(S)$ are counted twice. Applying
Lemma~\ref{lemma-bound-on-number-of-edges-within-thick-vertices} 
we get 
\begin{align}
s &\leq \left(1+\frac{\gamma}{(1-\epsilon)^2-3\gamma}+O(\frac
1q)\right)\nonumber\\
r &\geq \left(1-\frac{\gamma}{(1-\epsilon)^2-3\gamma}-O(\frac
1q)\right)\label{eq:r}
\end{align}
Lemma~\ref{lem:t1>t2} tells us therefore that the condition
\[
(3\epsilon -1)\left(1-\frac{\gamma}{(1-\epsilon)^2-3\gamma}\right)
\geq \left(1+\frac{\gamma}{(1-\epsilon)^2-3\gamma}\right)+O(\frac{1}{\sqrt{q}})
\]
is sufficient to imply $t_1>t_2$.
Rearranging gives the condition
\[
\frac{(3\epsilon-2)(1-\epsilon)^2}{12\epsilon-6}\geq \gamma
+O(\frac{1}{\sqrt{q}}).
\]
the maximum value of the left hand side is $1/192$ which is obtained for
$\epsilon = 3/4$ and gives the required result.

Finally, \eqref{eq:r} together with Lemma~\ref{lem:2t1} give, when
$\epsilon=3/4$,
\[
t_1\geq \frac 38\frac{1-64\gamma}{1-48\gamma}(q+1)-O(\sqrt{q})
\]
and since $(1-48\gamma)/(1-64\gamma)>8/9$ when $\gamma<1/192$ we get $t_1 >
(q+1)/3$ for $q$ large enough. Remembering that $|\delta_1(\error)|\geq t_1$, this
proves the last statement of the Theorem.
\end{proof}

\section{Coboundary decoding of the 2-skeleton of a 3-dimensional Ramanujan
complex}\label{sec:coboundary-decoding-3d}
The decoding algorithm of Section~\ref{sec-dec-Ram} is linear in the code length
$|E|$, but with a large constant which is exponential in $q^2$, where $q$ is the
local degree defining parameter.  This is because the algorithm searches
exhaustively for the required local pattern of edges $\error_v$ inside the edge
neighbourhood of a vertex $v$, which is of size $Q=2(q^2+q+1)$.

We now prove that it is possible to remove this large constant
when we switch from a $2$-dimensional Ramanujan complex to a more complicated
one, namely the $2$-skeleton of a $3$-Ramanujan complex.
A $3$-dimensional simplicial complex $(V,E,T,P)$ comes with an extra layer compared to
the $2$-dimensional one, on top of triangles it has tetrahedra (Pyramids), but
we restrict it to its $2$-skeleton $\bfX=(V,E,T)$ to define a quantum code
$\cQ(\bfX)$ in the same way as before (as opposed to extracting the
complex $(E,T,P)$ that yields a distance record breaking quantum code
through the product complex $\X$, but whose boundaries we don't know how to
decode).

The simplicial complex $\bfX$ has a very different local structure from that of
a $2$-dimensional Ramanujan complex. The link of a vertex $v$ has now the graph
structure of a spherical building, specifically $L(v)$ is isomorphic to the
$3$-partite
graph whose vertices are the points, lines and planes of a $3$-dimensional
projective space over $\f_q$, and where two vertices are connected
if, as geometrical objects, one contains the other.

This modified extra structure allows for a simpler local decoding algorithm. It
proceeds as in Section~\ref{ssec:coboundary-decoding}, with the only difference
that the cochains $\y_k$ are now of weight $1$, i.e. consist of single edges.
Precisely:

\paragraph{Simplified decoding algorithm:}\hfill 

{\em Input:} The coboundary $\ff_0=\delta_1(\error)$ of a cochain $\error$.

{\em Procedure:} for $k\geq 1$, look for an edge $e_k\in E$ such that
$|\delta_1(e_k)+\ff_{k-1}|<|\ff_{k-1}|$. Set $\ff_k=\ff_{k-1}+\delta_1(e_k)$. Repeat
until $\ff_k=0$ and output $\error'=e_1+e_2+\cdots +e_k$.

To show that for any cochain $\error$ of weight less than a constant times
$|E|$, the algorithm always converges a correct solution, i.e.
a cochain $\error'$ equivalent to $\error$, we prove:

\begin{theorem}
\label{thm:local2}
There exists a constant $\gamma$ and an integer $q_0$, such that whenever $q\geq
q_0$, for every cochain $\error\in\f_2^E$ 
with non-zero $1$-coboundary and of weight $\leq \gamma|E|$, there exists $e\in
E$ satisfying $|\delta_1(\error + e)|<|\delta_1(\error)|.$
\end{theorem}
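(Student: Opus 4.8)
\emph{Proposal.} The plan is to run the same scheme as for Theorem~\ref{thm:local}, but localised to one edge, and to exploit that the links of the $2$-skeleton $\bfX$ are now $1$-skeletons of the spherical buildings of $\mathbb{P}^3(\f_q)$ rather than incidence graphs of projective planes.

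First, some reductions and a reformulation. Replacing $\error$ by $\error+\delta_0(v)$ changes neither $\delta_1(\error)$ nor the collection of edges whose flip decreases $|\delta_1(\error)|$, so we may assume $\error$ has minimal weight in its coset modulo $B^1(\bfX)$; in particular $|\error_v|$ is at most half the number of edges at $v$, for every vertex $v$, where $\error_v$ denotes the set of edges of $\error$ at $v$. Identify $\error_v$ with a subset of the vertices of the link $L(v)$, and set
\[
h_v \;=\; \delta_0^{L(v)}(\error_v)\;+\;g_v\;\in\;C^1\!\big(L(v)\big),
\]
where $g_v$ is the restriction of $\error$ to the edges of $L(v)$ (each edge of $L(v)$ being a triangle of $\bfX$ through $v$, hence an edge of $\bfX$ avoiding $v$). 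A short computation identifies the support of $h_v$ with the set of triangles of $\delta_1(\error)$ incident to $v$, and shows that flipping an edge $e=\{v,w\}$ affects $\delta_1(\error)$ only on the triangles through $e$, i.e.\ on the edges of $L(v)$ at the vertex $w$. Hence flipping $e$ decreases $|\delta_1(\error)|$ exactly when flipping the vertex $w$ decreases $|h_v|$, and it suffices to produce one vertex $v$ for which $h_v$ is \emph{not} locally minimal in $L(v)$.

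Suppose for contradiction that every $h_v$ is locally minimal; note $\sum_v|h_v|=3|\delta_1(\error)|>0$. Here the building structure enters. The complex $L(v)$ is the $1$-skeleton of the building of type $A_3$ over $\f_q$, a $2$-dimensional complex with $H^1=0$ and with coboundary expansion bounded below by a constant \emph{independent of $q$} (as is classical for spherical buildings of type $A$) — precisely the feature that incidence graphs of projective planes lack. One computes that $\delta_1^{L(v)}(g_v)$ is the restriction of $\delta_1(\error)$ to those triangles of $\bfX$ that are a face, opposite $v$, of a tetrahedron of the ambient $3$-dimensional complex. Consequently $g_v$, and therefore $h_v$ (which differs from $g_v$ by the coboundary $\delta_0^{L(v)}(\error_v)\in B^1(L(v))$), is a \emph{cut} of $L(v)$ for every $v$ not close to $\delta_1(\error)$ in this sense, and more generally $h_v$ is, modulo $B^1(L(v))$, equivalent to a cochain whose weight is $O\big(\text{(mass of }\delta_1(\error)\text{ near }v\text{)}\big)$. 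Since the expansion of $L(v)$ rules out locally minimal nonzero cuts — and, more generally, locally minimal coset representatives much heavier than the coset minimum — local minimality of $h_v$ forces either $h_v=0$ or $|h_v|=O\big(\text{(mass of }\delta_1(\error)\text{ near }v\text{)}\big)$.

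It remains to turn these bounds into the contradiction $\sum_v|h_v|=0$. Summing them carelessly is useless — each triangle of $\delta_1(\error)$ is ``near'' about $q$ vertices — and the point is to fold in the \emph{global} expansion of $\bfX$, exactly as the near‑Ramanujan property of $(V,E)$ is used in Lemma~\ref{lemma-bound-on-number-of-edges-within-thick-vertices} and Theorem~\ref{thm:gamma}: one shows that, when $|\error|\le\gamma|E|$, the vertices close to $\delta_1(\error)$ (and the ``thick'' vertices that make the cut‑minimality step delicate) carry only a negligible share of the total, so that $\sum_v|h_v|=3|\delta_1(\error)|$ forces $\delta_1(\error)=0$ — contradiction. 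I expect this last quantitative step to be the main obstacle: a single‑edge flip is far weaker than the whole‑neighbourhood flip of Theorem~\ref{thm:local}, the available slack is correspondingly small, and it is the $q$‑uniform coboundary expansion of the $\mathbb{P}^3$‑building links that makes single‑edge decoding possible at all — at the cost of much looser constants than in the two‑dimensional case.
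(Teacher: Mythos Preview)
Your reformulation is correct: asking for an edge $e$ with $|\delta_1(\error+e)|<|\delta_1(\error)|$ is exactly asking whether the $2$-cochain $\ff=\delta_1(\error)$ fails to be locally minimal, and this in turn is equivalent to some $h_v$ failing to be locally minimal in $L(v)$. But the paper's proof is a three-line application of a result it already quotes, and it bypasses all of your link analysis. Namely: $\ff=\delta_1(\error)$ is a $2$-cochain with $\delta_2(\ff)=0$; bounded degree gives $|\ff|\leq\gamma_2|T|$ once $|\error|\leq\gamma|E|$ for small enough $\gamma$; and then part~2 of Theorem~\ref{thm:3ram} (the degree-$2$ cosystolic expansion of the $3$-dimensional complex, from \cite{KKL}) says that a locally minimal $2$-cochain $\ff$ of that size would satisfy $|\delta_2(\ff)|\geq\epsilon_2|\ff|>0$. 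Since $\delta_2(\ff)=0$, $\ff$ cannot be locally minimal, which is the claim.

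What you are sketching is, in effect, a re-derivation of (the relevant case of) Theorem~\ref{thm:3ram} via the local-to-global machinery: your observation that $\delta_1^{L(v)}(h_v)=\delta_1^{L(v)}(g_v)$ and your appeal to the uniform coboundary expansion of the spherical $A_3$-links are exactly the local ingredients that drive the \cite{KKL,EK} proof of that theorem. The place where your proposal is genuinely incomplete is the one you flag yourself: the straightforward summation gives only $3|\delta_1(\error)|=\sum_v|h_v|\leq c^{-1}\sum_v|\delta_1^{L(v)}h_v|=c^{-1}(q+1)|\delta_1(\error)|$, which is no contradiction, and turning this into one requires the full thin/thick vertex bookkeeping of the local-to-global argument --- none of which you carry out. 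So your route is not wrong in spirit, but it is both much longer and, as written, unfinished; the paper sidesteps all of it by invoking the global degree-$2$ expansion directly.
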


We will also need the following Theorem, a reformulation of Theorem 1.8 of
\cite{KKL}.

\begin{theorem}
\label{thm:3ram}
For any sufficiently large fixed $q$, there exist constants
$\gamma_1,\gamma_2,\epsilon_1,\epsilon_2$, such that:
\begin{enumerate}
\item For any locally minimal$\,^*$ $1$-cochain $\error\in\f_2^E$ of
the $3$-dimensional Ramanujan complex $(V,E,T,P)$, the condition
$|\error|\leq \gamma_1 |E|$ implies $|\delta_1(\error)|\geq
\epsilon_1|\error|$.
\item For any locally minimal $2$-cochain $\ff\in\f_2^T$ of
the $3$-dimensional Ramanujan complex $(V,E,T,P)$, the condition
$|\ff|\leq \gamma_2 |T|$ implies $|\delta_2(\ff)|\geq
\epsilon_2|\ff|$.
\end{enumerate}
\end{theorem}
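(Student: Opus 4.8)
The plan is to deduce both bounds directly from the cosystolic expansion of the $3$-dimensional LSV complex proved in \cite{KKL}: Theorem~\ref{thm:3ram} is essentially a reformulation of Theorem~1.8 of \cite{KKL}, so the work will be to recall that result in a convenient form, to check that the LSV family satisfies its hypotheses once $q$ is large, and to reconcile the two notions of local minimality that occur (which is why part~(1) carries an asterisk). Concretely, I would invoke the following form of the KKL theorem: there is a threshold $\lambda_0>0$ and, for each $k$, constants $\epsilon(k),\mu(k)>0$ such that, if $\bfX$ is a simplicial complex of dimension $d\le 3$ whose links of faces of dimension at most $d-2$ are connected $\lambda$-spectral expanders with $\lambda\le\lambda_0$, then for every $k\le d-1$ and every locally minimal (in the sense appropriate to dimension $k$) $k$-cochain $\error$ with $|\error|\le\mu(k)|X_k|$ one has $|\delta_k(\error)|\ge\epsilon(k)|\error|$. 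The higher-dimensional analogue, due to \cite{EK}, will not be needed. With $d=3$ in hand, the cases $k=2$ and $k=1$ will yield statements~(2) and~(1), with $\gamma_2=\mu(2),\ \epsilon_2=\epsilon(2)$ and $\gamma_1=\mu(1),\ \epsilon_1=\epsilon(1)$, after recalling $|X_1|=|E|$ and $|X_2|=|T|$.

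To check the spectral hypothesis, I would use Theorem~\ref{thm-Ram-LSV}: $\bfX=(V,E,T,P)$ is a quotient $\Gamma\backslash\mB_3(F)$ of the Bruhat--Tits building, so the link of every face is a quotient of the corresponding building residue. Thus the link of a vertex is the flag complex of the type-$A_3$ spherical building over $\bF_q$ (the point--line--plane incidence complex of $\mathbb{P}^3(\bF_q)$), and the link of an edge is a type-$A_2$ building, namely the point--line incidence graph of $\mathbb{P}^2(\bF_q)$. The latter is $(q+1)$-regular with nontrivial eigenvalues $\pm\sqrt q$ --- exactly the computation already used in Section~\ref{sec-dec-Ram} --- hence a $\lambda$-spectral expander with $\lambda=O(q^{-1/2})$. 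The rank-$3$ residue $L(v)$ then also has spectral gap with $\lambda=O(q^{-1/2})$, either from the explicit spectrum of the $A_3$ incidence graph or, more structurally, from Garland's method / Oppenheim's local-to-global principle applied to the vertex links of $L(v)$, which are the rank-$2$ residues just discussed. Hence there is $q_0$ such that for $q\ge q_0$ the condition $\lambda\le\lambda_0$ holds, uniformly over all congruence quotients; this is the role of the phrase ``for any sufficiently large fixed $q$''.

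The delicate point, on which I would spend the most care, is the meaning of local minimality, which differs between the two parts. For a $2$-cochain $\ff\in\f_2^T$ the statement of part~(2) uses the naive notion $|\ff|\le|\ff+\delta_1(e)|$ for every edge $e$; this says, edge by edge, that $\ff_e$ is a minimum-weight representative modulo coboundaries of the link graph $L(e)$, which is precisely the set-up of Section~\ref{sec-dec-Ram}, so the KKL conclusion applies verbatim. For a $1$-cochain $\error\in\f_2^E$ in a complex of dimension $\ge 3$, however, the naive single-vertex notion $|\error|\le|\error+\delta_0(v)|$ only forces $|\error_v|$ to be at most half the degree of $v$ at each vertex, which is too weak to drive the argument; the notion that works --- and that the asterisk is meant to denote --- is that for every vertex $v$ the restriction $\error_v$, regarded as a cochain of the $2$-dimensional link complex $L(v)$, has minimum weight in its coset modulo $B^{\bullet}(L(v))$. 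I would therefore phrase part~(1) with exactly this hypothesis, so that it is literally the $k=1$ instance of the KKL theorem rather than a formally weaker statement.

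The genuinely non-trivial ingredients here --- the $O(q^{-1/2})$ spectral gap of the $A_3$ building links, and the local-to-global propagation that converts local spectral expansion into cosystolic expansion with a linear-weight threshold --- are due to \cite{KKL} (with \cite{LSV2} supplying the identification of the links with spherical buildings), so the main obstacle on our side is essentially organisational: verifying the spectral hypothesis uniformly along the congruence family, and pinning down the asterisked local-minimality convention so that both parts of the theorem reduce cleanly to the corresponding cases of the cited result.
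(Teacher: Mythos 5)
Your overall route coincides with the paper's: Theorem~\ref{thm:3ram} is not reproved in the paper at all, but is stated as a reformulation of Theorem~1.8 of \cite{KKL}, and your verification of the spectral hypotheses (the vertex and edge links of the LSV quotients being the $A_3$ and $A_2$ spherical buildings over $\f_q$, whose nontrivial eigenvalues are $O(\sqrt q)$ against degree $q+1$, uniformly over all congruence quotients once $q$ is large) is just an unpacking of why the \cite{KKL} machinery applies; that part is consistent with what the paper implicitly relies on.

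Where you diverge is in the reading of the asterisk in part (1), and this is the one point of substance. According to the paper, the asterisked notion is \emph{not} the condition that each restriction $\error_v$ be of minimum weight in its coset modulo the coboundaries of the link complex $L(v)$. It is the same single-vertex condition as in Section~\ref{ssec:coboundary-decoding} (one cannot improve $\error$ by adding $\delta_0(v)$ for a single vertex $v$), but measured in a modified weight in which the Hamming weight of a $1$-cochain is replaced by a weight accounting for the number of triangles incident to each edge; the modification is needed precisely because, unlike in the $2$-dimensional case, the edge-to-triangle degree of the $3$-dimensional complex is not constant, whereas the triangle-to-pyramid degree is constantly $q+1$, which is why part (2) can use plain Hamming weight and single-edge moves with no detour through $L(e)$-cosets. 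Your proposal to ``phrase part~(1) with exactly this [link-coset] hypothesis'' therefore changes the statement, and the change is not harmless downstream: the deduction of Theorem~\ref{thm:localdecoding2} from Theorem~\ref{thm:3ram} explicitly uses that the modified weight is bounded from above by a constant times the ordinary Hamming weight, so that the convergence argument of Section~\ref{ssec:coboundary-decoding} carries over verbatim; with your stronger local-minimality hypothesis one would instead have to re-verify that the cochains arising in that argument (minimal representatives of the iterates) satisfy the link-coset condition, which you do not address. In short: same strategy as the paper (cite \cite{KKL} and check its hypotheses), but the convention behind the asterisk is misidentified, and pinning it down correctly --- weighted norm, single-vertex moves --- is exactly the point of the paper's remark following the theorem.
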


The first statement of Theorem~\ref{thm:3ram} is similar to 
Proposition~\ref{prop:1/3} in Section~\ref{ssec:coboundary-decoding}. 
There is a subtle difference in that it uses a slightly different notion
of local minimality, which involves replacing the Hamming weight of a $1$-cochain
by a slightly different weight. This is to take into account the fact that the complex
now is irregular in the sense that edges are not all incident to the same
number of triangles. The triangle-to-pyramid degree is always the same however
(and equal to $q+1$), and local minimality of a $2$-cochain means that it is not
possible to decrease its Hamming weight by adding the coboundary of a single
edge. 

Together with Theorem~\ref{thm:local2}, Theorem~\ref{thm:3ram}
shows, by the same argument as in
Section~\ref{ssec:coboundary-decoding} (since the modified weight is bounded from
above by a constant times the ordinary Hamming weight), that:

\begin{theorem}
\label{thm:localdecoding2}
For any sufficiently large fixed $q$, there exists a constant $c$ such that, assuming the Ramanujan complex
is sufficiently large, 
any error vector $\error\in\f_2^E$ of weight $|\error|\leq c|E|$ is always
correctly decoded by the simplified decoding algorithm.
\end{theorem}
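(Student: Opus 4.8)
The plan is to mimic the structure of the proof of Theorem~\ref{thm:gamma}, but now exploiting the richer local structure: each link $L(v)$ is the flag graph of a $3$-dimensional projective space over $\f_q$, which is a much better expander than a projective-plane incidence graph. As in Section~\ref{sec-dec-Ram}, I would first reduce the statement to a counting inequality about triangles. Without loss of generality assume $\error$ is locally minimal (with respect to the modified weight of Theorem~\ref{thm:3ram}). Partition the triangles meeting $\error$ into $T_1,T_2,T_3$ according to whether they contain one, two or three edges of $\error$, and set $t_i=|T_i|$. As before, $\delta_1(\error)$ is the union of $T_1$ and $T_3$, and flipping a single edge $e$ adjacent to a configuration removes from the coboundary exactly the triangles of $T_1$ having $e$ as their unique $\error$-edge, while adding the triangles of $T_2$ whose unique ``other'' edge is $e$. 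So the key point is: the algorithm makes progress at $e$ iff (number of $T_1$-triangles for which $e$ is the $\error$-edge) $>$ (number of $T_2$-triangles for which $e$ is the missing edge). Summing the first quantity over all edges $e\in\error$ gives $t_1$ (each $T_1$-triangle has a unique edge in $\error$), and summing the second over all $e$ gives $t_2$ (each $T_2$-triangle has a unique edge outside $\error$). Hence, just as in the $2$-dimensional case, it suffices to prove $t_1>t_2$ whenever $|\error|\leq\gamma|E|$.

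Next I would translate $t_1$ and $t_2$ into link-graph quantities. Here a triangle of $T$ incident to a vertex $v$ corresponds to an edge of $L(v)$; a $T_1$-triangle at $v$ containing an $\error$-edge at $v$ contributes to the cut $E_{L(v)}(\error_v,\overline{\error_v})$, and a $T_2$-triangle at $v$ with both its $\error$-edges at $v$ contributes to $E_{L(v)}(\error_v,\error_v)$. Summing over $v$ with the appropriate multiplicities gives relations of the shape $2t_1+2t_2=\sum_v|E_{L(v)}(\error_v,\overline{\error_v})|$ and $2t_2\leq 2\sum_v|E_{L(v)}(\error_v,\error_v)|$, exactly as in Lemmas~\ref{lemma-triangles-counting-in-dim-two}, \ref{lemma-2t-1+2t-2} and \ref{lemma-2t-2}, except that now one must be careful with the edge degrees: since the complex is $3$-dimensional its $1$-skeleton is no longer regular, so I would carry a fixed edge-to-triangle degree bound (or use the ``modified weight'' of Theorem~\ref{thm:3ram} precisely so that these sums stay linear in $|\error|$). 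The spectral input is that $L(v)$, being the flag graph of $PG(3,q)$, has second eigenvalue $O(\sqrt q)$ relative to its degree $\Theta(q)$; applying Proposition~\ref{prop-cheeger} (Cheeger/Alon--Milman) to each link then yields, for thin vertices (those with $|\error_v|$ well below the max possible), a cut $E_{L(v)}(\error_v,\overline{\error_v})$ that is strictly more than half the incident triangle-degree times $|\error_v|$, with a gain factor $1+\Omega(1)$.

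The global step is then identical in spirit to Lemma~\ref{lemma-bound-on-number-of-edges-within-thick-vertices}: using that the $1$-skeleton $(V,E)$ of a $3$-dimensional Ramanujan complex is itself a good expander (second eigenvalue $O(q)$ against degree $\Theta(q^2)$, again by \cite{LSV1}), one bounds the number $|E(S)|$ of $\error$-edges joining two thick vertices by $O(\gamma)|\error|$, so that for $\gamma$ small enough the ``mass'' $r=\sum_{v\in R}|\error_v|$ carried by thin vertices dominates $s=\sum_{v\in S}|\error_v|$. Feeding this into the local inequalities — where thin vertices already supply a surplus of $T_1$ over $T_2$ — gives $t_1>t_2$, provided $q$ is large enough to absorb the $O(1/\sqrt q)$ error terms and $\gamma$ is below some absolute threshold. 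I would not try to optimise the constant $\gamma$; a crude bound suffices.

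I expect the main obstacle to be bookkeeping the irregularity of the $3$-dimensional complex. In the $2$-dimensional case every edge lies on exactly $q+1$ triangles and $(V,E)$ is exactly $Q$-regular, which is what makes the sum identities in Lemma~\ref{lemma-triangles-counting-in-dim-two} clean. For the $2$-skeleton of a $3$-dimensional complex neither of these holds, so the identities become inequalities with degree-dependent weights, and one has to check that the spectral gain from the (much better) links and $1$-skeleton still beats these slacks. The ``modified weight'' in Theorem~\ref{thm:3ram} is exactly the device that keeps everything proportional to the true Hamming weight up to a fixed constant; the work is to verify that the thin/thick dichotomy and the Cheeger estimates remain valid verbatim once that weight is substituted for $|\cdot|$ throughout. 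Everything else is a transcription of Section~\ref{sec-dec-Ram} with $\lambda_1(L(v))=\Theta(q-\sqrt q)$ replaced by the larger gap available for $PG(3,q)$ flag graphs.
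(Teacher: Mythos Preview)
Your reduction of single-edge progress to the inequality $t_1>t_2$ is incorrect. When you remove an edge $e\in\error$, every $T_2$-triangle \emph{containing} $e$ (not ``for which $e$ is the missing edge'') enters the coboundary, and every $T_3$-triangle containing $e$ leaves it. Summing over $e\in\error$, the total loss is $2t_2$ (each $T_2$-triangle has two edges in $\error$ and is counted twice), not $t_2$, and the total gain is $t_1+3t_3$. So the averaged sufficient condition for single-edge progress is $t_1+3t_3>2t_2$, which is strictly stronger than $t_1>t_2$. If you rerun the Section~\ref{sec-dec-Ram} argument aiming for $t_1>2t_2$, even granting perfect link expansion (formally setting the $\sqrt q$ terms to zero in Lemmas~\ref{lemma-2t-1+2t-2}--\ref{lem:2t1}), you end up needing $r(2\epsilon-1)>s$; but the method only yields $s\leq(1+O(\gamma))|\error|$ and $r\geq(1-O(\gamma))|\error|$, which would force $\epsilon>1$. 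In short, link expansion alone, however strong, only buys vertex-neighbourhood progress (Theorem~\ref{thm:local}), not single-edge progress; improving the link from a projective plane to the flag variety of $PG(3,q)$ does not change this.

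The paper's proof of Theorem~\ref{thm:local2} uses a completely different and much shorter idea that genuinely exploits the third dimension---the tetrahedra---which your proposal never invokes. Set $\ff=\delta_1(\error)$. Since $\ff$ is a $1$-coboundary, $\delta_2(\ff)=0$. If $|\error|\leq\gamma|E|$ with $\gamma$ small enough, bounded degree gives $|\ff|\leq\gamma_2|T|$, and then statement~(2) of Theorem~\ref{thm:3ram} says that a locally minimal $2$-cochain of this weight would satisfy $|\delta_2(\ff)|\geq\epsilon_2|\ff|>0$. Hence $\ff$ is not locally minimal as a $2$-cochain, which by definition means there exists an edge $e$ with $|\ff+\delta_1(e)|<|\ff|$. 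That is the whole proof. Theorem~\ref{thm:localdecoding2} then follows from Theorem~\ref{thm:local2} together with statement~(1) of Theorem~\ref{thm:3ram} by the same bootstrapping as in the proof of Theorem~\ref{thm:localdecoding}.
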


The simplified decoding algorithm is of course preferable to that of
Section~\ref{ssec:coboundary-decoding} in terms of complexity. It comes at a
price however, since it involves using  a complex
$\bfX$ with a more involved local structure and larger degrees.
The constant $c$ in Theorem~\ref{thm:localdecoding2} is also
much looser than in Theorem~\ref{thm:localdecoding} and we do not attempt to
estimate it. In particular it is not an absolute constant as in
Theorem~\ref{thm:localdecoding}, but is a decreasing function of $q$.

It remains to prove Theorem~\ref{thm:local2}. Its key is the 
second statement of Theorem~\ref{thm:3ram}. 

\begin{proof}[Proof of Theorem~\ref{thm:local2}]
Let $\error$ be a $1$-cochain with a non-zero $1$-boundary
$\ff=\delta_1(\error)$. Since the complex $\bfX$ is of bounded degrees, by
taking $|\error|$ to be sufficiently small, we can make its coboundary $\ff$ of
smaller weight than $\gamma_2|T|$ in Theorem~\ref{thm:3ram}.
But since $\ff$ is a coboundary we have $\delta_2(\ff)=0$. So
$\ff$ cannot be a locally minimal $2$-cochain, otherwise it would have non-zero
$2$-coboundary by the second statement of Theorem~\ref{thm:3ram}.
That $\ff$ is not locally minimal means that we can decrease the weight of $\ff$
by adding to it the $1$-coboundary of a single edge $e\in E$. This is exactly
the statement of Theorem~\ref{thm:local2}.
\end{proof}

\section*{Appendix}\label{sec-appendix}

In this appendix we provide an explicit construction of the Cartwright-Steger group $\Gamma_0$ from Theorem \ref{thm-Ram-LSV}.

Let $\bF_q$ be the finite field of size $q$, and $\bF_{q^d}$ the field extension of $\bF_q$ of degree $d$.
Let $\phi$ be a generator of the Galois group $Gal(\bF_{q^d}/\bF_q) \cong \bZ/ d\bZ$. 
Fix a basis $\xi_0,\ldots,\xi_{d-1}$ of $\bF_{q^d}$ over $\bF_q$ with $\xi_i = \phi^i(\xi_0)$. 
Denote $R_T = \bF_q[y,\frac{1}{1+y}]$. 
For a given $R_T$-algebra $S$ (i.e. $S$ is given with a ring homomorphism $R_T \rightarrow S$), we define the following $S$-algebra
\begin{equation}
\mA(S) = \bigoplus_{i,j=0}^{d-1}S\xi_i z^j \; : \; z \xi_i = \phi(\xi_i)z \;, \;
z^d = 1+y.\nonumber
\end{equation}
One can see that the center of $\mA(S)$ is $S$, and hence the following is a group scheme for $R_T$-algebras
\begin{equation}
\mG(S) = \mA(S)^*/S^*.\nonumber
\end{equation}
Let $F = \bF_q((y))$ be the local field of Laurent power series.
The algebra $\mA$ splits at $F$, and we get (see \cite[Proposition~3.1]{LSV2})
\begin{equation}
\mG(F)  \cong PGL_d(F).\nonumber
\end{equation}
Let $R = \bF_q[y,\frac{1}{y},\frac{1}{1+y}]$ and define the following set of elements 
\begin{equation}
\Sigma_1 = \{ b_u = 1 - \frac{u}{\phi(u)}\cdot z^{-1} \; : \; u\in \mathbb{F}_{q^d}/\mathbb{F}_q \}  \subset \mathcal{A}(R_T)\nonumber
\end{equation}
By \cite[Proposition~4.1]{LSV2} and the discussions following it, $\Sigma_1$ is in fact a subset of invertible elements in $\mathcal{A}(R)$, and hence one can define the following subgroup
\begin{equation}
\Gamma_0 = \langle \Sigma_1 \rangle \leq \mathcal{G}(R) \leq \mathcal{G}(F)  \cong PGL_d(F).\nonumber
\end{equation}
This group is called the Cartwright-Steger group.
It has the amazing property that it acts simply transitively on the vertices of the Bruhat-Tits building $\mB$ of the group $PGL_d(F)$  (see \cite[Proposition~4.8]{LSV2}).

More explicitly, let $x_0$ be a fixed vertex in the building $\mB$, and let $\tau : \mB \rightarrow \bZ/d\bZ$ be the type function on the building. 
Then the following map is a bijection from the Cartwright-Steger group to the vertices of the building
\begin{equation}
\Gamma_0 \rightarrow \mB(0) \quad : \quad \gamma \mapsto \gamma.x_0 .\nonumber
\end{equation} 
We note that restricting this map to $\Sigma_1$ gives a bijection from $\Sigma_1$ to the set of neighbours $x$ of $x_0$ of type $\tau(x)=1$.
Denote by $\Sigma$ the preimage under this map of the set of all neighbours of $x_0$ in $\mathcal{B}$.
Then we get the following identification of the building with Cayley complex
\begin{equation}
\mB \cong Cay(\Gamma_0,\Sigma).\nonumber
\end{equation}
As claimed in Theorem \ref{thm-Ram-LSV}.

Ramanujan complexes are obtained in \cite{LSV2} by dividing the building modulo the action of congruence subgroups of $\Gamma_0$. 

For any finite index ideal $0 \ne I \lhd R$, define the level $I$ principal congruence subgroup of $\Gamma_0$ to be
\begin{equation}
\Gamma_I = \Gamma_0 \cap \ker(\mG(R) \rightarrow \mG(R/I) ) \lhd \Gamma_0.\nonumber
\end{equation}
A subgroup $\Gamma \leq \Gamma_0$ is called a congruence subgroup of $\Gamma_0$ if it contains s principal congruence subgroup 
\begin{equation}
\exists I\lhd R \qquad : \qquad \Gamma_I \leq \Gamma \leq \Gamma_0.\nonumber
\end{equation}


\end{document}